\newtheorem{theorem}{Theorem}
\newtheorem{lemma}{Lemma}
\newtheorem{corollary}{Corollary}
\newtheorem{proposition}{Proposition}
\newtheorem{definition}{Definition}
\begin{document}

\title{Achievable Sum Rate of MIMO MMSE Receivers: A General Analytic Framework}
\author{\bigskip Matthew R.\ McKay${}^{\dagger}$, Iain B.\ Collings${}^{*}$, and Antonia M.\ Tulino${}^{\ddagger}$
\\
{\small
${}^\dagger$Department of Electronic and Computer Engineering, \\
        Hong Kong University of Science and Technology, Clear Water Bay, Kowloon, Hong Kong \\
${}^*$Wireless Technologies Laboratory, ICT Centre, CSIRO, Sydney, Australia \\
${}^\ddagger$Department of Electrical Engineering, Universit\'{a} di Napoli ``Federico II'', Napoli, Italy \\
}
}
\maketitle

\setcounter{page}{1}

\vspace*{-1cm}

\begin{abstract}
This paper investigates the achievable sum rate of multiple-input multiple-output (MIMO) wireless systems employing linear minimum mean-squared error (MMSE) receivers. We present a new analytic framework which unveils an interesting connection between the achievable sum rate with MMSE receivers and the ergodic mutual information achieved with optimal receivers.  This simple but powerful result enables the vast prior literature on ergodic MIMO mutual information to be directly applied to the analysis of MMSE receivers. The framework is particularized to various Rayleigh and Rician channel scenarios to yield new exact closed-form expressions for the achievable sum rate, as well as simplified expressions in the asymptotic regimes of high and low signal to noise ratios.  These expressions lead to the discovery of key insights into the performance of MIMO MMSE receivers under practical channel conditions.
\end{abstract}
\begin{keywords} MIMO Systems, Linear MMSE Receivers, Sum Rate, Fading Channels
\end{keywords}

\vskip 2ex \vspace*{0ex}

\vspace*{1cm}


\newpage

\section{Introduction}

Multiple-input multiple-output (MIMO) antenna wireless communication systems have received enormous attention in recent years due to their ability for providing linear capacity growth without the need for increased power and bandwidth \cite{teletar99,foschini98}. Since the important discoveries in \cite{teletar99,foschini98}, a major focus has been directed at investigating the MIMO channel capacity under a wide range of propagation scenarios. For example, the impact of physical phenomena such as spatial correlation, line-of-sight, antenna mutual coupling, frequency-selectivity, and co-channel interference, have now been well-studied, especially for single-user MIMO systems \cite{shiu00,chuah02,shin03,oyman03,lozano03,smith03,chiani03,alfano04_2,kiessling04,cui05,tulino05_IT,jayaweera05,mckay_jnl05,mck06_let,moustakas06,mckay_07_TrIT}. Many key results have also been derived in the multi-user context, and this is still an important topic of on-going research (see, eg.\ \cite{Blum_02_02,Blum_03_01,Dai_04_01,Chiani_06_Int,Jindal_07_TrIT}, and references therein).

Despite the abundance of literature on MIMO channel capacity, the vast majority of existing work in this area has focused primarily on systems employing optimal nonlinear receiver structures. It is very well-known, however, that such receivers can have prohibitive complexity requirements for practical systems and that low complexity receivers, such as linear minimum mean-squared error (MMSE) receivers, are more appealing for many applications.

However, despite their practical importance, there are currently few closed-form analytical results on the achievable rates of MIMO MMSE receivers.  In fact, most prior related work has focused on studying the \emph{asymptotic} achievable rates of linear MMSE receivers, and has been derived in the context of code-division multiple access (CDMA) systems employing random spreading.  It is well-known that such systems, under certain conditions, are isomorphic to single-user MIMO MMSE systems. In this context, the primary approach has been to study the asymptotic spectral efficiency and signal to interference plus noise ratio (SINR) as the system dimensions grow large with fixed ratio, using advanced tools from large-dimensional random matrix theory \cite{Tse99_IT,shamai01,Zhang01_IT,Verdu02,Guo02_IT,peacock_06_TrIT}.  Those results, which are extremely accurate for the high system dimensions encountered in CDMA applications (determined by the spreading factor and the number of users), may become less accurate for the much smaller system dimensions indicative of current MIMO systems (determined by the numbers of transmit and receive antennas). Moreover, in many cases, the existing CDMA results restrict the equivalent channel gains to be independent random variables with zero mean; thereby precluding many MIMO statistical channel models of practical interest (eg.\ correlated Rayleigh and Rician fading).  An exception is the very recent work \cite{Moustakas_09_VTC}, which investigated the asymptotic (large antenna) mutual information distribution of MIMO MMSE receivers in the presence of spatial correlation.

For finite-dimensional systems with arbitrary numbers of transmit and receive antennas, there are remarkably few analytic results dealing with the achievable rates of MIMO MMSE receivers.  The only directly related results, of which we are aware, were derived very recently in \cite{Kumar_08_TrIT,Louie08,Jindal_07_TrIT}.  Specifically, \cite{Kumar_08_TrIT} investigated the achievable rates of MIMO MMSE receivers by characterizing the asymptotic diversity-multiplexing trade-off. In \cite{Louie08}, expressions for the exact achievable sum rates were derived for MIMO MMSE receivers for uncorrelated Rayleigh fading channels, based on utilizing the distribution of the corresponding SINR at the MMSE receiver output.  A similar method was employed in \cite{Jindal_07_TrIT}, which presented analytic expressions for the high signal to noise ratio (SNR) regime\footnote{Note that \cite{Jindal_07_TrIT} considered the different context of multi-user MIMO broadcast channels with linear zero-forcing precoding; however, for the high SNR regime, there is a strong analogy to the single-user MIMO MMSE model considered in this paper.}, again restricting attention to uncorrelated Rayleigh channels.  The derivation approaches in \cite{Louie08} and \cite{Jindal_07_TrIT} appear intractable for more general channel models. In \cite{Guess_05_TrIT}, properties of mutual information were used to conclude that the MMSE receiver with perfect decision feedback (ie.\ nonlinear) is optimal in certain scenarios.  However they did not consider the realistic case of non-perfect feedback, nor the basic case of no-feedback (ie.\ the linear MMSE receiver); and they did not consider fading.

In this paper, we introduce a new general analytic framework for investigating the achievable sum rates of MIMO systems with linear MMSE receivers.  Our main results are based on some very simple but extremely useful algebraic manipulations which essentially relate the MIMO MMSE achievable sum rate to the ergodic MIMO mutual information with optimal receivers.  This relationship permits us to circumvent the extreme difficulties entailed with explicitly characterizing the SINR distribution at the output of the linear MMSE filter, and instead to directly draw upon the vast body of existing results on ergodic mutual information from the MIMO literature. In particular, using this general framework, we can obtain analytic expressions for the achievable sum rate of MIMO MMSE receivers in a broad class of channel scenarios, without the need to invoke a large number of antennas.

We demonstrate our approach by first considering the canonical uncorrelated Rayleigh fading channel, which, as mentioned above, has already been tackled using different methods in \cite{Louie08} and \cite{Jindal_07_TrIT}.  We show that by employing our new framework, it is easy to obtain equivalent expressions to those presented in \cite{Louie08} and \cite{Jindal_07_TrIT}, in addition to establishing new results.  We then consider spatially-correlated Rayleigh and uncorrelated Rician (line-of-sight) fading channels, for which there are no comparable prior results.  For these channels, by employing our general analytic framework, we derive new exact expressions for the achievable sum rates of MIMO MMSE receivers, as well as simplified characterizations in the high and low SNR regimes.

In many cases, our new analytical expressions show a clear decoupling of the effects of transmit correlation, receive correlation, and line-of-sight, which leads to key insights into the performance of MIMO MMSE receivers under practical channel conditions.  For example, they reveal the interesting result that at both high and low SNR, the achievable sum rate of MIMO MMSE receivers is reduced by either spatial correlation or line-of-sight. At high SNR, this rate loss is due to an increased power offset, whereas at low SNR, through a reduced wideband slope. We also present an analytical comparison between the achievable rates of MIMO MMSE receivers, and those of optimal receivers.  Interestingly, we show that at both high and low SNRs, although both MMSE and optimal receivers incur a rate loss due to either spatial correlation or line-of-sight, the loss is more significant for MMSE.

The paper is organized as follows.  In Section \ref{sec:Signal} we introduce the basic MIMO MMSE signal model of interest, and its corresponding achievable rate.  We then present our general analytic framework in Section \ref{sec:GenResults}, before particularizing these general results to various Rayleigh and Rician fading models in Section \ref{sec:Particularizations}.  Finally, Section \ref{sec:conclusions} gives some concluding remarks.

\section{Signal Model and Achievable Sum Rate} \label{sec:Signal}

Consider a single-user MIMO system with $N_t$ transmit and $N_r$
receive antennas, with discrete-time input-output relation
\begin{align}
\mathbf{r} = \mathbf{H} \mathbf{a} + \mathbf{n}
\end{align}
where $\mathbf{r}$ is the $N_r \times 1$ received signal vector, $\mathbf{n}$ is the $N_r \times 1$ vector of additive white Gaussian noise with covariance $ E \left[ \mathbf{n} \mathbf{n}^\dagger \right] = N_0 \mathbf{I}_{N_r} $, and $\mathbf{a}$ is the $N_t \times 1$ vector of transmit symbols, satisfying the total power constraint $E \left[ \mathbf{a}^\dagger \mathbf{a} \right] = P$.  The $N_r \times N_t$ matrix $\mathbf{H}$ represents the flat-fading\footnote{If the fading is frequency-selective, then our results can also be easily applied upon decomposing the channel into a set of parallel non-interacting frequency-flat subchannels.} MIMO channel, assumed to be known perfectly at the receiver but unknown to the transmitter, and normalized to satisfy
\begin{align} \label{eq:ChannelNorm}
E_{\mathbf{H}} \left[ {\rm tr} \left( \mathbf{H} \mathbf{H}^\dagger \right) \right] = N_r N_t \; .
\end{align}
Throughout the paper, we consider the class of MIMO spatial multiplexing systems with independent equal-power Gaussian signalling, in which case the input signals have covariance $E \left[ \mathbf{a} \mathbf{a}^\dagger \right] = \frac{P}{N_t} \mathbf{I}_{N_t}$.

For optimal receivers, the ergodic mutual information\footnote{Note that for the case of uncorrelated Rayleigh fading channels, this corresponds to the ergodic capacity.} is given by
\begin{align}
I^{\rm opt}({\rm snr}, N_r, N_t) = E_{\mathbf{H}}  \left[ I^{\rm opt}({\rm snr}, N_r, N_t, \mathbf{H}) \right]
\label{eq:opt_ExactCap}
\end{align}
where ${\rm snr} = P / N_0$, and
\begin{align}
I^{\rm opt}({\rm snr}, N_r, N_t, \mathbf{H}) = \log_2 {\rm det} \left( \mathbf{I}_{N_r} + \frac{{\rm snr}}{N_t} \mathbf{H} \mathbf{H}^\dagger   \right) \; . \label{eq:CoptDefn_Exp}
\end{align}

In this paper, we focus on characterizing the achievable sum rate of linear MMSE receivers.  Such receivers operate by applying a linear filter to the received signals to form the estimate
\begin{align} \label{eq:MMSEEst}
\mathbf{\hat{a}} = \mathbf{W}_{\rm mmse} \mathbf{r}  \, = \,  \mathbf{W}_{\rm mmse} \left( \mathbf{H} \mathbf{a} + \mathbf{n} \right) \, ,
\end{align}
with $\mathbf{W}_{\rm mmse}$ chosen to minimize the mean-square error cost function
\begin{align}
\mathbf{W}_{\rm mmse} = \arg  \min_{\mathbf{G}} E \left[ \| \mathbf{a} -  \mathbf{G} \mathbf{r} \|^2 \right].
\end{align}
The solution to this optimization problem is well-known (see, eg.\ \cite{McDon95}), and is given by
\begin{align}
\mathbf{W}_{\rm mmse} &= \sqrt{ \frac{N_t}{P} } \mathbf{H}^\dagger  \left[ \mathbf{H} \mathbf{H}^\dagger + \frac{N_t}{{\rm
snr}} \mathbf{I}_{N_r} \right]^{-1} \\
&= \sqrt{ \frac{N_t}{P} } \left[ \mathbf{H}^\dagger \mathbf{H} + \frac{N_t}{{\rm
snr}} \mathbf{I}_{N_t} \right]^{-1} \mathbf{H}^\dagger
\end{align}
where the second line is due to the matrix inversion lemma.
It can be easily shown (see eg.\ \cite{Verdu98}) that the instantaneous received SINR for the $i$th filter output (ie.\ corresponding to the $i$th element of $\mathbf{\hat{a}}$) is given by
\begin{align} \label{eq:SINRk}
\gamma_i =  \frac{1}{ \big[ \big( \mathbf{I}_{N_t} + \frac{{\rm
snr}}{N_t} \mathbf{H}^\dagger \mathbf{H} \big)^{-1} \big]_{i,i} } -
1
\end{align}
with $[\cdot]_{i,i}$ denoting the $i$th diagonal element.  Assuming that each filter output is decoded independently, the achievable sum rate is expressed as
\begin{align} \label{eq:MMSEMI}
I^{\rm mmse}({\rm snr}, N_r, N_t) = \sum_{i=1}^{N_t} E_{\gamma_i} \left[ \log_2 \left( 1
+ \gamma_i \right) \right] \; .
\end{align}

In general, the exact distribution of $\gamma_i$ does not appear to be available in closed-form, other than for the specific cases of independent and identically distributed (i.i.d.) or semi-correlated\footnote{The term \emph{semi-correlated} refers to channels with correlation at either the transmitter or receiver, but not both.} Rayleigh fading \cite{gao98}.
This precludes direct evaluation of (\ref{eq:MMSEMI}) for many channels of practical interest.

\section{General Analytic Framework for the Achievable Sum Rate of MIMO MMSE Receivers} \label{sec:GenResults}

In this section, we present our new general analytical framework for investigating the achievable sum rates of MIMO systems with linear MMSE receivers.  In particular, we show that by using some very simple manipulations, the achievable sum rate (\ref{eq:MMSEMI}) can be expressed in a form
which can be easily evaluated for many fading models of interest, without requiring explicit statistical characterization of $\gamma_i$.

We find it convenient to introduce the following notation:
$n = \min (N_r,N_t)$, $m = \max (N_r,N_t)$, $n' = \min(N_r,N_t-1)$, and $m' = \max (N_r,N_t-1)$.

\subsection{Exact Characterization}

The following key result presents a simple and powerful connection between the MIMO MMSE achievable sum rate and the ergodic mutual information obtained with optimal receivers.

\begin{theorem} \label{th:MMSESumCapacity_Main}
The achievable sum rate of MIMO MMSE receivers can be expressed as
\begin{align}
I^{\rm mmse}({\rm snr}, N_r, N_t) &= N_t E_{\mathbf{H}} \left[ I^{\rm opt}({\rm snr}, N_r, N_t, \mathbf{H}) \right] - \sum_{i=1}^{N_t} E_{\mathbf{H}_i} \left[ I^{\rm opt} \left(\frac{N_t-1}{N_t} {\rm snr} , N_r, N_t-1, \mathbf{H}_i \right) \right]
\label{eq:MMSESumCapacity_Main}
\end{align}
where $\mathbf{H}_i$ corresponds to $\mathbf{H}$ with the $i$th
column removed.
\end{theorem}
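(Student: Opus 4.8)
The plan is to work one stream at a time: I will show that each summand $\log_2(1+\gamma_i)$ in (\ref{eq:MMSEMI}) equals, for every realization of $\mathbf{H}$, a difference of two optimal-receiver log-determinants — one for the full channel $\mathbf{H}$ and one for the reduced channel $\mathbf{H}_i$ — after which (\ref{eq:MMSESumCapacity_Main}) follows by taking expectations and summing. Set $\mathbf{M} = \mathbf{I}_{N_t} + \tfrac{{\rm snr}}{N_t}\mathbf{H}^\dagger\mathbf{H}$, which is Hermitian positive definite; then (\ref{eq:SINRk}) reads $1+\gamma_i = 1/[\mathbf{M}^{-1}]_{i,i}$, so $\log_2(1+\gamma_i) = -\log_2 [\mathbf{M}^{-1}]_{i,i}$.

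First I would invoke the cofactor (Cramer) identity $[\mathbf{M}^{-1}]_{i,i} = {\rm det}(\mathbf{M}_{(i)})/{\rm det}(\mathbf{M})$, where $\mathbf{M}_{(i)}$ denotes $\mathbf{M}$ with its $i$th row and $i$th column deleted (the relevant cofactor carries the factor $(-1)^{2i}=1$, so no sign enters). This converts the diagonal entry of the inverse into a ratio of determinants, giving $\log_2(1+\gamma_i) = \log_2{\rm det}(\mathbf{M}) - \log_2{\rm det}(\mathbf{M}_{(i)})$. The first term is identified at once: by Sylvester's determinant identity ${\rm det}(\mathbf{I}+\mathbf{A}\mathbf{B}) = {\rm det}(\mathbf{I}+\mathbf{B}\mathbf{A})$ we have $\log_2{\rm det}(\mathbf{M}) = \log_2{\rm det}\big(\mathbf{I}_{N_r} + \tfrac{{\rm snr}}{N_t}\mathbf{H}\mathbf{H}^\dagger\big) = I^{\rm opt}({\rm snr},N_r,N_t,\mathbf{H})$ by (\ref{eq:CoptDefn_Exp}).

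For the second term, the key observation is that the $(j,k)$ entry of $\mathbf{H}^\dagger\mathbf{H}$ depends only on columns $j$ and $k$ of $\mathbf{H}$, so deleting the $i$th row and column of $\mathbf{M}$ produces exactly $\mathbf{M}_{(i)} = \mathbf{I}_{N_t-1} + \tfrac{{\rm snr}}{N_t}\mathbf{H}_i^\dagger\mathbf{H}_i$. Writing $\tfrac{{\rm snr}}{N_t} = \tfrac{1}{N_t-1}\cdot\tfrac{N_t-1}{N_t}{\rm snr}$ and applying Sylvester's identity once more gives $\log_2{\rm det}(\mathbf{M}_{(i)}) = \log_2{\rm det}\big(\mathbf{I}_{N_r} + \tfrac{1}{N_t-1}\big(\tfrac{N_t-1}{N_t}{\rm snr}\big)\mathbf{H}_i\mathbf{H}_i^\dagger\big) = I^{\rm opt}\big(\tfrac{N_t-1}{N_t}{\rm snr},N_r,N_t-1,\mathbf{H}_i\big)$. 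Substituting the two pieces back into $\log_2(1+\gamma_i)$, taking $E_{\mathbf{H}}[\cdot]$ (which acts as $E_{\mathbf{H}_i}[\cdot]$ on the reduced term), and summing over $i=1,\dots,N_t$ — the first term being independent of $i$ — yields (\ref{eq:MMSESumCapacity_Main}).

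I do not anticipate a genuine obstacle: the argument rests only on Cramer's rule and Sylvester's identity. The one point that needs care is the normalization bookkeeping in the last step — the reduced problem must be cast as a \emph{bona fide} $(N_t{-}1)$-transmit-antenna optimal mutual information, which forces the SNR argument to be rescaled to $\tfrac{N_t-1}{N_t}{\rm snr}$ precisely so that the per-stream power $\tfrac{{\rm snr}}{N_t}$ is preserved between the full and reduced systems. It is also worth checking the degenerate case $N_t=1$, where the subtracted term is the mutual information of a channel with no transmit antennas and hence vanishes, leaving $I^{\rm mmse} = E_{\mathbf{H}}\big[\log_2(1+{\rm snr}\,\mathbf{H}^\dagger\mathbf{H})\big]$, consistent with $\gamma_1 = {\rm snr}\,\mathbf{H}^\dagger\mathbf{H}$.
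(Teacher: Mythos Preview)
Your proposal is correct and follows essentially the same route as the paper's proof: both substitute (\ref{eq:SINRk}) into (\ref{eq:MMSEMI}), invoke the cofactor identity $[\mathbf{M}^{-1}]_{i,i}=\det(\mathbf{M}_{(i)})/\det(\mathbf{M})$, and then use the fact that the $(i,i)$ principal submatrix of $\mathbf{I}_{N_t}+\tfrac{{\rm snr}}{N_t}\mathbf{H}^\dagger\mathbf{H}$ equals $\mathbf{I}_{N_t-1}+\tfrac{{\rm snr}}{N_t}\mathbf{H}_i^\dagger\mathbf{H}_i$ to recognize the two log-determinants as optimal mutual informations. Your write-up is slightly more explicit in spelling out the Sylvester step and the SNR rescaling that casts the reduced term in the form (\ref{eq:CoptDefn_Exp}), but the argument is the same.
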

\begin{proof}
See Appendix \ref{ap:MMSESumCapacity_Main}.
\end{proof}
\begin{corollary}
When $\mathbf{H}$ contains i.i.d.\ entries, (\ref{eq:MMSESumCapacity_Main}) reduces to
\begin{align} \label{eq:IIDRay_MMSE}
I^{\rm mmse}({\rm snr}, N_r, N_t) &= N_t \biggl( I^{\rm opt}({\rm snr}, N_r, N_t) - I^{\rm opt}\left(\frac{N_t-1}{N_t} {\rm snr} , N_r, N_t-1 \right) \biggr) \; .
\end{align}
\end{corollary}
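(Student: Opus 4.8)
The plan is to specialize Theorem~\ref{th:MMSESumCapacity_Main} directly, invoking only the elementary fact that deleting a column from a matrix with i.i.d.\ entries leaves a smaller matrix whose entries are again i.i.d.\ with the same marginal law. First I would treat the leading term on the right-hand side of (\ref{eq:MMSESumCapacity_Main}): by the definition (\ref{eq:opt_ExactCap}) of the ergodic mutual information, $E_{\mathbf{H}}\left[ I^{\rm opt}({\rm snr}, N_r, N_t, \mathbf{H}) \right] = I^{\rm opt}({\rm snr}, N_r, N_t)$, so this term contributes exactly $N_t\, I^{\rm opt}({\rm snr}, N_r, N_t)$.

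Next I would handle the summation. For each $i$, the matrix $\mathbf{H}_i$ is $\mathbf{H}$ with its $i$th column removed; since the entries of $\mathbf{H}$ are i.i.d., $\mathbf{H}_i$ is an $N_r \times (N_t-1)$ matrix whose entries are i.i.d.\ with that same marginal distribution, and by column exchangeability its law does not depend on which index $i$ is deleted. Consequently $E_{\mathbf{H}_i}\left[ I^{\rm opt}\!\left(\frac{N_t-1}{N_t}{\rm snr}, N_r, N_t-1, \mathbf{H}_i\right) \right]$ equals $I^{\rm opt}\!\left(\frac{N_t-1}{N_t}{\rm snr}, N_r, N_t-1\right)$ for every $i$, so the sum over $i = 1, \dots, N_t$ collapses to $N_t\, I^{\rm opt}\!\left(\frac{N_t-1}{N_t}{\rm snr}, N_r, N_t-1\right)$.

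Combining the two pieces and factoring out $N_t$ then gives (\ref{eq:IIDRay_MMSE}). There is essentially no obstacle: the only point requiring (minor) care is the observation that the column-deleted channel inherits the i.i.d.\ structure and that the resulting expectation is independent of which column is removed — both immediate consequences of the i.i.d.\ assumption. All the substance of the corollary already resides in Theorem~\ref{th:MMSESumCapacity_Main}; this is simply its cleanest special case, and it is worth recording because it recovers (and slightly extends) the uncorrelated-Rayleigh expressions of \cite{Louie08,Jindal_07_TrIT} as an immediate consequence of the general framework.
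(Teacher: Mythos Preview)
Your argument is correct and is precisely the intended specialization: the paper does not even supply a separate proof of this corollary, treating it as an immediate consequence of Theorem~\ref{th:MMSESumCapacity_Main} via the observation that, under the i.i.d.\ assumption, each $\mathbf{H}_i$ has the same $N_r \times (N_t-1)$ i.i.d.\ law so the summands in (\ref{eq:MMSESumCapacity_Main}) coincide. There is nothing to add or change.
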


Importantly, with the MMSE achievable sum rate expressed in this form, the
required expectations are the same as those required for the
evaluation of the ergodic MIMO mutual information with optimal receivers which, as already discussed, have well-known solutions for many channels of interest. In Section \ref{sec:Particularizations}  we will draw upon
these previous results to yield new closed-form expressions for the MMSE achievable sum rate.

\subsection{High SNR Characterization}

In the high SNR regime, the ergodic MIMO mutual information and the achievable sum rate of MIMO MMSE receivers can be expressed according to the affine expansion\footnote{The notation $f(x) = o(g(x))$ implies that $\lim_{x \to \infty} \frac{f(x)}{g(x)} = 0$.} \cite{shamai01}
\begin{align} \label{eq:HighSNRGeneral}
I({\rm snr}, N_r, N_t)  = \mathcal{S}_\infty \left( \log_2 {\rm
snr} - \mathcal{L}_\infty \right) + o(1)
\end{align}
where $\mathcal{S}_\infty$ is the high SNR \emph{slope}, in bit/s/Hz/(3 dB) units, given by
\begin{align} \label{eq:slopeDefn}
\mathcal{S}_\infty = \lim_{{\rm snr} \rightarrow \infty}
\frac{ I({\rm snr}, N_r, N_t)}{ \log_2 {\rm snr}}
\end{align}
and $\mathcal{L}_\infty$ is the high SNR \emph{power offset}, in 3
dB units, given by
\begin{align} \label{eq:offsetDefn}
\mathcal{L}_\infty = \lim_{{\rm snr} \rightarrow \infty}
\left( \log_2 {\rm snr} - \frac{ I({\rm snr}, N_r, N_t) }{
\mathcal{S}_\infty } \right) \; .
\end{align}
For MIMO systems with optimal receivers, these parameters are obtained from (\ref{eq:opt_ExactCap}) as
\begin{align}
\mathcal{S}_\infty^{\rm opt} = \min(N_r, N_t)
\end{align}
and
\begin{align} \label{eq:LPO_opt}
\mathcal{L}_\infty^{\rm opt} = \log_2 N_t - \frac{1}{n} E_{\mathbf{H}} \left[ \mathcal{J} ( N_r, N_t, \mathbf{H} ) \right]
\end{align}
respectively, where
\begin{align}
\mathcal{J} ( N_r, N_t, \mathbf{H} ) = \left\{
\begin{array}{ll}
\log_2 {\rm det} \left( \mathbf{H} \mathbf{H}^\dagger \right),   & N_r < N_t \\
\log_2 {\rm det} \left( \mathbf{H}^\dagger \mathbf{H} \right),   & N_r \geq N_t
\end{array}
\right. \; .
\end{align}

For MIMO systems with MMSE receivers, we have the following key result:
\begin{theorem} \label{th:MMSESumCapacity_HighSNR}
At high SNR, the achievable sum rate of MIMO MMSE receivers can be expressed in the general form (\ref{eq:HighSNRGeneral}) with parameters
\begin{align} \label{eq:MMSEHighSNRSlope}
\mathcal{S}_\infty^{\rm mmse} = \left\{
\begin{array}{ll}
N_t , \; \; \; & {\rm for} \; \; N_r \geq N_t \\
0 , \; \; \; & {\rm for} \; \; N_r < N_t
\end{array}
\right. \; \; \;
\end{align}
and $\mathcal{L}_\infty^{\rm mmse}$ is given by
\begin{align}
\mathcal{L}_\infty^{\rm mmse} &= \left\{
\begin{array}{cc}
\log_2 N_t - E_{\mathbf{H}}
\left[ \mathcal{J} ( N_r, N_t, \mathbf{H} )  \right] + \frac{1}{N_t} \sum_{k=1}^{N_t} E_{\mathbf{H}_k} \left[
\mathcal{J} ( N_r, N_t-1, \mathbf{H}_k )  \right] ,  &  N_r \geq N_t  \\
\infty ,  &  N_r < N_t
\end{array}
\right. \; .
\label{eq:MMSE_HighSNR}
\end{align}
\end{theorem}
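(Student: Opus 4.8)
The plan is to substitute the exact identity (\ref{eq:MMSESumCapacity_Main}) of Theorem~\ref{th:MMSESumCapacity_Main} into the known affine high-SNR characterization (\ref{eq:HighSNRGeneral})--(\ref{eq:LPO_opt}) of the \emph{ergodic} optimal-receiver mutual information, and then collect the $\log_2{\rm snr}$ and the constant terms. The right-hand side of (\ref{eq:MMSESumCapacity_Main}) is a difference of two pieces: the first is $N_t$ copies of $E_{\mathbf{H}}[I^{\rm opt}({\rm snr},N_r,N_t,\mathbf{H})]$, whose expansion is $\min(N_r,N_t)\bigl(\log_2{\rm snr}-\mathcal{L}_\infty^{\rm opt}(N_r,N_t)\bigr)+o(1)$ with $\mathcal{L}_\infty^{\rm opt}$ given by (\ref{eq:LPO_opt}); the second is a sum of $N_t$ ergodic optimal-receiver mutual informations for the $N_r\times(N_t-1)$ channels $\mathbf{H}_i$, each evaluated at the \emph{rescaled} SNR $\tfrac{N_t-1}{N_t}{\rm snr}$, which I would handle through $\log_2\!\bigl(\tfrac{N_t-1}{N_t}{\rm snr}\bigr)=\log_2{\rm snr}+\log_2\tfrac{N_t-1}{N_t}$ so that the $i$th summand contributes $\min(N_r,N_t-1)\bigl(\log_2{\rm snr}+\log_2\tfrac{N_t-1}{N_t}-\mathcal{L}_\infty^{\rm opt}(N_r,N_t-1;\mathbf{H}_i)\bigr)+o(1)$. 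A difference of two affine-in-$\log_2{\rm snr}$ expansions, each with $o(1)$ remainder, is again of the form (\ref{eq:HighSNRGeneral}) with slope the difference of the two slopes, so once the two pieces are expanded the result follows by bookkeeping.

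For the slope: if $N_r\geq N_t$ then $\min(N_r,N_t)=N_t$ and $\min(N_r,N_t-1)=N_t-1$, so the coefficient of $\log_2{\rm snr}$ is $N_t\cdot N_t-\sum_{i=1}^{N_t}(N_t-1)=N_t^2-N_t(N_t-1)=N_t$; if $N_r<N_t$ then $N_r\leq N_t-1$, hence $\min(N_r,N_t)=\min(N_r,N_t-1)=N_r$ and the coefficient of $\log_2{\rm snr}$ is $N_tN_r-\sum_{i=1}^{N_t}N_r=0$. This establishes (\ref{eq:MMSEHighSNRSlope}).

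For the power offset when $N_r\geq N_t$: I would substitute (\ref{eq:LPO_opt}) with $n=N_t$ into the first piece and with $n'=N_t-1$ into each subtracted summand. The constant part of $N_t\,E_{\mathbf{H}}[I^{\rm opt}({\rm snr},N_r,N_t,\mathbf{H})]$ is then $-N_t^2\log_2 N_t+N_t\,E_{\mathbf{H}}[\mathcal{J}(N_r,N_t,\mathbf{H})]$, while the constant part of the $i$th subtracted term, after using $(N_t-1)\log_2\tfrac{N_t-1}{N_t}=(N_t-1)\log_2(N_t-1)-(N_t-1)\log_2 N_t$, telescopes to $-(N_t-1)\log_2 N_t+E_{\mathbf{H}_i}[\mathcal{J}(N_r,N_t-1,\mathbf{H}_i)]$. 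Subtracting, and dividing by the slope $\mathcal{S}_\infty^{\rm mmse}=N_t$ as in (\ref{eq:offsetDefn}), the $\log_2 N_t$ contributions combine to a single $\log_2 N_t$ and one is left precisely with $\mathcal{L}_\infty^{\rm mmse}=\log_2 N_t-E_{\mathbf{H}}[\mathcal{J}(N_r,N_t,\mathbf{H})]+\tfrac{1}{N_t}\sum_{k=1}^{N_t}E_{\mathbf{H}_k}[\mathcal{J}(N_r,N_t-1,\mathbf{H}_k)]$, i.e.\ (\ref{eq:MMSE_HighSNR}). When $N_r<N_t$ the slope vanishes, so (\ref{eq:HighSNRGeneral}) degenerates; here the two pieces both have slope $N_rN_t$, so their difference is $O(1)$ and $I^{\rm mmse}$ saturates at a finite value — consistent with the fact that, from (\ref{eq:SINRk}), each $\gamma_i$ converges almost surely as $(\mathbf{I}_{N_t}+\tfrac{{\rm snr}}{N_t}\mathbf{H}^\dagger\mathbf{H})^{-1}$ tends to the orthogonal projector onto the nontrivial subspace $\ker\mathbf{H}$. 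In the notation of (\ref{eq:offsetDefn}) a zero high-SNR slope is recorded as $\mathcal{L}_\infty^{\rm mmse}=\infty$.

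The real content is not the algebra but its justification. First, the term-by-term use of the affine expansion is legitimate only if each ergodic quantity genuinely admits (\ref{eq:HighSNRGeneral})--(\ref{eq:LPO_opt}); this amounts to the finiteness of $E_{\mathbf{H}}[\mathcal{J}(N_r,N_t,\mathbf{H})]$ and of each $E_{\mathbf{H}_i}[\mathcal{J}(N_r,N_t-1,\mathbf{H}_i)]$ (and, in the latter, that $\mathbf{H}_i$ has full rank $\min(N_r,N_t-1)$ almost surely, so $\mathcal{J}$ is finite a.s.), which holds for the continuous fading models of interest; the finitely many $o(1)$ remainders then simply add. Second, one must be careful with the SNR rescaling $\tfrac{N_t-1}{N_t}{\rm snr}$ inside the second piece: the additive correction $\log_2\tfrac{N_t-1}{N_t}$ must be retained, and it is exactly this correction together with the $\log_2(N_t-1)$ terms from the power offsets of the $(N_t-1)$-antenna subsystems that collapses to the clean $\log_2 N_t$ appearing in (\ref{eq:MMSE_HighSNR}). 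I expect this bookkeeping — not any deep idea — to be the main obstacle.
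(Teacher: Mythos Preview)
Your proposal is correct and follows exactly the route the paper takes: the paper's proof is the one-line remark that the result ``is easily obtained by substituting (\ref{eq:MMSESumCapacity_Main}) into (\ref{eq:slopeDefn}) and (\ref{eq:offsetDefn}) and evaluating the necessary limits,'' and your write-up simply carries out that substitution and limit evaluation in full, including the telescoping of the $\log_2(N_t-1)$ terms against the SNR-rescaling correction $\log_2\tfrac{N_t-1}{N_t}$. Your caveats about needing $E[\mathcal{J}]<\infty$ and a.s.\ full rank of $\mathbf{H}_i$ are appropriate regularity assumptions that the paper leaves implicit.
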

\begin{proof}
The result is easily obtained by substituting (\ref{eq:MMSESumCapacity_Main}) into (\ref{eq:slopeDefn}) and (\ref{eq:offsetDefn}) and evaluating the necessary limits.
\end{proof}
\begin{corollary} \label{co:MMSESumCapacity_HighSNR}
Consider the case $N_r \geq N_t$.  If $\mathbf{H}$ has i.i.d.\ entries, then (\ref{eq:MMSE_HighSNR}) reduces to
\begin{align}
\mathcal{L}_\infty^{\rm mmse} &= \log_2 N_t - E_{\mathbf{H}}
\left[ \mathcal{J} ( N_r, N_t, \mathbf{H} )  \right]  + E_{\mathbf{H}_1} \left[
\mathcal{J} ( N_r, N_t-1, \mathbf{H}_1 )  \right]  \, .
\label{eq:MMSE_HighSNR_Corr}
\end{align}
\end{corollary}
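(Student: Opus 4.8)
The plan is to observe that Corollary~\ref{co:MMSESumCapacity_HighSNR} is an immediate specialization of Theorem~\ref{th:MMSESumCapacity_HighSNR}, obtained by exploiting the column-exchangeability of an i.i.d.\ channel matrix. Since we restrict to $N_r \geq N_t$, the first branch of (\ref{eq:MMSE_HighSNR}) applies, and the only quantity there that is not already in the form appearing in (\ref{eq:MMSE_HighSNR_Corr}) is the average $\frac{1}{N_t}\sum_{k=1}^{N_t} E_{\mathbf{H}_k}\!\left[ \mathcal{J}(N_r, N_t-1, \mathbf{H}_k) \right]$. So the entire task is to collapse this average into a single expectation.

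First I would note that when $\mathbf{H}$ has i.i.d.\ entries, the matrix $\mathbf{H}_k$ — i.e.\ $\mathbf{H}$ with its $k$th column deleted — is an $N_r \times (N_t-1)$ matrix whose entries are again i.i.d.\ with the same marginal law, for every $k \in \{1,\dots,N_t\}$. Hence the distribution of $\mathbf{H}_k$ does not depend on $k$; in particular $\mathbf{H}_k$ and $\mathbf{H}_1$ are identically distributed. Since $N_r \geq N_t > N_t-1$, the relevant functional is $\mathcal{J}(N_r, N_t-1, \mathbf{H}_k) = \log_2 \det ( \mathbf{H}_k^\dagger \mathbf{H}_k )$, a fixed measurable functional of the argument, so $E_{\mathbf{H}_k}\!\left[ \mathcal{J}(N_r, N_t-1, \mathbf{H}_k) \right] = E_{\mathbf{H}_1}\!\left[ \mathcal{J}(N_r, N_t-1, \mathbf{H}_1) \right]$ for all $k$. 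Averaging $N_t$ identical terms then gives $\frac{1}{N_t}\sum_{k=1}^{N_t} E_{\mathbf{H}_k}\!\left[ \mathcal{J}(N_r, N_t-1, \mathbf{H}_k) \right] = E_{\mathbf{H}_1}\!\left[ \mathcal{J}(N_r, N_t-1, \mathbf{H}_1) \right]$, and substituting this back into the $N_r \geq N_t$ branch of (\ref{eq:MMSE_HighSNR}), with $\log_2 N_t$ and $E_{\mathbf{H}}[\mathcal{J}(N_r, N_t, \mathbf{H})]$ left untouched, yields (\ref{eq:MMSE_HighSNR_Corr}) exactly.

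There is no genuine obstacle here: the corollary is essentially a bookkeeping consequence of Theorem~\ref{th:MMSESumCapacity_HighSNR}. The only point deserving a one-line justification is the distributional claim that deleting any single column of an i.i.d.\ matrix produces the same law irrespective of which column is removed — which is immediate from independence and identical distribution of the entries. I would therefore write the argument in just a few lines, emphasizing the exchangeability step and the trivial averaging that follows.
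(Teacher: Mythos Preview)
Your proposal is correct and matches the paper's approach: the paper states this corollary without an explicit proof, treating it as an immediate consequence of Theorem~\ref{th:MMSESumCapacity_HighSNR}, and the reason is precisely the one you identify --- under i.i.d.\ entries, all $\mathbf{H}_k$ share the same distribution, so the average $\frac{1}{N_t}\sum_{k}$ collapses to a single term.
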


The fact that $\mathcal{S}_\infty^{\rm mmse} = 0$ and $\mathcal{L}_\infty^{\rm mmse} = \infty$ for the case $N_r < N_t$ occurs since, when decoding the data stream for a given transmit antenna, the MMSE receiver does not have the capabilities (or enough degrees of freedom) to perfectly cancel the interference caused by the other transmit antennas. Thus, even when the impact of receiver noise becomes negligible (i.e.\ as $\rho \to \infty$), the channel remains interference-limited and, as expected, the MMSE achievable sum rate converges to a finite asymptote.  For the more interesting case, with $N_r \geq N_t$, we see that the high SNR power offset is non-zero.  Very importantly, this case involves the same types of
expectations as those required for the high SNR analysis of ergodic MIMO mutual information,
for which closed-form solutions exist for a wide range of fading channel scenarios
\cite{grant02,shin03,lozano05_jnl,mckay_jnl05}.  In the sequel, we will draw upon these
previous results in order to derive new simple closed-form expressions for
$\mathcal{L}_\infty^{\rm mmse}$ under a range of conditions.

In addition to the \emph{absolute} high SNR power offset, it is also of interest to
to examine the \emph{excess} high SNR power offset with
respect to the ergodic MIMO mutual information $\mathcal{L}_\infty^{\rm opt}$.  Considering the case $N_r \geq N_t$, this is
given by
\begin{align} \label{eq:DeltaDefn}
\Delta_{\rm ex} = \mathcal{L}_\infty^{\rm mmse} -
\mathcal{L}_\infty^{\rm opt}  \;  .
\end{align}

This measure is meaningful, since both the MMSE and optimal receivers yield the same high SNR slope (under the
assumption that $N_r \geq N_t$), and, as such, the corresponding curves will be parallel.

\subsection{Low SNR Characterization}

When considering the low SNR regime, it is convenient to introduce the concept of the \emph{dispersion} of a random matrix.  This measure, originally introduced in \cite{lozano03}, will play a key role in subsequent derivations.

\begin{definition}
Let $\mathbf{\Theta}$ denote a $N \times N$ random matrix.  Then the dispersion of $\mathbf{\Theta}$ is defined as
\begin{align} \label{eq:dispersionDefn}
\zeta (\mathbf{\Theta}) =  N \frac{ E \left[ {\rm tr} ( \mathbf{\Theta}^2 ) \right]}{ E^2 \left[ {\rm tr} ( \mathbf{\Theta} ) \right] } \; .
\end{align}
\end{definition}

For low SNR, it is often appropriate to consider the achievable rate in terms of the normalized transmit energy per information bit, $\frac{E_b}{N_0}$, rather than per-symbol SNR.  This can be obtained from $I ({\rm snr})$ via
\begin{align} \label{eq:LowSNRDefn}
\mathcal{I}\left( {\frac{{E_b }}{{N_0 }}} \right) = I ({\rm snr})
\end{align}
with ${\rm snr}$ the solution to
\begin{align} \label{eq:EbNoNonLinear}
\frac{E_b}{N_0} = \frac{ {\rm snr}}{ I({\rm snr}) } \; .
\end{align}
Note that $\frac{E_b}{N_0}$ is related to the normalized \emph{received} energy per information bit, $\frac{E_b^r}{N_0}$, via
\begin{align}
\frac{E_b^r}{N_0} = N_r \frac{E_b}{N_0} \; .
\end{align}

In general, closed-form analytic formulae for (\ref{eq:LowSNRDefn}) are not forthcoming, however, for low $\frac{E_b}{N_0}$ levels, this  representation is well approximated by \cite{Verdu02}
\begin{align}\label{eq:lowcapa}
\mathcal{I} \left( {\frac{{E_b }}{{N_0 }}} \right) \approx S_0 \log _2
\left( {\frac{{\frac{{E_b }}{{N_0 }}}}{{\frac{{E_b }}{{N_0 }}_{\min
} }}} \right) \;
\end{align}
where the approximation sharpens as $\frac{{E_b }}{{N_0 }} \downarrow \frac{{E_b }}{{N_0 }}_{\rm min}$. Here, ${\frac{{E_b }}{{N_0 }}_{\min } }$ and $S_0$ are the two key parameters which dictate the behavior in the low SNR regime corresponding, respectively, to the minimum normalized energy per information bit required to convey any
positive rate reliably, and the wideband slope. Importantly, they can be calculated directly from $I({\rm snr})$ via \cite{Verdu02}
\begin{align} \label{eq:EbNoDefn}
{\frac{{E_b }}{{N_0 }}_{\min } } &= \lim_{{\rm snr} \to 0} \frac{ {\rm snr} }{ I ( {\rm snr}) } \nonumber \\
&= \frac{1}{\dot{I}(0)}
\end{align}
and
\begin{align} \label{eq:S0Defn}
S_0 &= \lim_{ \frac{E_b}{N_0} \downarrow \frac{E_b}{N_0}_{\min}}
\frac{ \mathcal{I} \left( {\frac{{E_b }}{{N_0 }}} \right) }{ 10 \log_{10} \frac{{E_b }}{{N_0 }} - 10 \log_{10}  \frac{{E_b }}{{N_0 }_{\min} }   } 10 \log_{10} 2  \nonumber \\
&= \frac{-2 ( \dot{I}(0) )^2 }{ \ddot{I}(0) } \ln 2
\end{align}
respectively, where $\dot{I}(\cdot)$ and $\ddot{I}(\cdot)$ denote the first and second-order derivative respectively, taken with respect to ${\rm snr}$. Note that $\mathcal{I} \left( {\frac{{E_b }}{{N_0 }}} \right)$ implicity captures the second-order behavior of $I({\rm snr} )$ as ${\rm snr} \to 0$. For MIMO systems with optimal receivers, these parameters are given by \cite{lozano03}
\begin{align} \label{eq:EbNoMIMO}
\frac{{E_b }}{{N_0 }}_{\min }^{\rm opt} = \frac{\ln 2 }{N_r}
\end{align}
and
\begin{align}
S_0^{\rm opt} = \frac{ 2 N_r }{ \zeta ( \mathbf{H} \mathbf{H}^\dagger ) }
\end{align}
respectively.

For MIMO systems with MMSE receivers, we have the following key result:
\begin{theorem} \label{th:MMSESumCapacity_LowSNR}
At low SNR, the achievable sum rate of MIMO MMSE receivers can be expressed in the general form (\ref{eq:lowcapa}) with parameters
\begin{align} \label{eq:EbNoMMSE}
\frac{{E_b }}{{N_0 }}_{\min }^{\rm mmse} = \frac{\ln 2 }{N_r}
\end{align}
and
\begin{align} \label{eq:WideSlope_MMSE}
S_0^{\rm mmse} = \frac{ 2 N_r }{ N_t \zeta ( \mathbf{H} \mathbf{H}^\dagger ) - \left(\frac{N_t-1}{N_t}\right)^2 \sum_{k=1}^{N_t} \zeta ( \mathbf{H}_k \mathbf{H}_k^\dagger )  } \; \; .
\end{align}
\end{theorem}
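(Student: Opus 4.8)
The plan is to work directly from the exact representation of $I^{\rm mmse}$ in Theorem~\ref{th:MMSESumCapacity_Main}, differentiate it twice at ${\rm snr}=0$, and then read off $\frac{E_b}{N_0}_{\min}^{\rm mmse}$ and $S_0^{\rm mmse}$ from the formulas (\ref{eq:EbNoDefn}) and (\ref{eq:S0Defn}). Since (\ref{eq:MMSESumCapacity_Main}) expresses $I^{\rm mmse}$ as a weighted sum of ergodic log-det quantities evaluated at linearly scaled SNRs, everything reduces to the small-argument expansion $\log_2 {\rm det}(\mathbf{I}+\mathbf{X}) = \frac{1}{\ln 2}\big( {\rm tr}\,\mathbf{X} - \tfrac{1}{2}{\rm tr}\,\mathbf{X}^2 \big) + o(\|\mathbf{X}\|^2)$, applied with $\mathbf{X} = \frac{{\rm snr}}{N_t}\mathbf{H}\mathbf{H}^\dagger$ in the first term and with $\mathbf{X} = \frac{{\rm snr}}{N_t}\mathbf{H}_k\mathbf{H}_k^\dagger$ in the $k$th summand of the second term. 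The last point deserves a remark: the explicit SNR scaling $\frac{N_t-1}{N_t}$ appearing in (\ref{eq:MMSESumCapacity_Main}), divided by the $N_t-1$ per-stream normalization built into $I^{\rm opt}(\cdot,N_r,N_t-1,\cdot)$, reproduces exactly the per-antenna factor $\frac{1}{N_t}$ of the full system, which is what makes the second-moment terms combine cleanly. The interchange of differentiation and expectation over $\mathbf{H}$ (resp.\ $\mathbf{H}_k$) is routine under the finite matrix moments possessed by the channel models of interest, and I would dispatch it with a one-line dominated-convergence remark.

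For the first derivative, collecting the terms linear in ${\rm snr}$ gives $\dot{I}^{\rm mmse}(0) = \frac{1}{N_t\ln 2}\big( N_t\,E_{\mathbf{H}}[{\rm tr}(\mathbf{H}\mathbf{H}^\dagger)] - \sum_{k=1}^{N_t} E_{\mathbf{H}_k}[{\rm tr}(\mathbf{H}_k\mathbf{H}_k^\dagger)] \big)$. I would then invoke the elementary identity $\sum_{k=1}^{N_t} {\rm tr}(\mathbf{H}_k\mathbf{H}_k^\dagger) = (N_t-1)\,{\rm tr}(\mathbf{H}\mathbf{H}^\dagger)$ — each column of $\mathbf{H}$ is dropped in exactly $N_t-1$ of the submatrices — together with the normalization (\ref{eq:ChannelNorm}), $E_{\mathbf{H}}[{\rm tr}(\mathbf{H}\mathbf{H}^\dagger)] = N_r N_t$. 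These give $\dot{I}^{\rm mmse}(0) = \frac{N_r(N_t - (N_t-1))}{\ln 2} = \frac{N_r}{\ln 2}$, so that $\frac{E_b}{N_0}_{\min}^{\rm mmse} = 1/\dot{I}^{\rm mmse}(0) = \frac{\ln 2}{N_r}$, which is (\ref{eq:EbNoMMSE}); as a sanity check it coincides with the optimal-receiver value.

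For the second derivative, the terms quadratic in ${\rm snr}$ — after the chain rule, which in each summand of the second term contributes $\big(\frac{N_t-1}{N_t}\big)^2$ from the scaled argument and $\frac{1}{(N_t-1)^2}$ from the $(N_t-1)$-stream normalization, i.e.\ the factor $\frac{1}{N_t^2}$ overall — yield $\ddot{I}^{\rm mmse}(0) = -\frac{1}{N_t^2\ln 2}\big( N_t\,E_{\mathbf{H}}[{\rm tr}((\mathbf{H}\mathbf{H}^\dagger)^2)] - \sum_{k=1}^{N_t} E_{\mathbf{H}_k}[{\rm tr}((\mathbf{H}_k\mathbf{H}_k^\dagger)^2)] \big)$. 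I would convert each second-moment trace via the dispersion definition (\ref{eq:dispersionDefn}): since $\mathbf{H}\mathbf{H}^\dagger$ and each $\mathbf{H}_k\mathbf{H}_k^\dagger$ are $N_r\times N_r$, $E[{\rm tr}((\mathbf{H}\mathbf{H}^\dagger)^2)] = \frac{1}{N_r}\zeta(\mathbf{H}\mathbf{H}^\dagger)\big(E[{\rm tr}(\mathbf{H}\mathbf{H}^\dagger)]\big)^2 = N_r N_t^2\,\zeta(\mathbf{H}\mathbf{H}^\dagger)$ and, using $E[{\rm tr}(\mathbf{H}_k\mathbf{H}_k^\dagger)] = (N_t-1)N_r$ (each column of $\mathbf{H}$ having expected squared norm $N_r$, as enforced by the standard normalization of the models in Section~\ref{sec:Particularizations}), $E[{\rm tr}((\mathbf{H}_k\mathbf{H}_k^\dagger)^2)] = (N_t-1)^2 N_r\,\zeta(\mathbf{H}_k\mathbf{H}_k^\dagger)$. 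Substituting gives $\ddot{I}^{\rm mmse}(0) = -\frac{N_r}{\ln 2}\big( N_t\,\zeta(\mathbf{H}\mathbf{H}^\dagger) - \big(\frac{N_t-1}{N_t}\big)^2\sum_{k=1}^{N_t}\zeta(\mathbf{H}_k\mathbf{H}_k^\dagger) \big)$, and plugging $\dot{I}^{\rm mmse}(0)$ and $\ddot{I}^{\rm mmse}(0)$ into $S_0 = -2(\dot{I}(0))^2\ln 2 / \ddot{I}(0)$ produces (\ref{eq:WideSlope_MMSE}).

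Regarding the main obstacle: the argument is essentially mechanical once Theorem~\ref{th:MMSESumCapacity_Main} is in hand, and the only points that need genuine care are (i) keeping the two scalings straight — the explicit $\frac{N_t-1}{N_t}$ SNR factor and the implicit $\frac{1}{N_t-1}$ per-stream normalization — so that the quadratic term ends up with exactly the coefficient $\big(\frac{N_t-1}{N_t}\big)^2$ multiplying the submatrix dispersions; and (ii) the mild assumption that every column of $\mathbf{H}$ has the same expected squared norm $N_r$, without which $E[{\rm tr}(\mathbf{H}_k\mathbf{H}_k^\dagger)]$ would not collapse to $(N_t-1)N_r$ and $S_0^{\rm mmse}$ would instead have to be stated through raw second-moment traces rather than dispersions. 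Neither is a real difficulty, so I expect the proof to go through without complications.
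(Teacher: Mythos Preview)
Your proposal is correct and follows essentially the same route as the paper's own proof: differentiate the representation of Theorem~\ref{th:MMSESumCapacity_Main} twice at ${\rm snr}=0$ using the log-det expansion, invoke the normalization (\ref{eq:ChannelNorm}) to evaluate the first-order traces, rewrite the second-order traces via the dispersion (\ref{eq:dispersionDefn}), and substitute into (\ref{eq:EbNoDefn}) and (\ref{eq:S0Defn}). Your pathwise identity $\sum_k {\rm tr}(\mathbf{H}_k\mathbf{H}_k^\dagger) = (N_t-1)\,{\rm tr}(\mathbf{H}\mathbf{H}^\dagger)$ is a slightly cleaner way to handle the first derivative than the paper's direct appeal to $E[{\rm tr}(\mathbf{H}_i\mathbf{H}_i^\dagger)] = N_r(N_t-1)$, and your point~(ii) correctly flags the per-column normalization assumption that the paper uses implicitly.
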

\begin{proof}
See Appendix \ref{ap:MMSESumCapacity_LowSNR}.
\end{proof}
\begin{corollary} \label{corr:MMSESumCapacity_LowSNR}
If $\mathbf{H}$ has i.i.d.\ entries, then (\ref{eq:WideSlope_MMSE}) reduces to
\begin{align}
S_0^{\rm mmse} = \frac{ 2 N_r }{ N_t \zeta ( \mathbf{H} \mathbf{H}^\dagger ) - \frac{(N_t-1)^2}{N_t} \zeta ( \mathbf{H}_1 \mathbf{H}_1^\dagger )  } \; \; .   \label{eq:WideSlope_MMSE_IID}
\end{align}
\end{corollary}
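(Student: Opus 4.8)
The plan is to obtain the corollary as an immediate specialization of Theorem~\ref{th:MMSESumCapacity_LowSNR}, exploiting the permutation invariance of the channel law. When $\mathbf{H}$ has i.i.d.\ entries, deleting the $k$th column produces an $N_r \times (N_t-1)$ matrix $\mathbf{H}_k$ whose entries are again i.i.d.\ with the same marginal distribution, regardless of which index $k$ is removed. Hence the matrices $\mathbf{H}_1, \dots, \mathbf{H}_{N_t}$ are identically distributed (in fact exchangeable), and therefore so are the Gram matrices $\mathbf{H}_1 \mathbf{H}_1^\dagger, \dots, \mathbf{H}_{N_t} \mathbf{H}_{N_t}^\dagger$.

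The next step is to observe that the dispersion $\zeta(\cdot)$ defined in (\ref{eq:dispersionDefn}) depends on its matrix argument only through its distribution, since it is a ratio built from $E[{\rm tr}(\mathbf{\Theta}^2)]$ and $E^2[{\rm tr}(\mathbf{\Theta})]$. Consequently $\zeta(\mathbf{H}_k \mathbf{H}_k^\dagger) = \zeta(\mathbf{H}_1 \mathbf{H}_1^\dagger)$ for every $k \in \{1,\dots,N_t\}$, so the sum appearing in the denominator of (\ref{eq:WideSlope_MMSE}) collapses to $\sum_{k=1}^{N_t} \zeta(\mathbf{H}_k \mathbf{H}_k^\dagger) = N_t\, \zeta(\mathbf{H}_1 \mathbf{H}_1^\dagger)$. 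Substituting this, together with the elementary identity $\left( \frac{N_t-1}{N_t} \right)^2 N_t = \frac{(N_t-1)^2}{N_t}$, into (\ref{eq:WideSlope_MMSE}) yields (\ref{eq:WideSlope_MMSE_IID}) directly; the value of $\frac{E_b}{N_0}_{\min}^{\rm mmse}$ in (\ref{eq:EbNoMMSE}) is unchanged, as it does not involve the column-deleted matrices. This argument is the exact low-SNR analogue of the one used to pass from Theorem~\ref{th:MMSESumCapacity_HighSNR} to Corollary~\ref{co:MMSESumCapacity_HighSNR}, where the same identical-distribution observation reduces $\frac{1}{N_t}\sum_{k} E_{\mathbf{H}_k}[\mathcal{J}(N_r,N_t-1,\mathbf{H}_k)]$ to $E_{\mathbf{H}_1}[\mathcal{J}(N_r,N_t-1,\mathbf{H}_1)]$.

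I do not anticipate any real obstacle: the only points requiring a moment's care are verifying that the dispersion is invariant under the relevant relabelling of the columns (immediate from the i.i.d.\ hypothesis) and performing the trivial algebraic simplification of the constant $\left(\frac{N_t-1}{N_t}\right)^2 N_t$. If desired, one could note in passing that mere exchangeability of the columns of $\mathbf{H}$ — rather than full independence — already suffices for the denominator to collapse in this way, so the corollary extends without change to that broader class of channels; for the stated i.i.d.\ case the short argument above is complete.
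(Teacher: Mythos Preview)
Your proposal is correct and matches the paper's treatment: the corollary is stated without a separate proof, being an immediate consequence of the fact that under the i.i.d.\ assumption all $\mathbf{H}_k$ share the same distribution, so each $\zeta(\mathbf{H}_k\mathbf{H}_k^\dagger)$ equals $\zeta(\mathbf{H}_1\mathbf{H}_1^\dagger)$ and the sum in (\ref{eq:WideSlope_MMSE}) collapses. Your remark that column exchangeability alone suffices is a nice observation but goes slightly beyond what the paper records.
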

\vspace*{0.5cm}
Interestingly, comparison of (\ref{eq:EbNoMIMO}) and (\ref{eq:EbNoMMSE}) reveals that MMSE receivers are optimal in terms of the minimum required $\frac{{E_b }}{{N_0 }}$.  For both receivers, this parameter is independent of the number of transmit antennas, whilst varying inversely with the number of receive antennas; a fact directly attributed to the increased channel energy captured by the additional receive antennas, whilst the total transmit energy is constrained. We also see that the wideband slope of MIMO MMSE receivers depends on the random matrix channel via its dispersion. In the following section we will evaluate this parameter in closed-form for various Rayleigh and Rician fading channels of interest.  From these results, we will see that although MMSE receivers are optimal in terms of the minimum required $\frac{{E_b }}{{N_0 }}$, such receivers are indeed suboptimal in the low SNR regime as typically reflected in a reduced wideband slope $S_0$.

\section{Achievable Sum Rate of MIMO MMSE in Fading Channels} \label{sec:Particularizations}

In this section, we demonstrate the key utility of the general results propounded in the previous section, by presenting explicit solutions for the MIMO MMSE achievable sum rate for various fading models of practical interest.

\subsection{Uncorrelated Rayleigh Fading}

We start by particularizing the results for the canonical case: the i.i.d.\ Rayleigh fading channel,
\begin{align} \label{eq:IIDRay_Model}
\mathbf{H} \sim \mathcal{CN}_{N_r, N_t} \left( \mathbf{0}, \mathbf{I}_{N_r} \otimes \mathbf{I}_{N_t} \right) \; ,
\end{align}
representative of rich scattering non-line-of-sight environments with sufficiently spaced antenna elements.

\begin{figure} \centering
\includegraphics[width=0.7\columnwidth]{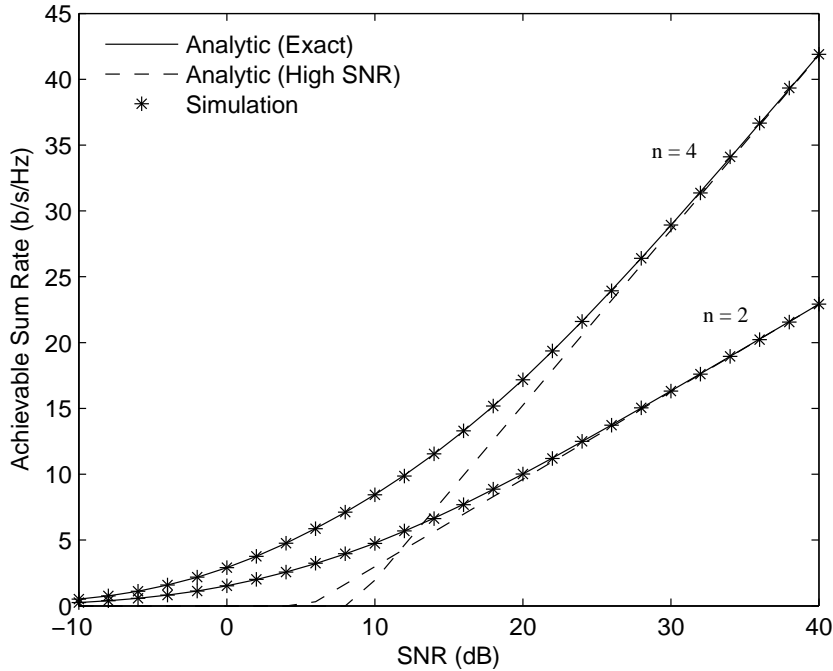}
\caption{Achievable sum rate of MIMO MMSE receivers in i.i.d. Rayleigh fading; comparison of analysis and simulations. Results are shown for different $N_r = N_t = n$.
} \label{fig:MMSE_SumCapacity_IID_Rayleigh}
\end{figure}

\subsubsection{Exact Analysis}

\begin{proposition} \label{th:IIDRayleigh_Exact}
For i.i.d.\ Rayleigh fading, the MIMO MMSE achievable sum rate is given by
\begin{align}
I^{\rm mmse}({\rm snr}, N_r, N_t) = N_t e^{N_t/{\rm snr}}  \left( \frac{
\sum_{k=1}^n \det  \mathbf{\Psi}_{n,m}(k) }{\Gamma_n(m)
\Gamma_n (n)} - \frac{ \sum_{k=1}^{n'} \det \mathbf{\Psi}_{n',m'}(k) }{\Gamma_{n'}(m') \Gamma_{n'} (n')}
 \right) \log_2e \; , \label{eq:IIDRayleigh_Exact}
\end{align}
where $\Gamma_n(\cdot)$ is the normalized complex multivariate gamma
function,
\begin{align}
\Gamma_n (m) = \prod_{i=1}^n \Gamma(m - i + 1)
\end{align}
and $\mathbf{\Psi}_{n,m} (k)$ is an $n \times n$ matrix with
$(s,t)$th element
\begin{align}
\left(\mathbf{\Psi}_{n,m} (k) \right)_{s,t} = \left\{
\begin{array}{ll}
 \tau_{s,t}! \sum_{h=1}^{\tau_{s,t}+1} {\rm E}_h \left( \frac{N_t}{{\rm snr}}\right) & \text{for} \; \;  t = k \\
 \tau_{s,t}! & \text{for} \; \; t \neq k \\
\end{array}
\right.
\end{align}
where $\tau_{s,t} = n + m - s - t$, and ${\rm E}_h (\cdot)$ is the exponential integral.
\end{proposition}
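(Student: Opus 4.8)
The plan is to specialize the general identity of Theorem~\ref{th:MMSESumCapacity_Main} to the i.i.d.\ case and then substitute the known closed form for the ergodic MIMO mutual information of an i.i.d.\ Rayleigh channel. By the Corollary to Theorem~\ref{th:MMSESumCapacity_Main}, namely (\ref{eq:IIDRay_MMSE}),
\begin{align}
I^{\rm mmse}({\rm snr},N_r,N_t) = N_t\Bigl( I^{\rm opt}({\rm snr},N_r,N_t) - I^{\rm opt}\bigl(\tfrac{N_t-1}{N_t}{\rm snr},\,N_r,\,N_t-1\bigr)\Bigr),
\end{align}
so it suffices to have a closed form for $I^{\rm opt}$ alone and then subtract. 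I would record, as a lemma, the classical fact (see e.g.\ \cite{grant02,shin03}) that for $\mathbf{H}\sim\mathcal{CN}_{N_r,N_t}(\mathbf{0},\mathbf{I}\otimes\mathbf{I})$,
\begin{align}
I^{\rm opt}({\rm snr},N_r,N_t) = \frac{e^{N_t/{\rm snr}}\log_2 e}{\Gamma_n(m)\,\Gamma_n(n)}\sum_{k=1}^{n}\det\mathbf{\Psi}_{n,m}(k),
\end{align}
with $n=\min(N_r,N_t)$, $m=\max(N_r,N_t)$ and $\mathbf{\Psi}_{n,m}(k)$ exactly as in the statement of the proposition.

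The observation that makes the two terms amalgamate cleanly is that the \emph{per-branch} SNR is identical in both: in the second term the SNR is $\tfrac{N_t-1}{N_t}{\rm snr}$ while the number of transmit antennas is $N_t-1$, so $\tfrac{(N_t-1){\rm snr}/N_t}{N_t-1} = \tfrac{{\rm snr}}{N_t}$ coincides with the per-branch SNR of the first term. Hence both closed forms carry the \emph{same} prefactor $e^{N_t/{\rm snr}}\log_2 e$ and the \emph{same} exponential integrals ${\rm E}_h(N_t/{\rm snr})$ inside $\mathbf{\Psi}$; only the matrix dimensions change, from $(n,m)$ in the first term to $(n',m')=(\min(N_r,N_t-1),\max(N_r,N_t-1))$ in the second, with the convention that the empty sum when $N_t=1$ vanishes (consistent with $I^{\rm opt}$ for zero transmit antennas). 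Substituting both expressions into the displayed identity and pulling out the common factor $N_t e^{N_t/{\rm snr}}\log_2 e$ yields precisely (\ref{eq:IIDRayleigh_Exact}); this final step is pure bookkeeping, valid in both regimes $N_r\ge N_t$ and $N_r<N_t$ since the formula is an exact identity.

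The only non-routine ingredient is the lemma, i.e.\ the determinant evaluation of $I^{\rm opt}$, and that is where I expect the real work to sit. For a self-contained argument I would start from $I^{\rm opt}=E\bigl[\sum_{i=1}^{n}\log_2(1+\tfrac{{\rm snr}}{N_t}\lambda_i)\bigr]$ over the $n$ unordered eigenvalues $\lambda_i$ of the smaller of $\mathbf{H}\mathbf{H}^\dagger,\mathbf{H}^\dagger\mathbf{H}$, whose joint density is proportional to $\prod_{i<j}(\lambda_i-\lambda_j)^2\prod_i\lambda_i^{m-n}e^{-\lambda_i}$; writing the squared Vandermonde as a product of two Vandermonde determinants and applying the Andr\'eief (Gram) identity together with the linear-statistic structure of $\sum_i\log_2(\cdot)$ collapses the $n$-fold integral into $\sum_{k=1}^{n}\det$ of an $n\times n$ moment matrix in which the $k$th column is replaced by the integrals $\int_0^\infty\log_2(1+\tfrac{{\rm snr}}{N_t}\lambda)\,\lambda^{\tau_{s,t}}e^{-\lambda}d\lambda$, the remaining columns being $\int_0^\infty\lambda^{\tau_{s,t}}e^{-\lambda}d\lambda=\tau_{s,t}!$; the normalizing constant is the Hankel determinant $\det[\tau_{s,t}!]_{s,t=1}^{n}=\Gamma_n(m)\Gamma_n(n)$. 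The modified entries are then evaluated by the standard integral $\int_0^\infty\ln(1+c\lambda)\,\lambda^{p}e^{-\lambda}d\lambda = p!\,e^{1/c}\sum_{h=1}^{p+1}{\rm E}_h(1/c)$ with $c={\rm snr}/N_t$, which is what produces the $e^{N_t/{\rm snr}}$ factor and the ${\rm E}_h(N_t/{\rm snr})$ entries; factoring $\log_2 e\cdot e^{N_t/{\rm snr}}$ out of the modified column recovers $\mathbf{\Psi}_{n,m}(k)$ exactly. The mild care points are the dimensional bookkeeping in the two regimes (and the degenerate edge $N_r=N_t-1$, where $n'=m'$), but with the empty-sum convention no genuine case split survives in (\ref{eq:IIDRayleigh_Exact}).
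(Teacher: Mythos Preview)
Your proposal is correct and follows essentially the same approach as the paper: invoke the i.i.d.\ corollary (\ref{eq:IIDRay_MMSE}) and substitute a known closed-form expression for $I^{\rm opt}$ in i.i.d.\ Rayleigh fading (the paper simply cites \cite{kang04} for this, whereas you additionally sketch a self-contained derivation via the joint eigenvalue density and the Andr\'eief identity). Your explicit observation that the per-branch SNR $\tfrac{{\rm snr}}{N_t}$ coincides in both terms---so that the common prefactor $e^{N_t/{\rm snr}}$ and the arguments of the exponential integrals match---is exactly the bookkeeping point that makes the two pieces combine into (\ref{eq:IIDRayleigh_Exact}).
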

\begin{proof}
This result is easily obtained by plugging into (\ref{eq:IIDRay_MMSE}) the ergodic mutual information expression for i.i.d.\ Rayleigh MIMO channels given in\footnote{An alternative closed-form expression for ergodic mutual information can be found in \cite{shin03}.} \cite{kang04}.
\end{proof}

Our result in \emph{Proposition \ref{th:IIDRayleigh_Exact}} gives an exact closed-form expression for the MMSE achievable sum rate, which applies for all SNRs and arbitrary antenna configurations.  This result is confirmed in Fig.\ \ref{fig:MMSE_SumCapacity_IID_Rayleigh}, where it is compared with the exact MMSE achievable sum rate, obtained via Monte-Carlo simulations, for different antenna configurations.  There is precise agreement between the simulated and analytic curves, as expected.  We note that \emph{Proposition \ref{th:IIDRayleigh_Exact}} presents a new expression for the achievable sum rate of MIMO MMSE receivers, however, an alternative expression has also been obtained via different means in \cite{Louie08}.  That result was obtained by directly integrating (\ref{eq:MMSEMI}) over the distribution of the SINR in (\ref{eq:SINRk}); an approach that cannot be followed for more general channel models.

For small system dimensions (eg.\ $n = 2$), (\ref{eq:IIDRayleigh_Exact}) reduces to particularly simple forms.  For example, for the case $N_t = 2, N_r \geq 2$, it reduces to
\begin{align}
I^{\rm mmse}({\rm snr}, N_r, 2) &= 2 e^{2/{\rm snr}} \biggl( \sum_{k=1}^{N_r}
{\rm E}_k \left( 2/{\rm snr} \right) + N_r \left( {\rm E}_{N_r + 1} \left( 2/{\rm snr} \right) -  {\rm E}_{N_r} \left( 2/{\rm snr}
\right) \right) \biggr) \log_2e \; ,
\end{align}
whilst for $N_r = 2, N_t \geq 2$, we get
\begin{align}
I^{\rm mmse}({\rm snr, 2, N_t}) &= N_t e^{N_t/{\rm snr}} \biggl( (N_t-1) {\rm
E}_{N_t-1} \left( N_t/{\rm snr} \right) \nonumber \\
& \hspace*{-2cm} + (3 - 2 N_t) {\rm E}_{N_t}
\left( N_t/{\rm snr} \right) + N_t {\rm E}_{N_t+1} \left( N_t/{\rm
snr} \right) \biggr) \log_2e \; .
\end{align}

\subsubsection{High SNR Analysis}

Recall that in the high SNR regime, the key channel-dependant parameter is the high SNR power offset for the case $N_r \geq N_t$.

\begin{proposition} \label{pr:IIDRayleigh_Proof}
For i.i.d.\ Rayleigh faded channels, the high SNR power offset (for $N_r \geq N_t$) for
MIMO MMSE receivers is given by
\begin{align}
\mathcal{L}_\infty^{\rm mmse} &= \log_2N_t - \log_2e \left(
\sum_{\ell=1}^{N_r - N_t} \frac{1}{\ell} - \gamma \right)
\label{eq:L_ZF_final}
\end{align}
where $\gamma \approx 0.5772$ is the Euler-Mascheroni constant.

For $N_r = N_t = n$ this reduces to
\begin{align}
\mathcal{L}_\infty^{\rm mmse} 
=\log_2 ( n e^\gamma ) \; \; .
\label{eq:L_ZF_equal}
\end{align}
\end{proposition}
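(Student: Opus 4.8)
The plan is to read off the answer from \emph{Corollary \ref{co:MMSESumCapacity_HighSNR}}, which for i.i.d.\ entries and $N_r \geq N_t$ states
\[
\mathcal{L}_\infty^{\rm mmse} = \log_2 N_t - E_{\mathbf{H}}\left[\mathcal{J}(N_r, N_t, \mathbf{H})\right] + E_{\mathbf{H}_1}\left[\mathcal{J}(N_r, N_t-1, \mathbf{H}_1)\right],
\]
and to reduce both expectations to the expected log-determinant of a central complex Wishart matrix. Since $N_r \geq N_t$ (hence also $N_r \geq N_t-1$), both $\mathcal{J}$ terms take the form $\log_2\det(\cdot^\dagger(\cdot))$. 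For $\mathbf{H}\sim\mathcal{CN}_{N_r,N_t}(\mathbf{0},\mathbf{I}_{N_r}\otimes\mathbf{I}_{N_t})$ the Gram matrix $\mathbf{H}^\dagger\mathbf{H}$ is $\mathcal{W}_{N_t}(N_r,\mathbf{I})$, and deleting a column of $\mathbf{H}$ leaves an $N_r\times(N_t-1)$ matrix with i.i.d.\ $\mathcal{CN}(0,1)$ entries (the columns of $\mathbf{H}$ being independent), so that $\mathbf{H}_1^\dagger\mathbf{H}_1\sim\mathcal{W}_{N_t-1}(N_r,\mathbf{I})$.

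The first step is then to apply the standard identity for the expected log-determinant of such a Wishart matrix,
\[
E\big[\ln\det(\mathbf{H}^\dagger\mathbf{H})\big] = \sum_{i=1}^{N_t}\psi(N_r - i + 1),
\]
where $\psi(\cdot)$ is the digamma function; this follows for instance from the Bartlett decomposition (writing $\det(\mathbf{H}^\dagger\mathbf{H})$ as a product of independent chi-square variates) or directly from \cite{grant02}. Substituting the analogous expression with $N_t$ replaced by $N_t-1$ for the $\mathbf{H}_1$ term, the two digamma sums telescope: $\sum_{i=1}^{N_t}\psi(N_r-i+1) - \sum_{i=1}^{N_t-1}\psi(N_r-i+1) = \psi(N_r-N_t+1)$, giving
\[
\mathcal{L}_\infty^{\rm mmse} = \log_2 N_t - \psi(N_r - N_t + 1)\,\log_2 e.
\]

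The final step is cosmetic: for a positive integer argument, $\psi(k) = -\gamma + \sum_{\ell=1}^{k-1}\tfrac{1}{\ell}$, so with $k = N_r-N_t+1$ one obtains $\psi(N_r-N_t+1) = -\gamma + \sum_{\ell=1}^{N_r-N_t}\tfrac{1}{\ell}$, which yields (\ref{eq:L_ZF_final}). Setting $N_r=N_t=n$ makes the harmonic sum empty, so $\psi(1)=-\gamma$ and $\mathcal{L}_\infty^{\rm mmse} = \log_2 n + \gamma\log_2 e = \log_2(n e^{\gamma})$, which is (\ref{eq:L_ZF_equal}). There is no serious obstacle here: the only nontrivial ingredient is the closed-form Wishart log-determinant, and the one point that needs care is verifying that the column-deleted channel $\mathbf{H}_1$ is again genuinely i.i.d.\ Gaussian of size $N_r\times(N_t-1)$ so the same formula applies with the index shift; the remaining work is the telescoping and the digamma-to-harmonic-number substitution.
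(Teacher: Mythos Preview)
Your proposal is correct and follows essentially the same approach as the paper: both start from Corollary~\ref{co:MMSESumCapacity_HighSNR}, invoke the closed-form expected log-determinant $E_{\mathbf{H}}[\mathcal{J}(N_r,N_t,\mathbf{H})]=\log_2 e\sum_{\ell=0}^{N_t-1}\psi(N_r-\ell)$ from \cite{grant02}, telescope the two digamma sums to leave $\psi(N_r-N_t+1)$, and then rewrite via $\psi(k)=-\gamma+\sum_{\ell=1}^{k-1}1/\ell$. Your additional remarks (the Wishart identification of $\mathbf{H}_1^\dagger\mathbf{H}_1$ and the Bartlett-decomposition justification) simply make explicit what the paper leaves implicit.
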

\vspace*{0.5cm}
\begin{proof}
The result is easily obtained from (\ref{eq:MMSE_HighSNR_Corr}), upon noting that \cite{grant02}
\begin{align}
\mathcal{J}(N_r, N_t, \mathbf{H} ) = \log_2e \sum_{\ell = 0}^{N_t - 1} \psi \left(N_r - \ell \right)
\end{align}
for $N_r \geq N_t$,
where $\psi(\cdot)$ is the \emph{digamma} function defined as
\begin{align} \label{eq:Digamma}
\psi(j) = \left\{
\begin{array}{lr}
\sum_{k=1}^{j-1} \frac{1}{k} - \gamma     & {\rm for} \; \, j > 1 \\
-\gamma & {\rm for} \; \, j = 1
\end{array}
\right. \; \; .
\end{align}
\end{proof}

Recalling that the MMSE receiver behaves equivalently to the linear zero forcing (ZF) receiver at high SNR, we note that \emph{Proposition \ref{pr:IIDRayleigh_Proof}} could also be easily derived by starting with the high SNR MIMO ZF sum rate expression presented in \cite[Eq. 8.54]{tse05_book} for the case of i.i.d.\ Rayleigh fading channels.

Together with (\ref{eq:MMSEHighSNRSlope}), \emph{Proposition \ref{pr:IIDRayleigh_Proof}} indicates that if the number of transmit antennas is kept fixed and the number of receive antennas are increased, then, whilst having no effect on the high SNR slope, the high SNR achievable rate is improved through a reduction in the power offset. Intuitively, this is due to the additional received power captured by the extra antennas, and also to the enhanced interference cancelation capabilities afforded by the extra degrees of freedom in the receive array.  In fact, as $N_r \gg N_t$, then $\mathcal{L}_\infty^{\rm mmse} \to - \infty$ dB, confirming the intuition that the MMSE receiver completely mitigates the effect of fading (in the high SNR regime) as the number of degrees of freedom at the receiver greatly exceed the number of impeding interferers.

It is also worth noting that, based on (\ref{eq:L_ZF_final}) and (\ref{eq:L_ZF_equal}), one may conclude that increasing the number of transmit and receive antennas, whilst keeping their difference fixed, may have a deleterious effect on the achievable rate due to an increased high SNR power offset; especially when $N_r = N_t = n$.  However, care must be taken when interpreting this result.  In particular, since the high SNR slope (\ref{eq:MMSEHighSNRSlope}) also increases linearly with $N_t$, it turns out that the overall MMSE achievable sum rate actually increases with $n$.  This result is seen in Fig.\ \ref{fig:MMSE_SumCapacity_IID_Rayleigh}, where the high SNR MMSE achievable sum rate based on (\ref{eq:L_ZF_equal}) and (\ref{eq:HighSNRGeneral}) is presented for $n = 2$ and $n = 4$.  We see that the slope is greatest for the case $n = 4$, as expected; however the power offset, which determines the point at which the high SNR linear approximation intersects with the horizontal SNR axis, is smallest for the case $n = 2$.

As an aside, it is also important to note that although the general approximation (\ref{eq:HighSNRGeneral}) is formally valid in the regime of very high SNRs, Fig.\ \ref{fig:MMSE_SumCapacity_IID_Rayleigh} demonstrates good accuracy even for moderate SNR values (eg.\ within $20$ dB).

As the next result shows, the high SNR power offset (\ref{eq:L_ZF_final}) admits further simplifications in the ``large-antenna" regime.

\begin{corollary}
For i.i.d.\ Rayleigh faded MIMO channels, as the number of antennas grows with ratio $\beta = \frac{N_t}{N_r}$  (with $\beta \leq 1$), the high SNR power offset (\ref{eq:L_ZF_final}) converges to
\begin{align}
\mathcal{L}_\infty^{\rm mmse} & \to \log_2 \left( \frac{\beta}{1-\beta} \right)  \; .
\label{eq:L_MMSE_LargeSys}
\end{align}
\begin{proof}
The result is easily established upon noting that
\begin{align} \label{eq:psiAsymptote}
\sum_{\ell=1}^{n-1} \frac{1}{\ell} - \gamma = \psi(n) \sim \ln(n)
\end{align}
for large $n$.
\end{proof}
\end{corollary}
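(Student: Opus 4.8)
The plan is to start directly from the closed-form expression for the high SNR power offset established in \emph{Proposition \ref{pr:IIDRayleigh_Proof}}, namely
\begin{align}
\mathcal{L}_\infty^{\rm mmse} = \log_2 N_t - \log_2 e \left( \sum_{\ell=1}^{N_r-N_t} \frac{1}{\ell} - \gamma \right) ,
\end{align}
and to track the behaviour of each term as $N_r, N_t \to \infty$ with $N_t/N_r = \beta$ fixed and $\beta < 1$. The first step is to recognize that the bracketed finite harmonic sum is exactly $\psi(N_r - N_t + 1)$, by the definition (\ref{eq:Digamma}) of the digamma function. Since $\beta < 1$, the integer $N_r - N_t = (1-\beta) N_r$ grows without bound, so I can invoke the standard large-argument asymptotic $\psi(x) = \ln x + o(1)$ — precisely the fact recorded in (\ref{eq:psiAsymptote}) — to obtain $\log_2 e \,\psi(N_r - N_t + 1) = \log_2(N_r - N_t) + o(1)$.

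Substituting back then gives
\begin{align}
\mathcal{L}_\infty^{\rm mmse} = \log_2 N_t - \log_2(N_r - N_t) + o(1) = \log_2 \frac{N_t}{N_r - N_t} + o(1) \, .
\end{align}
Dividing numerator and denominator inside the logarithm by $N_r$ and using $N_t / N_r = \beta$ yields $\log_2 \frac{\beta}{1-\beta} + o(1)$, which is the claimed limit. The essential point to highlight is that the factor $N_r$ cancels cleanly, so that the limiting offset depends on the antenna numbers only through their ratio $\beta$, exactly as one expects from a large-system analysis.

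The only spot requiring a word of care is the edge case: the argument needs $N_r - N_t \to \infty$, which fails precisely at $\beta = 1$, where the empty sum leaves $\mathcal{L}_\infty^{\rm mmse} = \log_2(n e^\gamma) \to \infty$; this is consistent with $\log_2 \frac{\beta}{1-\beta} \to \infty$ as $\beta \uparrow 1$, so the statement is to be read for $\beta < 1$ with the boundary handled in the limiting sense. Apart from this, there is no real obstacle — the substantive work was already carried out in proving \emph{Proposition \ref{pr:IIDRayleigh_Proof}}, and this corollary is essentially a one-line repackaging of it via $\psi(x) \sim \ln x$.
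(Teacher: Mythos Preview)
Your proof is correct and follows essentially the same approach as the paper: the paper's entire argument is the one-line observation that $\sum_{\ell=1}^{n-1}\frac{1}{\ell}-\gamma=\psi(n)\sim\ln n$, and you have simply spelled out in detail how applying this to (\ref{eq:L_ZF_final}) yields the limit $\log_2\frac{\beta}{1-\beta}$. Your added remark on the $\beta=1$ boundary case is a nice clarification but not required by the paper.
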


Interestingly, we see that the high SNR power offset is unbounded for $\beta = 1$ (i.e.\ $N_r = N_t$); however, it converges for all $\beta < 1$, decreasing monotonically in $\beta$. We note that this expression agrees with a previous large-system result derived for MMSE receivers in the context of CDMA systems with random spreading \cite{shamai01}.

\begin{corollary} \label{th:ExcessPO_IID}
For i.i.d.\ Rayleigh faded MIMO channels, the excess high SNR power
offset is given by
\begin{align} \label{eq:IIDRayExcessPO}
\Delta_{\rm ex} &= \log_2e \left( \frac{N_r}{N_t} \sum_{\ell =
N_r - N_t + 1}^{N_r} \frac{1}{\ell} - 1 \right) \; .
\end{align}

For $N_r = N_t = n$, this reduces to
\begin{align}
\Delta_{\rm ex} = \log_2e \sum_{\ell = 2}^{n} \frac{1}{\ell} \; .
\end{align}
\end{corollary}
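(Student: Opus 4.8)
The plan is to work directly from the definition $\Delta_{\rm ex} = \mathcal{L}_\infty^{\rm mmse} - \mathcal{L}_\infty^{\rm opt}$ in (\ref{eq:DeltaDefn}), using the closed forms for the two power offsets. For $\mathcal{L}_\infty^{\rm mmse}$ I would take the i.i.d.\ Rayleigh expression (\ref{eq:L_ZF_final}) from \emph{Proposition \ref{pr:IIDRayleigh_Proof}}. For $\mathcal{L}_\infty^{\rm opt}$ I would specialize (\ref{eq:LPO_opt}) to i.i.d.\ Rayleigh with $N_r \geq N_t$ (so $n = N_t$), invoking the standard evaluation $E_{\mathbf{H}}[\mathcal{J}(N_r, N_t, \mathbf{H})] = \log_2 e \sum_{\ell=0}^{N_t-1} \psi(N_r-\ell)$ from \cite{grant02}, which gives $\mathcal{L}_\infty^{\rm opt} = \log_2 N_t - \frac{\log_2 e}{N_t}\sum_{\ell=0}^{N_t-1}\psi(N_r-\ell)$.

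Subtracting, the $\log_2 N_t$ terms cancel, and observing that $\sum_{\ell=1}^{N_r-N_t}\frac{1}{\ell} - \gamma = \psi(N_r-N_t+1)$ (directly from the definition (\ref{eq:Digamma})), one obtains $\Delta_{\rm ex} = \frac{\log_2 e}{N_t}\bigl[\sum_{\ell=0}^{N_t-1}\psi(N_r-\ell) - N_t\,\psi(N_r-N_t+1)\bigr]$. Next I would repeatedly apply the digamma recurrence $\psi(j+1) = \psi(j) + \tfrac{1}{j}$ to write $\psi(N_r-\ell) = \psi(N_r-N_t+1) + \sum_{j=N_r-N_t+1}^{N_r-\ell-1}\tfrac{1}{j}$ for each $\ell \in \{0,\dots,N_t-1\}$, the inner sum being empty when $\ell = N_t-1$. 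The $N_t$ copies of $\psi(N_r-N_t+1)$ then cancel the subtracted term, leaving $\Delta_{\rm ex} = \frac{\log_2 e}{N_t}\sum_{\ell=0}^{N_t-1}\sum_{j=N_r-N_t+1}^{N_r-\ell-1}\tfrac{1}{j}$.

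The remaining step is to interchange the order of summation. For fixed $j$ in the range $N_r-N_t+1 \le j \le N_r-1$, the outer index $\ell$ contributes precisely when $0 \le \ell \le N_r-1-j$, i.e.\ for exactly $N_r - j$ values (the constraint $\ell \le N_t-1$ being automatically slack since $j \ge N_r-N_t+1$). Hence $\Delta_{\rm ex} = \frac{\log_2 e}{N_t}\sum_{j=N_r-N_t+1}^{N_r-1}\tfrac{N_r-j}{j} = \frac{\log_2 e}{N_t}\sum_{j=N_r-N_t+1}^{N_r-1}\bigl(\tfrac{N_r}{j}-1\bigr)$; pulling the missing $j=N_r$ term into the harmonic-type sum (it contributes $N_r\cdot\tfrac1{N_r}=1$) and collecting the constants yields $\Delta_{\rm ex} = \log_2 e\bigl(\tfrac{N_r}{N_t}\sum_{\ell=N_r-N_t+1}^{N_r}\tfrac1\ell - 1\bigr)$, which is (\ref{eq:IIDRayExcessPO}). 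Setting $N_r = N_t = n$ immediately gives $\Delta_{\rm ex} = \log_2 e\bigl(\sum_{\ell=1}^n\tfrac1\ell - 1\bigr) = \log_2 e\sum_{\ell=2}^n\tfrac1\ell$.

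The only delicate part is the index bookkeeping in the double-sum interchange: correctly handling the empty inner sum at $\ell = N_t-1$, verifying the count $N_r-j$ of admissible $\ell$ for each $j$, and the reindexing that absorbs the $j=N_r$ term. Everything else is a direct substitution together with the digamma recurrence, so I expect no further obstacles.
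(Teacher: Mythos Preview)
Your argument is correct. Both you and the paper start from $\Delta_{\rm ex} = \mathcal{L}_\infty^{\rm mmse} - \mathcal{L}_\infty^{\rm opt}$ with $\mathcal{L}_\infty^{\rm mmse}$ taken from (\ref{eq:L_ZF_final}); the only difference is in how $\mathcal{L}_\infty^{\rm opt}$ is handled. The paper simply imports the already-simplified closed form from \cite[Eq.~15]{lozano05_jnl},
\[
\mathcal{L}_\infty^{\rm opt} = \log_2 N_t + \log_2 e\Bigl(\gamma - \sum_{\ell=1}^{N_r-N_t}\tfrac{1}{\ell} - \tfrac{N_r}{N_t}\sum_{\ell=N_r-N_t+1}^{N_r}\tfrac{1}{\ell} + 1\Bigr),
\]
after which the subtraction is immediate and no digamma manipulation is needed. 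You instead start from the raw digamma sum $\mathcal{L}_\infty^{\rm opt} = \log_2 N_t - \tfrac{\log_2 e}{N_t}\sum_{\ell=0}^{N_t-1}\psi(N_r-\ell)$ and carry out the harmonic-sum telescoping yourself via the recurrence $\psi(j+1)=\psi(j)+1/j$ and a double-sum interchange. Your route is longer but fully self-contained, effectively re-deriving the cited identity rather than invoking it; the paper's route is a one-line citation. The index bookkeeping you flag as delicate is done correctly.
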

\vspace*{0.5cm}
\begin{proof}
The result is obtained by substituting (\ref{eq:L_ZF_final}) and
\cite[Eq.15]{lozano05_jnl}
\begin{align}
\mathcal{L}_\infty^{\rm opt} &= \log_2 N_t + \log_2e \left( \gamma -
\sum_{\ell=1}^{N_r - N_t} \frac{1}{\ell} - \frac{N_r}{N_t}
\sum_{\ell = N_r-N_t+1}^{N_r} \frac{1}{\ell} + 1 \right)
\end{align}
into (\ref{eq:DeltaDefn}), and performing some basic algebraic manipulations.
\end{proof}

Note that an alternative expression for (\ref{eq:IIDRayExcessPO}) can also be obtained from \cite[Theorem 2]{Jindal_07_TrIT} and \cite[Eq. (15)]{Jindal_07_TrIT}, which considered the asymptotic excess rate offset of linear precoding in uncorrelated Rayleigh fading MIMO broadcast channels.

The excess high SNR power offset also admits a simplified characterization in the large-antenna regime.
\begin{corollary} \label{th:IIDRayleigh_LargeSys}
For i.i.d.\ Rayleigh faded MIMO channels, as the number of antennas grows with ratio $\beta = \frac{N_t}{N_r}$  (with $\beta \leq 1$), the excess high SNR power offset (\ref{eq:IIDRayExcessPO}) converges to
\begin{align}
\Delta_{\rm ex} & \to \frac{1}{\beta} \log_2 \left( \frac{1}{1-\beta} \right) - \log_2e \; .
\label{eq:Delta_LargeSys}
\end{align}
\end{corollary}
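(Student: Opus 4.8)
The plan is to start from the finite-dimensional expression for the excess power offset already established in Corollary \ref{th:ExcessPO_IID},
\begin{align}
\Delta_{\rm ex} = \log_2e \left( \frac{N_r}{N_t} \sum_{\ell=N_r-N_t+1}^{N_r} \frac{1}{\ell} - 1 \right) \; ,
\end{align}
and simply to evaluate the limit of the inner partial harmonic sum as $N_r, N_t \to \infty$ with $N_t/N_r \to \beta \leq 1$. First I would rewrite the sum in closed form in terms of the digamma function, $\sum_{\ell=N_r-N_t+1}^{N_r} \frac{1}{\ell} = \psi(N_r+1) - \psi(N_r-N_t+1)$, exactly as in the derivation of \emph{Proposition \ref{pr:IIDRayleigh_Proof}}.

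Next I would invoke the asymptotic relation $\psi(n) \sim \ln n$ recorded in \eqref{eq:psiAsymptote}. This is legitimate precisely because the assumption $\beta < 1$ forces $N_r - N_t \to \infty$, so both arguments of $\psi(\cdot)$ grow without bound and the Euler--Mascheroni constants cancel: $\psi(N_r+1) - \psi(N_r-N_t+1) \to \ln\!\left( \frac{N_r}{N_r-N_t} \right) = \ln\!\left( \frac{1}{1-\beta} \right)$. A fully equivalent and perhaps more transparent route is to treat $\frac{N_r}{N_t}\sum_{\ell=N_r-N_t+1}^{N_r}\frac{1}{\ell}$ as a Riemann sum: writing $x_\ell = \ell/N_r$ with uniform mesh $1/N_r$, one has $\frac{1}{N_r}\sum_{\ell=N_r-N_t+1}^{N_r}\frac{1}{x_\ell} \to \int_{1-\beta}^{1} \frac{dx}{x} = \ln\!\left(\frac{1}{1-\beta}\right)$, and then multiplying by $\frac{N_r}{N_t} \to \frac{1}{\beta}$ gives the same value. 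Substituting back,
\begin{align}
\Delta_{\rm ex} \to \log_2e \left( \frac{1}{\beta} \ln\!\left( \frac{1}{1-\beta} \right) - 1 \right) = \frac{1}{\beta} \log_2\!\left( \frac{1}{1-\beta} \right) - \log_2e \; ,
\end{align}
which is the claimed expression.

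There is no real obstacle here — the statement is a routine limit of a harmonic sum — but two minor points deserve care. First, the whole argument rests on $N_r - N_t \to \infty$, i.e.\ on $\beta$ being bounded strictly below $1$; at $\beta = 1$ the limit diverges, consistently with $\mathcal{L}_\infty^{\rm mmse}$ being unbounded there as noted after \eqref{eq:L_MMSE_LargeSys}. Second, if one prefers the Riemann-sum argument, one should note that $1/x$ is bounded and monotone on the fixed interval $[1-\beta,1]$, so the standard monotone-error bound makes the approximation error $O(1/N_r)$, which vanishes in the limit; this handles the convergence rigorously without any subtlety.
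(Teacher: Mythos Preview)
Your proposal is correct and follows essentially the same route as the paper's own proof, which simply states that the result follows trivially from (\ref{eq:IIDRayExcessPO}) upon employing the digamma asymptotic (\ref{eq:psiAsymptote}). Your additional Riemann-sum argument and the remark that the limit requires $\beta < 1$ (diverging at $\beta = 1$) are sound elaborations beyond what the paper provides.
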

\begin{proof}
The result is derived trivially from (\ref{eq:IIDRayExcessPO}) upon employing (\ref{eq:psiAsymptote}).
\end{proof}

Again, we note that this expression agrees with a previous large-system result derived in \cite{shamai01}, which considered the context of randomly-spread CDMA systems.

%

\subsubsection{Low SNR Analysis}

Recall that in the low SNR regime, the key channel-dependant parameter is the wideband slope.

\begin{proposition} \label{pr:IIDRayleigh_LowSNR}
For i.i.d.\ Rayleigh faded channels, the wideband slope for
MIMO MMSE receivers is given by
\begin{align} \label{eq:WidebandSlope_MMSE}
S_0^{\rm mmse} = \frac{2 N_r N_t}{2 N_t + N_r - 1} \; .
\end{align}
\end{proposition}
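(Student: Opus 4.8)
The plan is to reduce everything to a single dispersion computation. By \emph{Corollary \ref{corr:MMSESumCapacity_LowSNR}}, for an i.i.d.\ channel the wideband slope is already expressed in (\ref{eq:WideSlope_MMSE_IID}) purely in terms of $\zeta(\mathbf{H}\mathbf{H}^\dagger)$ and $\zeta(\mathbf{H}_1\mathbf{H}_1^\dagger)$, so the entire task is to evaluate the dispersion of an $N_r\times N_r$ uncorrelated central complex Wishart matrix formed from a rectangular i.i.d.\ Gaussian matrix, for two different numbers of columns ($N_t$ and $N_t-1$), and then to substitute and simplify.

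First I would compute the two expectations entering the dispersion definition (\ref{eq:dispersionDefn}) with $N=N_r$ and $\mathbf{\Theta}=\mathbf{H}\mathbf{H}^\dagger$. Since $\mathbf{H}$ has $N_rN_t$ independent unit-variance entries, $E[\mathrm{tr}(\mathbf{H}\mathbf{H}^\dagger)]=N_rN_t$ is immediate. For the second moment I would write $\mathrm{tr}\big((\mathbf{H}\mathbf{H}^\dagger)^2\big)=\sum_{i,j,k,\ell}H_{ij}H_{kj}^{*}H_{k\ell}H_{i\ell}^{*}$ and apply the Gaussian (Wick/Isserlis) factorization of the fourth-order moments — or simply invoke the known Wishart trace moment — to get $E\big[\mathrm{tr}\big((\mathbf{H}\mathbf{H}^\dagger)^2\big)\big]=N_rN_t(N_r+N_t)$. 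This yields $\zeta(\mathbf{H}\mathbf{H}^\dagger)=N_r\,\frac{N_rN_t(N_r+N_t)}{(N_rN_t)^2}=\frac{N_r+N_t}{N_t}$, which is consistent with the dispersion reported in \cite{lozano03} and with $S_0^{\rm opt}=2N_rN_t/(N_r+N_t)$.

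Next, because deleting one column leaves $\mathbf{H}_1$ i.i.d.\ of size $N_r\times(N_t-1)$, the identical computation with $N_t$ replaced by $N_t-1$ gives $\zeta(\mathbf{H}_1\mathbf{H}_1^\dagger)=\frac{N_r+N_t-1}{N_t-1}$. Substituting both dispersions into (\ref{eq:WideSlope_MMSE_IID}), the denominator becomes $N_r+N_t-\frac{(N_t-1)(N_r+N_t-1)}{N_t}=\frac{2N_t+N_r-1}{N_t}$ after collecting terms, which immediately delivers (\ref{eq:WidebandSlope_MMSE}). The only step with genuine content is establishing $E\big[\mathrm{tr}\big((\mathbf{H}\mathbf{H}^\dagger)^2\big)\big]=N_rN_t(N_r+N_t)$ (equivalently, that the i.i.d.\ channel has dispersion $1+N_r/N_t$); everything after that is a few lines of algebra, so that moment evaluation is the main — and only mild — obstacle.
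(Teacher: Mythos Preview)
Your proposal is correct and follows essentially the same route as the paper: compute the two dispersions $\zeta(\mathbf{H}\mathbf{H}^\dagger)=(N_r+N_t)/N_t$ and $\zeta(\mathbf{H}_1\mathbf{H}_1^\dagger)=(N_r+N_t-1)/(N_t-1)$ and substitute into (\ref{eq:WideSlope_MMSE_IID}). The only cosmetic difference is that the paper simply cites \cite[Lemma~6]{lozano03} for these dispersion values, whereas you sketch the underlying Wishart trace-moment calculation yourself.
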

\vspace*{0.5cm}
\begin{proof}
For i.i.d.\ Rayleigh fading, using \cite[Lemma 6]{lozano03}, we find that
\begin{align}
\zeta \left( \mathbf{H} \mathbf{H}^\dagger \right) = \frac{N_r + N_t}{ N_t}  ,  \; \hspace*{1cm} \zeta ( \mathbf{H}_1 \mathbf{H}_1^\dagger ) = \frac{N_r + N_t - 1}{ N_t - 1}
\; .
\label{eq:IIDRay_Dispersion}
\end{align}
Substituting (\ref{eq:IIDRay_Dispersion}) into (\ref{eq:WideSlope_MMSE_IID}) leads to the result.
\end{proof}

This agrees with a recent result obtained via different methods in \cite{Louie08}.
%
It is interesting to compare (\ref{eq:WidebandSlope_MMSE}) with the corresponding wideband slope for optimal MIMO reception, given in \cite{lozano03} as
\begin{align} \label{eq:WidebandSlope_Optimal}
S_0^{\rm opt} = \frac{ 2 N_r N_t}{N_t + N_r} \; \; .
\end{align}
In Fig.\ \ref{fig:LowSNR_IID_Rayleigh},  the low SNR achievable rate approximations for MMSE and optimal receivers are presented, based on (\ref{eq:WidebandSlope_MMSE}) and (\ref{eq:WidebandSlope_Optimal}) respectively. The curves are shown as a function of received $\frac{E_b}{N_0}$, for a system with $N_r = N_t = 3$.  In both cases, the corresponding exact low SNR curves are also presented for further comparison, obtained by numerically solving (\ref{eq:LowSNRDefn}) and (\ref{eq:EbNoNonLinear}).  The figure shows that the linear approximations are accurate over a quite moderate range of $\frac{E_b^r}{N_0}$ values, especially for the MMSE receiver.

\begin{figure} \centering
\includegraphics[width=0.7\columnwidth]{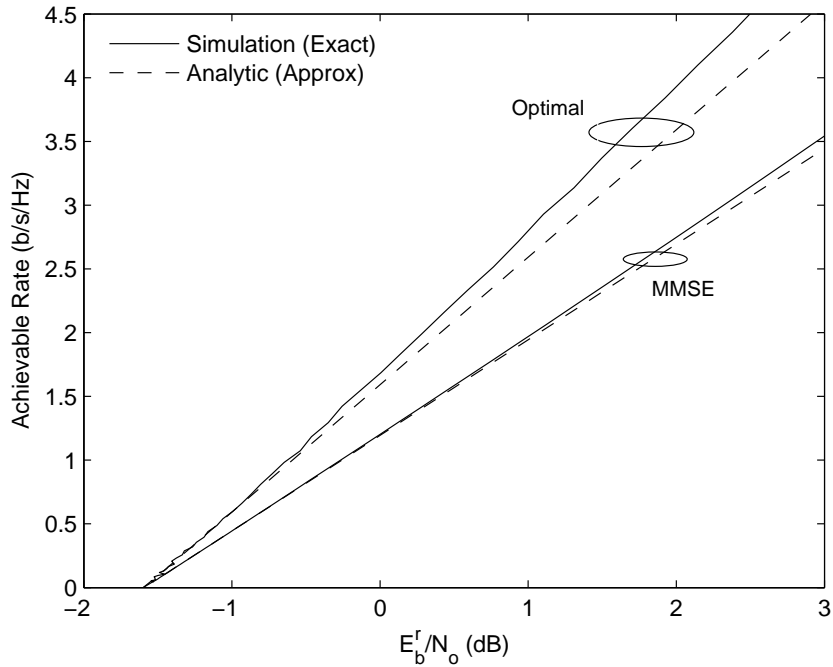}
\caption{Comparison of the spectral efficiency of a MIMO system with optimal and MMSE receivers in i.i.d.\ Rayleigh fading channels. Results are shown as a function of received $\frac{E_b}{N_0}$, for $N_r = N_t = 3$.}
 \label{fig:LowSNR_IID_Rayleigh}
\end{figure}

Clearly, from (\ref{eq:WidebandSlope_MMSE}), $S_0^{\rm mmse}$ is increasing in both $N_r$ and $N_t$, with the rate of increase being more significant for $N_r$.  This is in contrast to $S_0^{\rm opt}$, in which case both $N_r$ and $N_t$ play symmetric roles.  We also see that
\begin{align}
\frac{ S_0^{\rm mmse} }{ S_0^{\rm opt} } = \frac{N_t + N_r}{ 2 N_t + N_r - 1}  \;  .
\end{align}
This ratio is increasing in $N_r$ and decreasing in $N_t$, satisfying
\begin{align} \label{eq:ratio}
\frac{1}{2} \; \;  \leq \; \; \frac{N_t + N_r}{ 2 N_t + N_r - 1}  \; \; \leq \;  \; 1
\end{align}
where the lower bound is approached as $N_t \to \infty$ for fixed $N_r$, confirming that, relative to optimal receivers, MMSE receivers take a significant hit in the low SNR regime when the number of transmit antennas exceed the number of receive antennas. This is due primarily to the limited interference suppression capabilities of the receive array in this ``overloaded" scenario. On the other hand, the upper bound is achieved strictly for $N_t = 1$. It is also approached as ${N_r \to \infty}$ for fixed $N_t$,
revealing the intuitive notion that linear MIMO MMSE receivers perform near-optimally if the number of receive antennas are much larger than the number of transmit antennas, due, once again, to the additional captured received power and the enhanced interference suppression capabilities of the receive array.

Finally, it is interesting to consider the large-antenna regime.
\begin{corollary}
For the i.i.d.\ Rayleigh faded channel, as the number of antennas grows with ratio $\beta = \frac{N_t}{N_r}$, the ratio between the MMSE wideband slope (\ref{eq:WidebandSlope_MMSE}) and the optimal wideband slope (\ref{eq:WidebandSlope_Optimal}) converges to
\begin{align}
\frac{ S_0^{\rm mmse} }{ S_0^{\rm opt} } \to \frac{1 + \beta}{1 + 2 \beta} \; ,
\label{eq:ratio_S0}
\end{align}
which, interestingly, for $\beta = 1$ (i.e.\ $N_r = N_t$) gives
\begin{align}
\frac{ S_0^{\rm mmse} }{ S_0^{\rm opt} } \to \frac{2}{3} \; .
\label{eq:ratio_S0_simple}
\end{align}
\end{corollary}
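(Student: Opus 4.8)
The plan is to start from the closed-form ratio already obtained in the discussion preceding \eqref{eq:ratio}, namely
\begin{align}
\frac{S_0^{\rm mmse}}{S_0^{\rm opt}} = \frac{N_t + N_r}{2 N_t + N_r - 1} \; ,
\end{align}
which follows directly from \eqref{eq:WidebandSlope_MMSE} and \eqref{eq:WidebandSlope_Optimal}. Since both wideband slopes are available in simple rational form, no further random-matrix machinery is needed; the corollary is purely a matter of evaluating the limit of this rational expression as the antenna dimensions grow.

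Next I would let $N_t, N_r \to \infty$ with $N_t / N_r = \beta$ held fixed, and divide numerator and denominator by $N_r$. The numerator becomes $\beta + 1$, while the denominator becomes $2\beta + 1 - 1/N_r$, so the $-1$ constant in the original denominator is asymptotically negligible and the $1/N_r$ term vanishes. This immediately yields
\begin{align}
\frac{S_0^{\rm mmse}}{S_0^{\rm opt}} \to \frac{1 + \beta}{1 + 2\beta} \; ,
\end{align}
which is \eqref{eq:ratio_S0}. Substituting $\beta = 1$ then gives $\tfrac{2}{3}$, establishing \eqref{eq:ratio_S0_simple}.

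There is no real obstacle here: the only point worth a line of care is making explicit that the limit is taken in the standard large-system sense (both $N_t$ and $N_r$ growing to infinity with their ratio fixed at $\beta \le 1$), so that the bounded additive constant $-1$ in the denominator does not survive the normalization. Everything else is an elementary algebraic limit.
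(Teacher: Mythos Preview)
Your proposal is correct and is exactly the natural argument; the paper in fact states this corollary without proof, treating it as an immediate consequence of the ratio $\frac{N_t+N_r}{2N_t+N_r-1}$ already established just before \eqref{eq:ratio}. Your normalization by $N_r$ and observation that the $-1/N_r$ term vanishes is precisely the elementary limit the paper leaves implicit.
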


\subsection{Correlated Rayleigh Fading}

We now particularize the general results of Section \ref{sec:GenResults} to spatially-correlated Rayleigh fading channels,  representative of non-line-of-sight environments with a lack of scattering around the transmitter and/or receiver, or with closely spaced antennas (with respect to the wavelength of the signal). We consider the popular ``separable" correlation model, described by
\begin{align} \label{eq:CorrRay_Model}
\mathbf{H} \sim \mathcal{CN}_{N_r, N_t} \left( \mathbf{0}, \mathbf{R} \otimes \mathbf{S} \right) \; ,
\end{align}
where $\mathbf{R}$ and $\mathbf{S}$ are Hermitian positive-definite matrices which represent, respectively, the receive and transmit spatial correlation.  This model, commonly adopted due to its analytic tractability, has also been confirmed through various measurement campaigns\footnote{Note that in some cases deviations from this model have also been observed \cite{Ozcelik03}.} (see, eg.\ \cite{Martin00,Kermoal02}).

It is important to note that, to our knowledge, the results in this section present the first analytical investigation of the achievable sum rate of MIMO MMSE receivers in the presence of spatial correlation.

\subsubsection{Exact Analysis}

For our exact analysis, we focus on \emph{semi-correlated} scenarios, allowing for spatial correlation at either the transmitter or receiver (but not both). We note, however, that the same approach can also be applied to derive closed-form solutions for the more general case in (\ref{eq:CorrRay_Model}), i.e.\ allowing for correlation at \emph{both} the transmitter and receiver, by employing the MIMO ergodic mutual information results for such channels established in \cite{kiessling04,simon_04_submit}.  The final expressions, however, involve more cumbersome notation compared with the semi-correlated results, and as such we choose to omit them here. (Note that the more general model (\ref{eq:CorrRay_Model}) will be explicitly considered in the following subsections, when focusing on asymptotic SNR regimes.)
Throughout this subsection, we will denote the spatial correlation matrix, either receive or transmit, by the generic symbol $\mathbf{L}$.

Before addressing the achievable sum rate of MMSE receivers, it is convenient to give the following new result for the ergodic mutual information with optimal MIMO receivers, which simplifies and unifies prior expressions in the literature for semi-correlated Rayleigh fading.

\begin{lemma}  \label{le:CorrRayleigh}
Consider the transmit-correlated Rayleigh channel $\mathbf{H} \sim \mathcal{CN}_{N_r, N_t} ( \mathbf{0}, \mathbf{I}_{N_r} \otimes \mathbf{L})$, or receive-correlated Rayleigh channel $\mathbf{H} \sim \mathcal{CN}_{N_r, N_t} ( \mathbf{0}, \mathbf{L} \otimes \mathbf{I}_{N_t} )$, where the spatial correlation matrix $\mathbf{L}$ has dimension $q \times q$ (i.e.\ for transmit-correlation $q = N_t$, for receive-correlation $q = N_r$), with eigenvalues $\beta_1 > \cdots > \beta_q$. Also, let $p \in \{ N_r, N_t\} \backslash q$.
Then the ergodic MIMO mutual information of $\mathbf{H}$ with isotropic inputs and optimal receivers is given by
\begin{align}
I^{\rm opt} (N_r, N_t, {\rm snr}) = \frac{ \log_2e }{ \prod_{\ell < k}^q ( \beta_k - \beta_\ell ) }  \sum_{k = q - n + 1}^q \det \mathbf{E}_{p,q}(k)
\end{align}
where $\mathbf{E}_{p,q}(k)$ is a $q \times q$ matrix with $(s,t)$th entry
\begin{align} \label{eq:Ek_defn}
\left( \mathbf{E}_{p,q}(k) \right)_{s,t} = \left\{
\begin{array}{ll}
\beta_{s}^{t-1} &, \, t \neq k \\
\beta_s^{t-1} e^{\frac{N_t}{ \beta_s {\rm snr}}} \sum_{h = 1}^{p-q+t} {\rm E}_h \left( \frac{N_t}{\beta_s {\rm snr} } \right)  &, \,   t = k \\
\end{array}
\right. \; \; .
\end{align}
\end{lemma}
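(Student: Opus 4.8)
The plan is to evaluate the ergodic mutual information $I^{\rm opt}=E_{\mathbf H}\big[\log_2\det(\mathbf I_{N_r}+\tfrac{\rm snr}{N_t}\mathbf H\mathbf H^\dagger)\big]$ directly from the joint eigenvalue density of the relevant correlated Wishart matrix, and then collapse the resulting multidimensional integral to a single determinant via the standard determinant‑expansion technique (see, e.g., \cite{chiani03}). First I would use $\det(\mathbf I_{N_r}+\tfrac{\rm snr}{N_t}\mathbf H\mathbf H^\dagger)=\det(\mathbf I_{N_t}+\tfrac{\rm snr}{N_t}\mathbf H^\dagger\mathbf H)$ to work, in both the transmit‑ and receive‑correlated cases, with whichever of $\mathbf H\mathbf H^\dagger$, $\mathbf H^\dagger\mathbf H$ is the $q\times q$ matrix; under the Kronecker model this is a correlated complex Wishart matrix with covariance $\mathbf L$ and $p$ degrees of freedom. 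Writing $\lambda_1>\cdots>\lambda_n$ for its nonzero eigenvalues (full rank $q=n$ when $p\ge q$; rank $n=p$ with $q-n$ zero eigenvalues when $p<q$) and noting the zero eigenvalues contribute $\log_21=0$, the quantity to evaluate is $I^{\rm opt}=E\big[\sum_{i=1}^n\log_2(1+\tfrac{\rm snr}{N_t}\lambda_i)\big]$.

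Next I would insert the known joint density of the ordered eigenvalues of a (possibly rank‑deficient) correlated complex Wishart matrix with distinct covariance eigenvalues $\beta_1>\cdots>\beta_q$, which has the ``product of two determinants'' form $K\,\Delta_n\,\det[\mathbf{\Phi}]/\prod_{\ell<k}^q(\beta_k-\beta_\ell)$, where $\Delta_n$ is a Vandermonde determinant in the $\lambda_i$ and $\mathbf{\Phi}$ is $q\times q$: its first $q-n$ columns depend only on the $\beta_s$ (the columns forced by the degeneracy when $p<q$, absent when $q=n$), while its remaining $n$ columns carry entries of the type $\lambda_t^{\,p-q}e^{-\lambda_t/\beta_s}$ up to $\beta$‑powers. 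The standard linear‑statistics lemma for such ensembles then gives $E\big[\sum_k g(\lambda_k)\big]=\sum_k\det\mathbf{G}_k$, where $\mathbf{G}_k$ is obtained by replacing each $\lambda$‑dependent column of $\mathbf{\Phi}$ by its $\lambda$‑integral and the $k$‑th such column instead by $\int_0^\infty g(\lambda)(\cdot)\,d\lambda$, the $\beta$‑only columns passing through unchanged; this is exactly why the outer summation index runs over $k=q-n+1,\dots,q$.

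The two one‑dimensional integrals needed are $\int_0^\infty\lambda^{a}e^{-\lambda/\beta}\,d\lambda=a!\,\beta^{a+1}$ for the unmodified columns and the classical identity $\int_0^\infty\log_2(1+\tfrac{\rm snr}{N_t}\lambda)\lambda^{a}e^{-\lambda/\beta}\,d\lambda=\log_2e\;a!\,\beta^{a+1}e^{N_t/(\beta\,{\rm snr})}\sum_{h=1}^{a+1}{\rm E}_h(N_t/(\beta\,{\rm snr}))$ for the modified one, with $a=p-q+t-1$ in column $t$. Extracting the common row and column factors from every $\mathbf{G}_k$ — the column factorials multiply to a complex multivariate gamma function that cancels the corresponding factor in $K$, and the surviving $\beta$‑powers are absorbed — leaves precisely the prefactor $\log_2e/\prod_{\ell<k}^q(\beta_k-\beta_\ell)$, the unmodified entries collapse to $\beta_s^{t-1}$, and the modified entry in column $k$ becomes $\beta_s^{t-1}e^{N_t/(\beta_s\,{\rm snr})}\sum_{h=1}^{p-q+t}{\rm E}_h(N_t/(\beta_s\,{\rm snr}))$, i.e.\ the matrix $\mathbf{E}_{p,q}(k)$ of (\ref{eq:Ek_defn}). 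As a sanity check, removing all $n$ modified columns would reduce the determinant to the Vandermonde $\prod_{\ell<k}^q(\beta_k-\beta_\ell)$, cancelling the denominator and giving $I^{\rm opt}=0$, consistent with ${\rm snr}\to0$.

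All three ingredients — the correlated‑Wishart eigenvalue density, the determinant‑expansion lemma, and the log‑times‑exponential integral — are available off the shelf, so the real difficulty is bookkeeping: carrying the degenerate ``filler'' columns through the rank‑deficient ($p<q$) case so that one formula with the range $k=q-n+1,\dots,q$ covers both $q=n$ and $q=m$, tracking all the row/column prefactors, and verifying that the density's normalization constant cancels cleanly to yield the stated closed form. An alternative, lighter route is to start from existing semi‑correlated ergodic‑capacity expressions (such as those in \cite{shin03,kang04}) and massage those determinants into the unified form above, but the direct derivation is more self‑contained and makes the structure transparent.
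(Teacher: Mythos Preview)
Your proposal is correct, but it takes a different (and heavier) route than the paper.  The paper's proof in Appendix~\ref{app:CorrRayleigh} does not touch the joint eigenvalue law at all: it writes the mutual information as $n\int_0^\infty \log_2(1+\tfrac{\rm snr}{N_t}\lambda)\,f_\lambda(\lambda)\,d\lambda$ in terms of the \emph{marginal} density of a single unordered nonzero eigenvalue, imports a ready-made unified expression for $f_\lambda$ from \cite{jin07_IT} that already covers both $p\ge q$ and $p<q$, and then performs a single one-dimensional integral using the identity from \cite{alfano04_2}.  All of the ``filler-column'' bookkeeping you anticipate for the rank-deficient case has, in effect, been pushed into the cited marginal-density result.

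What each approach buys: the paper's route is a three-line argument once the external results are granted, but it hides the mechanism inside two citations.  Your route---joint density plus the Andr\'eief/Chiani--Win--Zanella determinant-expansion lemma---is longer and, as you note, demands careful tracking of the normalization constant and the degenerate columns when $p<q$, but it is self-contained and makes transparent exactly why the summation index runs over $k=q-n+1,\dots,q$ and why the entries of $\mathbf{E}_{p,q}(k)$ take the stated form.  Either is a perfectly acceptable proof; if you want to match the paper, simply cite the marginal density from \cite{jin07_IT} and integrate once.
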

\begin{proof}
See Appendix \ref{app:CorrRayleigh}.
\end{proof}

It is important to note that \emph{Lemma \ref{le:CorrRayleigh}} allows the correlation to occur between the transmit or receive antennas, and places no restrictions on the system dimensions.  This is in contrast to prior analyzes (see \cite{smith03,chiani03,alfano04}) which have given a separate treatment depending on whether the correlation occurs at the end of the link with the least or most number of antennas.

Given \emph{Lemma \ref{le:CorrRayleigh}}, we can now obtain exact closed-form solutions for the achievable sum rate of MIMO MMSE receivers in semi-correlated Rayleigh fading.  It is convenient to treat the cases of transmit and receive correlation separately.

\begin{proposition} \label{pr:RxCorr}
Let $\mathbf{H} \sim \mathcal{CN}_{N_r, N_t} ( \mathbf{0}, \mathbf{L} \otimes \mathbf{I}_{N_t} )$, with $\mathbf{L}$ defined as above. 
Then the MMSE achievable sum rate is given by
\begin{align}
I^{\rm mmse}(N_r, N_t, {\rm snr}) = \frac{ N_t \log_2e }{ \prod_{\ell < k}^{N_r} ( \beta_k - \beta_\ell ) } \left(  \sum_{k = N_r - n + 1}^{N_r} \det \mathbf{E}_{N_t,N_r}(k) - \sum_{k = N_r - n' + 1}^{N_r} \det \mathbf{E}_{N_t-1,N_r}(k) \right) \; .
\label{eq:RxCorrelation}
\end{align}
\end{proposition}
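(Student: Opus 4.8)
The plan is to derive (\ref{eq:RxCorrelation}) by feeding the ergodic mutual information formula of \emph{Lemma~\ref{le:CorrRayleigh}} into the general decomposition of \emph{Theorem~\ref{th:MMSESumCapacity_Main}}, after first exploiting a symmetry peculiar to the receive-correlated model. Starting from (\ref{eq:MMSESumCapacity_Main}), the achievable sum rate is the difference of $N_t E_{\mathbf{H}}[I^{\rm opt}({\rm snr},N_r,N_t,\mathbf{H})]$ and the sum over $i$ of $E_{\mathbf{H}_i}[I^{\rm opt}(\frac{N_t-1}{N_t}{\rm snr},N_r,N_t-1,\mathbf{H}_i)]$. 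The first step is to observe that when $\mathbf{H}\sim\mathcal{CN}_{N_r,N_t}(\mathbf{0},\mathbf{L}\otimes\mathbf{I}_{N_t})$ the columns are i.i.d.\ with common covariance $\mathbf{L}$, so deleting any column gives $\mathbf{H}_i\sim\mathcal{CN}_{N_r,N_t-1}(\mathbf{0},\mathbf{L}\otimes\mathbf{I}_{N_t-1})$ \emph{independently of $i$}. Hence every term in the sum is the same and
\begin{align}
I^{\rm mmse}(N_r,N_t,{\rm snr}) = N_t\Big( E_{\mathbf{H}}\big[I^{\rm opt}({\rm snr},N_r,N_t,\mathbf{H})\big] - E_{\mathbf{H}_1}\big[I^{\rm opt}(\tfrac{N_t-1}{N_t}{\rm snr},N_r,N_t-1,\mathbf{H}_1)\big] \Big). \nonumber
\end{align}

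The second step is to apply \emph{Lemma~\ref{le:CorrRayleigh}} to each of the two averaged quantities, in both cases with the correlated dimension $q=N_r$ (so the Vandermonde-type prefactor is $\frac{\log_2 e}{\prod_{\ell<k}^{N_r}(\beta_k-\beta_\ell)}$ and the summation runs up to $N_r$). For the full channel the ``uncorrelated'' dimension is $p=N_t$, $n=\min(N_r,N_t)$, giving the matrix $\mathbf{E}_{N_t,N_r}(k)$ and the sum $\sum_{k=N_r-n+1}^{N_r}$; for the reduced channel it is $p=N_t-1$, $n'=\min(N_r,N_t-1)$, giving $\mathbf{E}_{N_t-1,N_r}(k)$ and $\sum_{k=N_r-n'+1}^{N_r}$. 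The point that makes everything collapse cleanly is the scaling ${\rm snr}\mapsto\frac{N_t-1}{N_t}{\rm snr}$ built into \emph{Theorem~\ref{th:MMSESumCapacity_Main}}: in the entries (\ref{eq:Ek_defn}) the exponential-integral argument is (number of transmit antennas)$/(\beta_s\cdot{\rm snr})$, and for the reduced system this is $\frac{N_t-1}{\beta_s\cdot\frac{N_t-1}{N_t}{\rm snr}}=\frac{N_t}{\beta_s{\rm snr}}$, i.e.\ identical to the full system. Thus $\mathbf{E}_{N_t,N_r}(k)$ and $\mathbf{E}_{N_t-1,N_r}(k)$ share the same ${\rm E}_h(\cdot)$ and exponential arguments $\frac{N_t}{\beta_s{\rm snr}}$, differing only through the upper limit $p-q+t$ of the inner sum; and since the prefactor depends only on the eigenvalues of $\mathbf{L}$ it is common to both terms and can be pulled out of the bracket. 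Collecting the two contributions with the overall factor $N_t$ produces precisely (\ref{eq:RxCorrelation}).

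Finally I would include the small consistency check that the normalization $E_{\mathbf{H}}[{\rm tr}(\mathbf{H}\mathbf{H}^\dagger)]=N_rN_t$ forces ${\rm tr}(\mathbf{L})=N_r$, and that this same $\mathbf{L}$ also correctly normalizes the $N_r\times(N_t-1)$ channel $\mathbf{H}_1$, so that \emph{Lemma~\ref{le:CorrRayleigh}} applies verbatim to both expectations. The only real hazard in the argument is the index bookkeeping: correctly identifying $N_r$ (not $N_t$) as the role of $q$ throughout, tracking $n$ versus $n'$ in the summation limits of the two $\mathbf{E}$-matrices, and verifying the snr-rescaling identity $\frac{N_t-1}{(N_t-1)/N_t\cdot{\rm snr}}=\frac{N_t}{{\rm snr}}$ that is responsible for the two matrices sharing arguments and for the prefactor factoring out. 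None of this is deep, but it is where a slip would occur.
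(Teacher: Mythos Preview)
Your proposal is correct and follows the same approach as the paper: apply \emph{Theorem~\ref{th:MMSESumCapacity_Main}}, observe that $\mathbf{H}_i\sim\mathcal{CN}_{N_r,N_t-1}(\mathbf{0},\mathbf{L}\otimes\mathbf{I}_{N_t-1})$ for every $i$, and evaluate both expectations via \emph{Lemma~\ref{le:CorrRayleigh}}. Your explicit verification that the SNR rescaling in \emph{Theorem~\ref{th:MMSESumCapacity_Main}} causes the exponential-integral arguments in $\mathbf{E}_{N_t,N_r}(k)$ and $\mathbf{E}_{N_t-1,N_r}(k)$ to coincide is a useful detail that the paper leaves implicit.
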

\begin{proof}
The first term in (\ref{eq:MMSESumCapacity_Main}) is evaluated directly from \emph{Lemma \ref{le:CorrRayleigh}}.  The remaining terms are directly inferred from \emph{Lemma \ref{le:CorrRayleigh}}, upon noting that $\mathbf{H}_i \sim \mathcal{CN}_{N_r, N_t-1}(\mathbf{0}, \mathbf{L} \otimes \mathbf{I}_{N_t-1})$.
\end{proof}

\begin{proposition} \label{pr:TxCorr}
Let $\mathbf{H} \sim \mathcal{CN}_{N_r, N_t} ( \mathbf{0}, \mathbf{I}_{N_r} \otimes \mathbf{L})$, with $\mathbf{L}$ defined as above.
Also, let $\mathbf{L}^{ii}$ denote the $(i,i)$th minor of $\mathbf{L}$, with eigenvalues $\beta_{i,1} > \cdots > \beta_{i,N_t-1}$.  Then the MMSE achievable sum rate is given by
\begin{align}
I^{\rm mmse}(N_r, N_t, {\rm snr}) &= \frac{ N_t \log_2e }{ \prod_{\ell < k}^{N_t} ( \beta_k - \beta_\ell ) }  \sum_{k = N_t - n + 1}^{N_t} \det \mathbf{E}_{N_r,N_t}(k) \nonumber \\
 & \hspace*{2cm} - \sum_{i=1}^{N_t} \frac{ \log_2e }{ \prod_{\ell < k}^{N_t-1} ( \beta_{i,k} - \beta_{i,\ell} ) }  \sum_{k = N_t - n'}^{N_t-1} \det \mathbf{E}_{N_r, N_t-1}(k,i) \; ,
\label{eq:TxCorrelation}
\end{align}
where $\mathbf{E}_{N_r, N_t-1}(k,i)$ is defined as in (\ref{eq:Ek_defn}), but with $\beta_{i,k}$ replacing $\beta_{k}$.
\end{proposition}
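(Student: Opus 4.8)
The plan is to obtain (\ref{eq:TxCorrelation}) by feeding the ergodic mutual information formula of \emph{Lemma \ref{le:CorrRayleigh}} into the general identity (\ref{eq:MMSESumCapacity_Main}) of \emph{Theorem \ref{th:MMSESumCapacity_Main}}. That identity writes $I^{\rm mmse}$ as $N_t$ copies of the full-channel ergodic mutual information, $E_{\mathbf{H}}[I^{\rm opt}({\rm snr},N_r,N_t,\mathbf{H})]$, minus a sum over the transmit index $i$ of the ergodic mutual informations of the deflated channels $\mathbf{H}_i$ evaluated at SNR $\frac{N_t-1}{N_t}{\rm snr}$. Since $\mathbf{H}\sim\mathcal{CN}_{N_r,N_t}(\mathbf{0},\mathbf{I}_{N_r}\otimes\mathbf{L})$ is exactly the transmit-correlated Rayleigh channel covered by \emph{Lemma \ref{le:CorrRayleigh}} with correlation matrix of dimension $q=N_t$ and $p=N_r$, the first term of (\ref{eq:MMSESumCapacity_Main}) maps verbatim onto the first line of (\ref{eq:TxCorrelation}), including the summation range $k=N_t-n+1,\dots,N_t$ with $n=\min(N_r,N_t)$.

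The substantive step is identifying the law of the deflated channel. Because the transmit correlation acts on the columns of $\mathbf{H}$, deleting the $i$th column removes the $i$th row and column of $\mathbf{L}$, so $\mathbf{H}_i\sim\mathcal{CN}_{N_r,N_t-1}(\mathbf{0},\mathbf{I}_{N_r}\otimes\mathbf{L}^{ii})$, where $\mathbf{L}^{ii}$ is the $(i,i)$th minor with eigenvalues $\beta_{i,1}>\cdots>\beta_{i,N_t-1}$. Applying \emph{Lemma \ref{le:CorrRayleigh}} to $\mathbf{H}_i$ with $q=N_t-1$, $p=N_r$, correlation eigenvalues $\{\beta_{i,k}\}$, and SNR $\frac{N_t-1}{N_t}{\rm snr}$, the argument of every exponential integral is $\frac{N_t-1}{\beta_{i,s}}\cdot\frac{1}{(N_t-1){\rm snr}/N_t}=\frac{N_t}{\beta_{i,s}{\rm snr}}$; that is, the $\frac{N_t-1}{N_t}$ rescaling is precisely what keeps the per-stream SNR fixed, so the reused template carries $\frac{N_t}{\beta_{i,s}{\rm snr}}$ and not $\frac{N_t-1}{\beta_{i,s}{\rm snr}}$. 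Hence the $i$th subtracted term becomes $\frac{\log_2 e}{\prod_{\ell<k}^{N_t-1}(\beta_{i,k}-\beta_{i,\ell})}\sum_{k}\det\mathbf{E}_{N_r,N_t-1}(k,i)$ with $\mathbf{E}_{N_r,N_t-1}(k,i)$ as in (\ref{eq:Ek_defn}) but with $\beta_{i,k}$ replacing $\beta_k$, and summation range $k=(N_t-1)-n'+1=N_t-n'$ up to $N_t-1$ since $n'=\min(N_r,N_t-1)$. Summing over $i$ and combining with the first term yields (\ref{eq:TxCorrelation}).

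The main obstacle I anticipate is the bookkeeping around the deflated channel: one must justify that removing a column of $\mathbf{H}$ is statistically equivalent to the Kronecker model driven by the principal minor $\mathbf{L}^{ii}$, and then carefully track the re-indexing of eigenvalues (the $\beta_{i,k}$ are eigenvalues of $\mathbf{L}^{ii}$, not of $\mathbf{L}$) together with the SNR cancellation so that $\mathbf{E}_{N_r,N_t-1}(\cdot,i)$ is assembled correctly. This is the one genuinely new wrinkle relative to the receive-correlated case of \emph{Proposition \ref{pr:RxCorr}}, where deleting a transmit column leaves the correlation matrix untouched. A minor caveat is that \emph{Lemma \ref{le:CorrRayleigh}} presupposes distinct correlation eigenvalues; this holds for generic $\mathbf{L}$ and hence for its minors, and the final closed form extends to coincident eigenvalues by a routine continuity argument.
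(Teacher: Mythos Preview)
Your proposal is correct and follows exactly the paper's approach: apply \emph{Lemma \ref{le:CorrRayleigh}} to the first term of (\ref{eq:MMSESumCapacity_Main}), then observe that $\mathbf{H}_i \sim \mathcal{CN}_{N_r,N_t-1}(\mathbf{0},\mathbf{I}_{N_r}\otimes\mathbf{L}^{ii})$ and apply the lemma again to each summand. Your additional bookkeeping on the SNR cancellation $\frac{N_t-1}{\beta_{i,s}}\cdot\frac{N_t}{(N_t-1){\rm snr}}=\frac{N_t}{\beta_{i,s}{\rm snr}}$ and the summation range $k=N_t-n',\ldots,N_t-1$ is correct and in fact more explicit than the paper's own treatment.
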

\begin{proof}
The first term in (\ref{eq:MMSESumCapacity_Main}) is evaluated directly from \emph{Lemma \ref{le:CorrRayleigh}}. The remaining terms are directly inferred from \emph{Lemma \ref{le:CorrRayleigh}}, upon noting that $\mathbf{H}_i \sim \mathcal{CN}_{N_r, N_t-1}(\mathbf{0}, \mathbf{I}_{N_r} \otimes \mathbf{L}^{ii})$.
\end{proof}
It is important to note that the results in \emph{Propositions \ref{pr:RxCorr}} and \emph{\ref{pr:TxCorr}} apply for arbitrary numbers of transmit and receive antennas.

The result in \emph{Proposition \ref{pr:TxCorr}} is demonstrated in Fig.\ \ref{fig:MMSE_SumCapacity_Corr_Rayleigh}, where it is compared with the exact MMSE achievable sum rate based on Monte-Carlo simulations, for two different transmit-correlation scenarios.  Here, the simple exponential correlation model was employed, in which case the correlation matrix $\mathbf{L}$ was constructed with $(i,j)$th entry $\rho^{|i-j|}$, with $\rho$ denoting the correlation coefficient.  We see a precise agreement with the analysis and simulations, as expected.
Moreover, the MMSE achievable sum rate is seen to degrade as the level of transmit correlation is increased, especially in the high SNR regime.  

\begin{figure} \centering
\includegraphics[width=0.7\columnwidth]{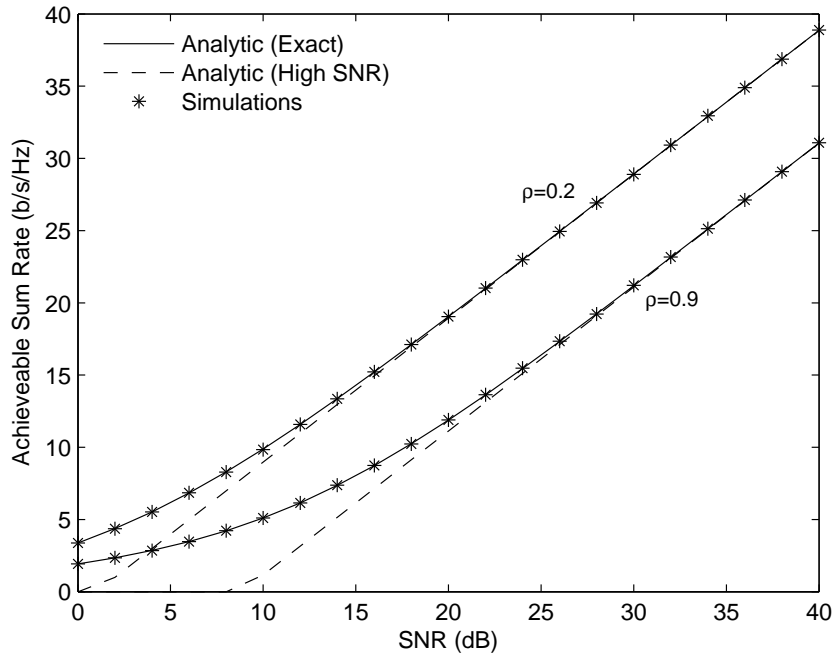}
\caption{Achievable sum rate of MIMO MMSE receivers in transmit-correlated Rayleigh fading; comparison of analysis and simulations.  Results are shown for $N_t = 3$ and $N_r = 5$, and for different correlation coefficients $\rho$.}
 \label{fig:MMSE_SumCapacity_Corr_Rayleigh}
\end{figure}

%

\subsubsection{High SNR Analysis}

Here we consider channels of the general form (\ref{eq:CorrRay_Model}).  The key focus, once again, is on the high SNR power offset for the case $N_r \geq N_t$.  This is given by the following key result:
\begin{proposition} \label{th:highSNRPO_CorrRay}
For transmit and receive correlated Rayleigh faded channels, the high SNR power offset for
a MIMO system with MMSE receiver is given by
\begin{align}
\mathcal{L}_\infty^{\rm mmse}(\mathbf{R}, \mathbf{S}) &= \mathcal{L}_\infty^{\rm mmse}(\mathbf{I}_{N_r}, \mathbf{I}_{N_t}) + f(\mathbf{S}) + g(\mathbf{R})
\label{eq:L_MMSE_final}
\end{align}
where $\mathcal{L}_\infty^{\rm mmse}(\mathbf{I}_{N_r}, \mathbf{I}_{N_t})$ is the power offset in the absence of spatial correlation given in (\ref{eq:L_ZF_final}), and $f(\cdot)$ and $g(\cdot)$ are given by
\begin{align}
f(\mathbf{S}) = \frac{ 1 }{N_t} \sum_{k=1}^{N_t} \log_2 [ \mathbf{S}^{-1}]_{k,k}
\label{eq:Seffect}
\end{align}
and 
\begin{align}
g(\mathbf{R}) = \log_2e \left( \sum_{\ell=1}^{N_r - N_t} \frac{1}{\ell} - \sum_{\ell=1}^{N_t-1} \frac{1}{\ell} \right) - \frac{ \det \mathbf{Y}_{N_r-N_t+1} (\mathbf{r}) }{ \prod_{i < j}^{N_r} ( r_j - r_i ) }
\label{eq:Reffect}
\end{align}
respectively.  Here, $\mathbf{r} = (r_1, \ldots, r_{N_r})^T$, with $r_1 > \ldots > r_{N_r}$, are the eigenvalues of $\mathbf{R}$, and $\mathbf{Y}_{N_r-N_t+1}(\mathbf{r})$ denotes an $N_r \times N_r$ matrix with $(s, t)$th element
\begin{align} \label{eq:Yr_Defn}
( \mathbf{Y}_{N_r-N_t+1}(\mathbf{r}) )_{s,t} = \left\{
\begin{array}{lr}
r_s^{t-1} & {\rm for} \; \, t \neq N_r-N_t+1 \\
r_s^{t-1} \log_2 r_s & {\rm for} \; \, t = N_r-N_t+1
\end{array}
\right.  .
\end{align}
\end{proposition}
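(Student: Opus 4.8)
The plan is to work from the high--SNR offset formula of \emph{Theorem \ref{th:MMSESumCapacity_HighSNR}} for the case $N_r\ge N_t$,
\begin{align}
\mathcal{L}_\infty^{\rm mmse} = \log_2 N_t - E_{\mathbf{H}}\!\left[\mathcal{J}(N_r,N_t,\mathbf{H})\right] + \frac{1}{N_t}\sum_{k=1}^{N_t} E_{\mathbf{H}_k}\!\left[\mathcal{J}(N_r,N_t-1,\mathbf{H}_k)\right], \nonumber
\end{align}
and to separate the transmit and receive contributions through the Kronecker factorisation $\mathbf{H}=\mathbf{R}^{1/2}\mathbf{H}_w\mathbf{S}^{1/2}$ with $\mathbf{H}_w$ i.i.d.\ $\mathcal{CN}(0,1)$. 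Since $\det(\mathbf{H}^\dagger\mathbf{H})=\det\mathbf{S}\,\det(\mathbf{H}_w^\dagger\mathbf{R}\mathbf{H}_w)$, and since deleting column $k$ of $\mathbf{H}$ yields a matrix distributed as $\mathcal{CN}_{N_r,N_t-1}(\mathbf{0},\mathbf{R}\otimes\mathbf{S}^{kk})$ (with $\mathbf{S}^{kk}$ the $(k,k)$ minor of $\mathbf{S}$, exactly as in \emph{Proposition \ref{pr:TxCorr}}), one obtains $\mathcal{J}(N_r,N_t,\mathbf{H})=\log_2\det\mathbf{S}+\log_2\det(\mathbf{H}_w^\dagger\mathbf{R}\mathbf{H}_w)$ and $\mathcal{J}(N_r,N_t-1,\mathbf{H}_k)=\log_2\det\mathbf{S}^{kk}+\log_2\det(\mathbf{H}_{w,k}^\dagger\mathbf{R}\mathbf{H}_{w,k})$, where $\mathbf{H}_{w,k}$ is $N_r\times(N_t-1)$ i.i.d.\ Gaussian. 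Using the cofactor identity $\det\mathbf{S}^{kk}=[\mathbf{S}^{-1}]_{k,k}\det\mathbf{S}$, the $\log_2\det\mathbf{S}$ terms cancel and the surviving $\mathbf{S}$-dependent piece $\tfrac{1}{N_t}\sum_k\log_2[\mathbf{S}^{-1}]_{k,k}$ is precisely $f(\mathbf{S})$.

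The key observation at this point is that $E[\log_2\det(\mathbf{H}_{w,k}^\dagger\mathbf{R}\mathbf{H}_{w,k})]$ does not depend on $k$ (these matrices are identically distributed), so writing $\varphi_q(\mathbf{R}):=E\big[\log_2\det(\mathbf{G}^\dagger\mathbf{G})\big]$ for $\mathbf{G}\sim\mathcal{CN}_{N_r,q}(\mathbf{0},\mathbf{R}\otimes\mathbf{I}_q)$ with $q\le N_r$, the above collapses to
\begin{align}
\mathcal{L}_\infty^{\rm mmse}(\mathbf{R},\mathbf{S}) = \log_2 N_t + f(\mathbf{S}) - \varphi_{N_t}(\mathbf{R}) + \varphi_{N_t-1}(\mathbf{R}). \nonumber
\end{align}
Specialising to $\mathbf{R}=\mathbf{I}_{N_r}$, $\mathbf{S}=\mathbf{I}_{N_t}$ reproduces (\ref{eq:L_ZF_final}) (here $\varphi_{N_t}(\mathbf{I})-\varphi_{N_t-1}(\mathbf{I})=\log_2 e\,\psi(N_r-N_t+1)$), so it remains only to show $\big[\varphi_{N_t-1}(\mathbf{R})-\varphi_{N_t}(\mathbf{R})\big]-\big[\varphi_{N_t-1}(\mathbf{I}_{N_r})-\varphi_{N_t}(\mathbf{I}_{N_r})\big]=g(\mathbf{R})$.

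A closed form for $\varphi_q(\mathbf{R})$ follows by letting ${\rm snr}\to\infty$ in \emph{Lemma \ref{le:CorrRayleigh}} for the receive-correlated channel (correlation dimension $N_r$, other dimension $q$, lemma's ``$n$''$=q$), and matching the affine expansion against (\ref{eq:HighSNRGeneral})--(\ref{eq:LPO_opt}). Using $e^x{\rm E}_1(x)=-\gamma-\ln x+o(1)$ and $e^x{\rm E}_h(x)=\tfrac{1}{h-1}+o(1)$ for $h\ge2$, the special ($t=k$) column of $\mathbf{E}_{q,N_r}(k)$ tends to $r_s^{\,k-1}\big(\ln{\rm snr}-\ln N_t+\ln r_s-\gamma+H_{q-N_r+k-1}\big)$, where $H_m=\sum_{\ell=1}^{m}\tfrac1\ell$ and $r_1>\cdots>r_{N_r}$ are the eigenvalues of $\mathbf{R}$. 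Since the coefficients of $\ln{\rm snr}$ and of $\ln N_t$ in that column equal $r_s^{\,k-1}$, both turn each $\det\mathbf{E}_{q,N_r}(k)$ into a multiple of the full Vandermonde $\prod_{i<j}^{N_r}(r_j-r_i)$; after division these contribute only the slope $q\log_2{\rm snr}$ and an offset $-q\log_2 N_t$ which cancels against (\ref{eq:LPO_opt}), leaving
\begin{align}
\varphi_q(\mathbf{R}) = \frac{\log_2 e}{\prod_{i<j}^{N_r}(r_j-r_i)}\sum_{k=N_r-q+1}^{N_r}\det\widetilde{\mathbf{E}}_q(k), \nonumber
\end{align}
where $\widetilde{\mathbf{E}}_q(k)$ is the $N_r\times N_r$ matrix whose column $k$ is $r_s^{\,k-1}\big(\log_2 r_s-\gamma\log_2 e+H_{q-N_r+k-1}\log_2 e\big)$ and whose remaining columns are $r_s^{\,t-1}$.

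Finally I would subtract the two sums defining $\varphi_{N_t}(\mathbf{R})$ and $\varphi_{N_t-1}(\mathbf{R})$. For each common index $k\in\{N_r-N_t+2,\dots,N_r\}$ the two relevant determinants differ in a single column only through $r_s^{\,k-1}(H_{N_t-N_r+k-1}-H_{N_t-N_r+k-2})=r_s^{\,k-1}/(N_t-N_r+k-1)$, which by multilinearity gives $1/(N_t-N_r+k-1)$ times the Vandermonde; summing over those $k$ yields $\log_2 e\sum_{\ell=1}^{N_t-1}\tfrac1\ell$. The single unmatched term is the one from $\varphi_{N_t}(\mathbf{R})$ with $k=N_r-N_t+1$, whose harmonic index is $0$; by (\ref{eq:Yr_Defn}) its contribution is $\det\mathbf{Y}_{N_r-N_t+1}(\mathbf{r})/\!\prod_{i<j}^{N_r}(r_j-r_i)$ plus a $-\gamma\log_2 e$ multiple of the Vandermonde. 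Collecting, $\varphi_{N_t}(\mathbf{R})-\varphi_{N_t-1}(\mathbf{R})=\log_2 e\big(\sum_{\ell=1}^{N_t-1}\tfrac1\ell-\gamma\big)+\det\mathbf{Y}_{N_r-N_t+1}(\mathbf{r})/\!\prod_{i<j}^{N_r}(r_j-r_i)$; subtracting its $\mathbf{R}=\mathbf{I}$ value $\log_2 e\,\psi(N_r-N_t+1)$ and inserting into the displayed expression for $\mathcal{L}_\infty^{\rm mmse}(\mathbf{R},\mathbf{S})$ delivers (\ref{eq:L_MMSE_final})--(\ref{eq:Reffect}); the general (repeated-eigenvalue) case then follows by continuity. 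I expect the main obstacle to be precisely this last step: keeping the $\ln$ versus $\log_2$ base changes straight, verifying that all the ``Vandermonde multiple'' pieces recombine into exactly the two harmonic sums in (\ref{eq:Reffect}), and handling the $0/0$ limits that arise when eigenvalues of $\mathbf{R}$ coincide.
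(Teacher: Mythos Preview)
Your proposal is correct and structurally matches the paper's proof: both start from \emph{Theorem \ref{th:MMSESumCapacity_HighSNR}}, identify $\mathbf{H}_k\sim\mathcal{CN}_{N_r,N_t-1}(\mathbf{0},\mathbf{R}\otimes\mathbf{S}^{kk})$, and separate the $\mathbf{S}$ and $\mathbf{R}$ contributions, with the cofactor identity $\det\mathbf{S}^{kk}=[\mathbf{S}^{-1}]_{k,k}\det\mathbf{S}$ producing $f(\mathbf{S})$.

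The only real difference is how the $\mathbf{R}$-dependent piece is handled. The paper simply quotes from \cite{jin07_IT} the closed form
\[
E_{\mathbf{H}}\!\left[\mathcal{J}(N_r,N_t,\mathbf{H})\right]=\log_2\det\mathbf{S}+\log_2 e\sum_{\ell=1}^{N_t}\psi(\ell)+\frac{\sum_{j=N_r-N_t+1}^{N_r}\det\mathbf{Y}_j(\mathbf{r})}{\prod_{i<j}^{N_r}(r_j-r_i)},
\]
so that the difference $E[\mathcal{J}(N_r,N_t,\cdot)]-E[\mathcal{J}(N_r,N_t-1,\cdot)]$ telescopes immediately: the digamma sum drops a single term $\psi(N_t)$ and the $\mathbf{Y}_j$ sum drops to the lone term $\det\mathbf{Y}_{N_r-N_t+1}(\mathbf{r})$, giving $g(\mathbf{R})$ in one line. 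You instead re-derive what amounts to the same identity by pushing \emph{Lemma \ref{le:CorrRayleigh}} through the ${\rm snr}\to\infty$ asymptotics of the exponential integrals and then telescoping the resulting determinants by hand. That is perfectly valid and makes the argument self-contained within the paper, but it is considerably longer than the citation-based route; your ``main obstacle'' (the $\ln$/$\log_2$ bookkeeping and the harmonic-sum recombination) is precisely the work that the quoted formula absorbs. One small slip to watch: in your displayed expression for $\varphi_q(\mathbf{R})$ you carry both a global $\log_2 e$ prefactor \emph{and} a $\log_2 r_s$ inside the special column; one of those should be in natural logarithm.
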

\vspace*{0.5cm}
\begin{proof}
The result is easily obtained from (\ref{eq:MMSE_HighSNR}), upon invoking the following result\footnote{Note that an equivalent expression for (\ref{eq:expNew}) can be found in \cite{lozano05_jnl}.} \cite{jin07_IT}
\begin{align}
E_{\mathbf{H}} \left[ \mathcal{J} ( N_r, N_t, \mathbf{H} ) \right] &=  \log_2 \det \mathbf{S} + \log_2e \sum_{\ell=1}^{N_t} \psi (\ell) +  \frac{ \sum_{j=N_r-N_t+1}^{N_r} \det  \mathbf{Y}_j(\mathbf{r}) }{ \prod_{i<j}^{N_r} ( r_j - r_i ) } \;
\label{eq:expNew}
\end{align}
and noting that $\mathbf{H}_k \sim \mathcal{C N}_{N_r, N_t-1}(\mathbf{0}, \mathbf{R} \otimes \mathbf{S}^{kk})$, where $\mathbf{S}^{kk}$ is the $(k,k)$th minor of $\mathbf{S}$.
\end{proof}


\begin{figure} \centering
\includegraphics[width=0.7\columnwidth]{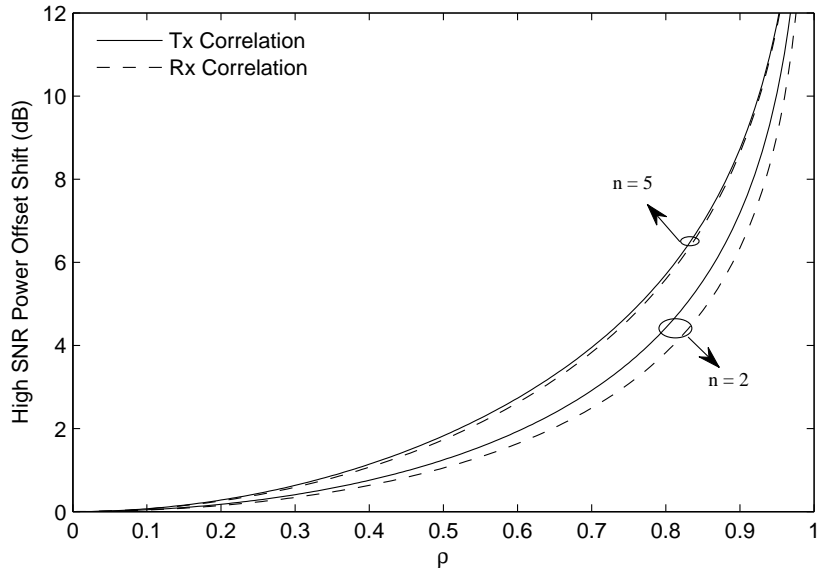}
\caption{Shift in the high SNR power offset of MIMO MMSE receivers due to spatial correlation. Results are shown for both transmit and receive correlation, as a function of correlation coefficient $\rho$, for different $N_r = N_t = n$.}
 \label{fig:HighSNR_Offset_Corr}
\end{figure}

Thus, in the high SNR regime, the effect of both transmit and receive correlation are clearly decoupled, being determined by the functions $f(\cdot)$ and $g(\cdot)$ respectively. Considering the case of transmit correlation, we see that
\begin{align}
f(\mathbf{S}) \geq \frac{ 1 }{N_t} \log_2 {\rm det}^{-1} ( \mathbf{S} ) \geq 0
\label{eq:fSBound}
\end{align}
with equality for $\mathbf{S} = \mathbf{I}_{N_t}$.  This result can be obtained from (\ref{eq:Seffect}) upon noting that (i) the set of diagonal elements $[ \mathbf{S}^{-1}]_{k,k}$ are majorized by the set of eigenvalues of $ \mathbf{S}^{-1} $, which, in turn are the reciprocals of the eigenvalues of $\mathbf{S}$, and (ii) the function $\sum_i \log_2 x_i$ is Schur-concave.  Thus, we see from (\ref{eq:fSBound}) that transmit correlation, whilst not affecting the high SNR slope (\ref{eq:MMSEHighSNRSlope}), reduces the achievable sum rate of MMSE receivers in the high SNR regime through an increased high SNR power offset; as already observed experimentally in Fig.\ \ref{fig:MMSE_SumCapacity_Corr_Rayleigh}.  The high SNR approximation based on (\ref{eq:HighSNRGeneral}) and (\ref{eq:L_MMSE_final}) is also presented in Fig.\ \ref{fig:MMSE_SumCapacity_Corr_Rayleigh}, and is seen to converge to the exact MMSE achievable sum rate for quite moderate SNR levels.

In contrast to the case of transmit correlation, the overall impact of receive correlation is not immediately evident from $g(\cdot)$ in (\ref{eq:Reffect}), due mainly to the presence of the Vandermonde determinant in the denominator of the second term.  This expression does reveal, however, that the effect of $\mathbf{R}$ is purely through its eigenvalues, and, interestingly, the relative impact of receive correlation not only depends on $N_r$, but also on $N_t$.  This is in contrast to the effect of transmit correlation in $f(\cdot)$, which depends only on $N_t$.

Fig.\ \ref{fig:HighSNR_Offset_Corr} plots the shift in high SNR power offset (in dB) due to transmit correlation, based on (\ref{eq:Seffect}), and the shift due to receive correlation, based on (\ref{eq:Reffect}), as a function of the correlation coefficient $\rho$, where $\mathbf{S}$ and $\mathbf{R}$ are constructed according to the exponential correlation model with $(i,j)$th elements $\mathbf{S}_{i,j} = \rho^{|i-j|}$ and $\mathbf{R}_{i,j} = \rho^{|i-j|}$ respectively.  From the figure, we can conclude that the SNR penalty for a MIMO MMSE system increases with the level of transmit or receive correlation.  Interestingly, the figure also shows that for a given correlation coefficient $\rho$, the SNR penalty (for the $N_r = N_t = n$ scenarios considered) is more severe if the correlation occurs at the transmitter, rather than the receiver; with this difference being most significant for small $n$.

Now consider the excess high SNR power offset.  To evaluate this, we require the high SNR power offset with optimal receivers $\mathcal{L}_\infty^{\rm opt}$, which for the transmit-receive correlated case was first presented in \cite[Eq. 28]{lozano05_jnl}.  Using a result from \cite{jin07_IT}, an alternative simplified expression can be obtained, as given by the following lemma.
\begin{lemma}
For transmit and receive correlated Rayleigh faded MIMO channels, the high SNR power offset with optimal receivers is given by
\begin{align} \label{eq:opt_LInf}
\mathcal{L}_\infty^{\rm opt} &= \log_2N_t - \log_2e \left( \sum_{\ell=2}^{N_t} \frac{1}{\ell} - \gamma \right) -  \frac{1}{N_t} \left( \log_2 \det \mathbf{S}  + \frac{ \sum_{j=N_r-N_t+1}^{N_r} \det  \mathbf{Y}_j (\mathbf{r})}{ \prod_{i<j} ( r_j - r_i ) }  \right) \;
\end{align}
which, for the special case $N_r = N_t = n$, reduces to\footnote{This special case was also reported in \cite[Eq. 28]{lozano05_jnl}.}
\begin{align}
\mathcal{L}_\infty^{\rm opt} &= \log_2n - \log_2e \left( \sum_{\ell=2}^{n} \frac{1}{\ell} - \gamma \right) -  \frac{1}{n} \left( \log_2 \det \mathbf{S}   + \log_2 \det \mathbf{R}  \right) \; .
\end{align}
\end{lemma}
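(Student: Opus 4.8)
The plan is to obtain (\ref{eq:opt_LInf}) directly by substituting the known ergodic-determinant expression (\ref{eq:expNew}) into the general optimal-receiver power-offset formula (\ref{eq:LPO_opt}), and then collapsing the resulting digamma sum to elementary harmonic numbers. Since the statement concerns $N_r \geq N_t$, we have $n = \min(N_r,N_t) = N_t$ and $\mathcal{J}(N_r,N_t,\mathbf{H}) = \log_2 \det(\mathbf{H}^\dagger \mathbf{H})$, so (\ref{eq:LPO_opt}) reads $\mathcal{L}_\infty^{\rm opt} = \log_2 N_t - \frac{1}{N_t} E_{\mathbf{H}}[\mathcal{J}(N_r,N_t,\mathbf{H})]$. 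Plugging in (\ref{eq:expNew}) writes $\mathcal{L}_\infty^{\rm opt}$ as $\log_2 N_t$ minus $\frac{1}{N_t}$ times the sum of three pieces: $\log_2 \det \mathbf{S}$, the digamma sum $\log_2 e \sum_{\ell=1}^{N_t} \psi(\ell)$, and the Vandermonde-ratio term $\sum_{j=N_r-N_t+1}^{N_r} \det \mathbf{Y}_j(\mathbf{r}) / \prod_{i<j}(r_j - r_i)$. The first and third pieces already coincide verbatim with the corresponding terms of (\ref{eq:opt_LInf}), so all that remains is to rewrite the digamma sum.

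For that, I would use the definition (\ref{eq:Digamma}) in the form $\psi(\ell) = H_{\ell-1} - \gamma$ with $H_{\ell-1} = \sum_{k=1}^{\ell-1} 1/k$ and $H_0 = 0$, and then invoke the standard identity $\sum_{\ell=1}^{m} H_\ell = (m+1)H_m - m$ (itself an easy consequence of swapping the order of summation). This gives $\sum_{\ell=1}^{N_t}\psi(\ell) = N_t H_{N_t-1} - (N_t-1) - N_t\gamma$, so $\frac{1}{N_t}\sum_{\ell=1}^{N_t}\psi(\ell) = H_{N_t-1} - \frac{N_t-1}{N_t} - \gamma$, and substituting $H_{N_t-1} = H_{N_t} - 1/N_t$ collapses this to $H_{N_t} - 1 - \gamma = \sum_{\ell=2}^{N_t} 1/\ell - \gamma$. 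Hence $-\frac{1}{N_t}\log_2 e \sum_{\ell=1}^{N_t}\psi(\ell) = -\log_2 e\bigl(\sum_{\ell=2}^{N_t} 1/\ell - \gamma\bigr)$, which is precisely the middle term of (\ref{eq:opt_LInf}); this establishes the general formula.

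For the special case $N_r = N_t = n$, the index range $N_r-N_t+1,\ldots,N_r$ becomes $1,\ldots,n$, so I need to show $\sum_{j=1}^{n}\det \mathbf{Y}_j(\mathbf{r}) / \prod_{i<j}(r_j - r_i) = \sum_{s=1}^n \log_2 r_s = \log_2 \det \mathbf{R}$. The plan here is a cofactor-expansion argument: $\mathbf{Y}_j(\mathbf{r})$ is the Vandermonde matrix $V = (r_s^{t-1})$ with its $j$th column replaced by $(r_s^{j-1}\log_2 r_s)_s$ (the natural generalization of (\ref{eq:Yr_Defn})), so $\det \mathbf{Y}_j(\mathbf{r}) = \sum_s r_s^{j-1}\log_2 r_s\, C_{s,j}$ where $C_{s,j}$ is the $(s,j)$ cofactor of $V$. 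Summing over $j$ and exchanging the order of summation, $\sum_{j=1}^{n}\det \mathbf{Y}_j(\mathbf{r}) = \sum_s \log_2 r_s \sum_{j=1}^n r_s^{j-1} C_{s,j} = \bigl(\sum_s \log_2 r_s\bigr)\det V$, the inner sum being the Laplace expansion of $\det V$ along row $s$. Dividing by $\det V = \prod_{i<j}(r_j - r_i)$ gives the claim, and $\prod_s r_s = \det \mathbf{R}$ finishes the reduction.

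The computation is essentially bookkeeping, so there is no deep obstacle. The only points requiring a little care are (i) confirming that the $\mathbf{Y}_j(\mathbf{r})$ in the cited identity (\ref{eq:expNew}) uses base-$2$ logarithms in its entries, so no stray $\log_2 e$ factor is introduced relative to (\ref{eq:Yr_Defn}), and (ii) the harmonic-number and cofactor manipulations above; I expect (ii), specifically getting the reindexing of $\sum_{\ell=1}^{N_t}\psi(\ell)$ exactly right, to be the most error-prone step, though it is routine.
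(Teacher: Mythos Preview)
Your proposal is correct and follows exactly the approach the paper intends: the lemma is stated without an explicit proof, with the paper simply noting that it follows by combining the general power-offset formula (\ref{eq:LPO_opt}) with the ergodic-determinant identity (\ref{eq:expNew}) from \cite{jin07_IT}. You have carried out precisely that substitution and supplied the routine digamma/harmonic-number simplification and the cofactor argument for the $N_r=N_t$ reduction that the paper leaves implicit; both computations are correct, and your caveat (i) about the base-$2$ logarithm in (\ref{eq:Yr_Defn}) is well-placed and indeed consistent with the paper's conventions.
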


\vspace*{0.5cm}

The excess high SNR power offset is now readily obtained from (\ref{eq:opt_LInf}) and (\ref{eq:L_MMSE_final}).
\begin{corollary}
For transmit and receive correlated Rayleigh faded MIMO channels, the excess high SNR power
offset is given by
\begin{align}
\Delta_{\rm ex} &= - \log_2e \frac{N_t-1}{N_t} + g_1(\mathbf{S}) + g_2(\mathbf{R})
\end{align}
where
\begin{align} \label{eq:g1Excess}
g_1(\mathbf{S}) = \frac{1}{N_t} \left( \sum_{k=1}^{N_t} \log_2 \left[ \mathbf{S}^{-1} \right]_{k,k} + \log_2 \det \mathbf{S} \right)
\end{align}
and
\begin{align}
g_2 & (\mathbf{R}) = \frac{  \sum_{j=N_r-N_t+2}^{N_r} \det \mathbf{Y}_j(\mathbf{r}) - (N_t-1) \det \mathbf{Y}_{N_r-N_t+1}(\mathbf{r})  }{N_t \prod_{i<j} ( r_j - r_i )}  \; .
\end{align}

\end{corollary}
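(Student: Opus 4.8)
The plan is to work straight from the definition of the excess offset, $\Delta_{\rm ex} = \mathcal{L}_\infty^{\rm mmse} - \mathcal{L}_\infty^{\rm opt}$ in (\ref{eq:DeltaDefn}), and simply substitute the two closed forms that are already available: the MMSE power offset supplied by \emph{Proposition \ref{th:highSNRPO_CorrRay}}, namely (\ref{eq:L_MMSE_final}) together with the uncorrelated baseline (\ref{eq:L_ZF_final}) and the correction functions $f(\mathbf{S})$ and $g(\mathbf{R})$ of (\ref{eq:Seffect})--(\ref{eq:Reffect}), and the optimal-receiver offset (\ref{eq:opt_LInf}). There is no conceptual obstacle here; the entire proof is one subtraction followed by careful bookkeeping of index ranges.

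First I would collapse $\mathcal{L}_\infty^{\rm mmse}$ into a single expression. Plugging (\ref{eq:L_ZF_final}) and (\ref{eq:Reffect}) into (\ref{eq:L_MMSE_final}), the two copies of $\log_2 e \sum_{\ell=1}^{N_r-N_t}\tfrac1\ell$ cancel immediately, leaving
\[
\mathcal{L}_\infty^{\rm mmse} = \log_2 N_t + \log_2 e\left(\gamma - \sum_{\ell=1}^{N_t-1}\frac1\ell\right) + f(\mathbf{S}) - \frac{\det \mathbf{Y}_{N_r-N_t+1}(\mathbf{r})}{\prod_{i<j}^{N_r}(r_j-r_i)} \, .
\]
Subtracting (\ref{eq:opt_LInf}) then annihilates the $\log_2 N_t$ and the $\gamma$ contributions, and the harmonic-number pieces combine into $\log_2 e\left(\sum_{\ell=2}^{N_t}\tfrac1\ell - \sum_{\ell=1}^{N_t-1}\tfrac1\ell\right)$. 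This I would collapse with the elementary identity $\sum_{\ell=2}^{N_t}\tfrac1\ell - \sum_{\ell=1}^{N_t-1}\tfrac1\ell = \tfrac1{N_t} - 1 = -\tfrac{N_t-1}{N_t}$, which produces the leading term $-\log_2 e\,\tfrac{N_t-1}{N_t}$ of the claimed formula.

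What remains is to identify the two channel-dependent groups. The $\mathbf{S}$-terms are $f(\mathbf{S}) = \tfrac1{N_t}\sum_{k=1}^{N_t}\log_2[\mathbf{S}^{-1}]_{k,k}$ from (\ref{eq:Seffect}) together with $+\tfrac1{N_t}\log_2\det\mathbf{S}$ picked up from $-\mathcal{L}_\infty^{\rm opt}$; their sum is exactly $g_1(\mathbf{S})$ as written in (\ref{eq:g1Excess}). The $\mathbf{R}$-terms are $-\frac{\det \mathbf{Y}_{N_r-N_t+1}(\mathbf{r})}{\prod_{i<j}^{N_r}(r_j-r_i)}$ from the MMSE side and $+\tfrac1{N_t}\frac{\sum_{j=N_r-N_t+1}^{N_r}\det \mathbf{Y}_j(\mathbf{r})}{\prod_{i<j}(r_j-r_i)}$ from the optimal side; factoring out $\frac{1}{N_t \prod_{i<j}(r_j-r_i)}$ and peeling the $j = N_r-N_t+1$ term off the sum, i.e.\ $\sum_{j=N_r-N_t+1}^{N_r} = \det \mathbf{Y}_{N_r-N_t+1} + \sum_{j=N_r-N_t+2}^{N_r}$, leaves the numerator $\sum_{j=N_r-N_t+2}^{N_r}\det \mathbf{Y}_j(\mathbf{r}) - (N_t-1)\det \mathbf{Y}_{N_r-N_t+1}(\mathbf{r})$, which is precisely $g_2(\mathbf{R})$. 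Adding the three pieces gives the stated result.

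The only step demanding care --- the ``hard part,'' such as it is --- is reconciling the different ways the harmonic sums and the $\det \mathbf{Y}_j$ sums are written in the source expressions: (\ref{eq:L_ZF_final}) and (\ref{eq:Reffect}) split off the Euler constant and run their sums from $\ell=1$, whereas (\ref{eq:opt_LInf}) runs from $\ell=2$, and the single $j = N_r-N_t+1$ term in the MMSE baseline must be matched against the full range $j = N_r-N_t+1,\dots,N_r$ appearing in $\mathcal{L}_\infty^{\rm opt}$. Once these ranges are aligned the identity is immediate, and no estimates or further lemmas are needed.
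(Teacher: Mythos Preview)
Your proposal is correct and is exactly the intended derivation: the paper states this corollary without an explicit proof, treating it as an immediate consequence of substituting (\ref{eq:L_MMSE_final}) and (\ref{eq:opt_LInf}) into the definition (\ref{eq:DeltaDefn}) and simplifying, which is precisely what you do. Your bookkeeping of the harmonic sums and the splitting of the $\det\mathbf{Y}_j$ range is accurate.
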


\vspace*{0.5cm}

From (\ref{eq:fSBound}) and (\ref{eq:Seffect}), it is easy to establish that (\ref{eq:g1Excess}) is non-negative, ie.\ $g_1(\mathbf{S}) \geq 0$, indicating that in the high SNR regime MMSE receivers incur a more significant rate loss due to transmit correlation, compared with optimal MIMO receivers.

\subsubsection{Low SNR Analysis}

For our low SNR analysis, we consider channels of the general form (\ref{eq:CorrRay_Model}).  In this regime, the main focus, once again, is on characterizing the wideband slope.
\begin{proposition} \label{pr:CorrRayleigh_LowSNR}
For transmit and receive correlated Rayleigh faded channels, the wideband slope for
MIMO MMSE receivers is given by
\begin{align} \label{eq:S0CorrClosedForm}
S_0^{\rm mmse} = \frac{ 2 N_r N_t }{ (2 N_t - 1) \zeta(\mathbf{R}) + N_r \left( N_t \zeta (\mathbf{S}) - \frac{(N_t-1)}{N_t} \sum_{i=1}^{N_t} \zeta (\mathbf{S}^{ii}) \right) } \;
\end{align}
where $\mathbf{S}^{ii}$ is the $(i,i)$th minor of $\mathbf{S}$.
\end{proposition}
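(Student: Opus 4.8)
The plan is to reduce the statement entirely to \emph{Theorem \ref{th:MMSESumCapacity_LowSNR}}, which already expresses $S_0^{\rm mmse}$ through the matrix dispersions $\zeta(\mathbf{H}\mathbf{H}^\dagger)$ and $\zeta(\mathbf{H}_k\mathbf{H}_k^\dagger)$, $k=1,\dots,N_t$. Hence all that remains is to evaluate these dispersions under the separable model (\ref{eq:CorrRay_Model}) and then substitute. The first step is to establish a ``dispersion decoupling'' identity: writing a generic Kronecker-correlated Gaussian matrix as $\mathbf{G} = \mathbf{A}^{1/2}\mathbf{G}_w\mathbf{B}^{1/2}$, where $\mathbf{G}_w$ is $p\times q$ with i.i.d.\ $\mathcal{CN}(0,1)$ entries and $\mathbf{A},\mathbf{B}$ are Hermitian positive-definite of sizes $p\times p$ and $q\times q$ respectively, one has $\zeta(\mathbf{G}\mathbf{G}^\dagger) = \zeta(\mathbf{A}) + \frac{p}{q}\,\zeta(\mathbf{B})$.

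To prove this identity I would compute the two expectations appearing in (\ref{eq:dispersionDefn}). Using $E[\mathbf{G}_w\mathbf{B}\mathbf{G}_w^\dagger] = \mathrm{tr}(\mathbf{B})\,\mathbf{I}_p$ gives $E[\mathrm{tr}(\mathbf{G}\mathbf{G}^\dagger)] = \mathrm{tr}(\mathbf{A})\,\mathrm{tr}(\mathbf{B})$, while the fourth-order Gaussian moment identity (Wick/Isserlis, applied to the i.i.d.\ entries of $\mathbf{G}_w$) gives $E[\mathrm{tr}((\mathbf{G}\mathbf{G}^\dagger)^2)] = E[\mathrm{tr}(\mathbf{A}\mathbf{G}_w\mathbf{B}\mathbf{G}_w^\dagger\mathbf{A}\mathbf{G}_w\mathbf{B}\mathbf{G}_w^\dagger)] = \mathrm{tr}(\mathbf{A}^2)(\mathrm{tr}\,\mathbf{B})^2 + (\mathrm{tr}\,\mathbf{A})^2\,\mathrm{tr}(\mathbf{B}^2)$. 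Forming the ratio in (\ref{eq:dispersionDefn}) — the common scaling cancels, so the channel normalization (\ref{eq:ChannelNorm}) plays no role here — and recognizing that $\zeta(\mathbf{A}) = p\,\mathrm{tr}(\mathbf{A}^2)/(\mathrm{tr}\,\mathbf{A})^2$ and $\zeta(\mathbf{B}) = q\,\mathrm{tr}(\mathbf{B}^2)/(\mathrm{tr}\,\mathbf{B})^2$ yields the claimed identity. (Alternatively, this dispersion can be read off directly from \cite{lozano03}.)

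I would then apply the identity twice. With $(\mathbf{A},\mathbf{B},p,q) = (\mathbf{R},\mathbf{S},N_r,N_t)$ it gives $\zeta(\mathbf{H}\mathbf{H}^\dagger) = \zeta(\mathbf{R}) + \frac{N_r}{N_t}\zeta(\mathbf{S})$. Moreover, since the marginal law of the columns of $\mathbf{H}$ retained after deleting column $k$ is $\mathbf{H}_k \sim \mathcal{CN}_{N_r,N_t-1}(\mathbf{0},\mathbf{R}\otimes\mathbf{S}^{kk})$ — exactly as already used in the proof of \emph{Proposition \ref{th:highSNRPO_CorrRay}} — the identity with $(\mathbf{A},\mathbf{B},p,q) = (\mathbf{R},\mathbf{S}^{kk},N_r,N_t-1)$ gives $\zeta(\mathbf{H}_k\mathbf{H}_k^\dagger) = \zeta(\mathbf{R}) + \frac{N_r}{N_t-1}\zeta(\mathbf{S}^{kk})$.

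Finally I would substitute these into the expression for $S_0^{\rm mmse}$ in \emph{Theorem \ref{th:MMSESumCapacity_LowSNR}}. The denominator becomes $N_t\zeta(\mathbf{R}) + N_r\zeta(\mathbf{S}) - \big(\frac{N_t-1}{N_t}\big)^2\big[N_t\zeta(\mathbf{R}) + \frac{N_r}{N_t-1}\sum_{i=1}^{N_t}\zeta(\mathbf{S}^{ii})\big]$; the only non-trivial manipulation is collecting the $\zeta(\mathbf{R})$ terms via the elementary identity $N_t - (N_t-1)^2/N_t = (2N_t-1)/N_t$, after which factoring out $1/N_t$ reproduces exactly (\ref{eq:S0CorrClosedForm}). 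As a consistency check, setting $\mathbf{R}=\mathbf{I}_{N_r}$ and $\mathbf{S}=\mathbf{I}_{N_t}$ (so that every dispersion equals $1$) recovers (\ref{eq:WidebandSlope_MMSE}). I expect the single genuinely non-routine step to be the fourth-order moment evaluation underpinning the decoupling identity; everything else is algebraic bookkeeping.
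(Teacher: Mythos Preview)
Your proposal is correct and follows essentially the same route as the paper: compute $\zeta(\mathbf{H}\mathbf{H}^\dagger)=\zeta(\mathbf{R})+\tfrac{N_r}{N_t}\zeta(\mathbf{S})$ and $\zeta(\mathbf{H}_k\mathbf{H}_k^\dagger)=\zeta(\mathbf{R})+\tfrac{N_r}{N_t-1}\zeta(\mathbf{S}^{kk})$, then substitute into (\ref{eq:WideSlope_MMSE}) and simplify. The paper simply cites \cite{lozano03} for the dispersion identities, whereas you additionally sketch the underlying fourth-order Gaussian moment computation; this is a harmless elaboration rather than a different approach.
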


\begin{proof}
For the correlated Rayleigh fading model (\ref{eq:CorrRay_Model}), we can infer the following from \cite{lozano03},
\begin{align}
 & \zeta (\mathbf{H} \mathbf{H}^\dagger ) =   \zeta ( \mathbf{R} ) + \frac{N_r}{N_t} \zeta ( \mathbf{S} ) \; , \nonumber \\
 & \zeta (\mathbf{H}_i \mathbf{H}_i^\dagger ) = \zeta ( \mathbf{R} ) + \frac{N_r}{N_t-1} \zeta ( \mathbf{S}^{ii} ) \; .
\label{eq:corrRes}
\end{align}
Substituting (\ref{eq:corrRes}) into (\ref{eq:WideSlope_MMSE}) leads to the result.
\end{proof}

Note that since the diagonal elements of both $\mathbf{R}$ and $\mathbf{S}$ are unity, it follows from (\ref{eq:dispersionDefn}) that the dispersion numbers in (\ref{eq:corrRes}) particularize to
\begin{align}
\zeta(\mathbf{R}) = \frac{{\rm tr}(\mathbf{R}^2)}{N_r}, \; \; \; \; \zeta(\mathbf{S}) = \frac{{\rm tr}(\mathbf{S}^2)}{N_t},  \; \; \; \; \zeta(\mathbf{S}^{ii}) = \frac{{\rm tr}( (\mathbf{S}^{ii})^2)}{N_t-1} \; .
\end{align}

For the case of receive correlation only (i.e.\ $\mathbf{S} = \mathbf{I}_{N_t}$), (\ref{eq:S0CorrClosedForm}) admits the very simple form
\begin{align} \label{eq:MMSEResult}
S_0^{\rm mmse} = \frac{ 2 N_r N_t }{ (2 N_t - 1) \zeta(\mathbf{R}) + N_r  } \; .
\end{align}
Since $\zeta(\mathbf{R})$ satisfies
\begin{align} \label{eq:ZetaBound}
1 \leq \zeta(\mathbf{R}) \leq N_r
\end{align}
with the lower bound achieved if the antennas are uncorrelated and the upper bound achieved if the antennas are fully correlated, we see from (\ref{eq:MMSEResult}) that receive correlation reduces the achievable sum rate of MMSE receivers in the low SNR regime, as quantified by a reduction in wideband slope.  It is also interesting to compare (\ref{eq:MMSEResult}) with the wideband slope for MIMO with optimal receivers, given by \cite{lozano03}
\begin{align}
S_0^{\rm opt} = \frac{ 2 N_r N_t }{ N_t \zeta(\mathbf{R}) + N_r } \; \;  \; .
\end{align}
Thus, we have the ratio
\begin{align} \label{eq:RatioCorr}
\frac{ S_0^{\rm mmse}}{S_0^{\rm opt}} = \frac{ N_t \zeta(\mathbf{R}) + N_r }{ (2 N_t - 1) \zeta(\mathbf{R}) + N_r  } \;
\end{align}
which, based on (\ref{eq:ZetaBound}), also decreases with receive correlation, satisfying
\begin{align}
 \frac{N_t + 1}{2 N_t} \; \;   \leq  \; \;  \frac{ S_0^{\rm mmse}}{S_0^{\rm opt}}  \; \; \leq   \; \;  \frac{ N_t + N_r }{ 2 N_t + N_r - 1 } \; .
\end{align}
This result indicates that not only are  MMSE receivers degraded at low SNR due to receive correlation, but they actually incur \emph{more} of a loss than do optimal MIMO receivers. Interestingly, we also see that
\begin{align}
\lim_{N_r \to \infty} \frac{ S_0^{\rm mmse}}{S_0^{\rm opt}} = 1 \;
\end{align}
and, for $\zeta(\mathbf{R}) \neq 0$,
\begin{align}
\lim_{N_t \to \infty} \frac{ S_0^{\rm mmse}}{S_0^{\rm opt}} = \frac{1}{2} \; ,
\end{align}
which is the same limiting behavior observed previously for uncorrelated Rayleigh channels in (\ref{eq:ratio}).

For the case of transmit correlation only (i.e.\ $\mathbf{R} = \mathbf{I}_{N_r}$), focusing on the scenario $N_t = 2$, (\ref{eq:S0CorrClosedForm}) reduces to
\begin{align} \label{eq:MMSEResult_TxCorr}
S_0^{\rm mmse} = \frac{ 4 N_r }{ 3 + N_r ( 2 \zeta(\mathbf{S}) - 1 )  } \; \; .
\end{align}
This result reveals that in the low SNR regime, the effect of transmit correlation in the channel mirrors that of receive correlation by reducing the MMSE achievable sum rate through a reduction in wideband slope.

\subsection{Uncorrelated Rician Fading}

We now particularize the general results of Section \ref{sec:GenResults} to Rician fading channels, representative of line-of-sight environments.  For convenience, we focus on uncorrelated Rician channels with rank-$1$ specular component, described by
\begin{align} \label{eq:RicianModel}
\mathbf{H} \sim \mathcal{CN}_{N_r, N_t} \left(  \sqrt{\frac{K}{K+1}} \mathbf{a}(\theta_r) \mathbf{a}^T(\theta_t),  \frac{1}{K+1} \mathbf{I}_{N_r} \otimes \mathbf{I}_{N_t} \right)
\end{align}
where $K$ is the Rician $K$-factor, and $\mathbf{a}(\cdot)$ denotes an array response (column) vector (see \cite[Eq. 5]{bolcskei03}), parameterized by the angle of arrival $\theta_r$ and angle of departure $\theta_t$ of the specular component.

It is important to note that, to our knowledge, the results in this section present the first analytical investigation of the achievable sum rate of MIMO MMSE receivers in the presence of Rician fading.

\subsubsection{Exact Analysis}

An exact expression for the MMSE achievable sum rate can be easily obtained by evaluating (\ref{eq:MMSESumCapacity_Main}) using exact results for the ergodic mutual information of MIMO Rician channels with optimal receivers, given in \cite{alfano04_2,simon_04_submit,kang04}. We choose to omit explicitly presenting this result here to avoid the introduction of more cumbersome notation.


\subsubsection{High SNR Analysis}

In Rician fading, the high SNR power offset (for the case $N_r \geq N_t$) is given by the following key result:
\begin{proposition} \label{th:HighSNRPO_Rician}
For uncorrelated Rician faded channels, the high SNR power offset for
a MIMO system with MMSE receiver is given by
\begin{align}
\mathcal{L}_\infty^{\rm mmse}(K) &= \mathcal{L}_\infty^{\rm mmse}(0)
+ h_1(K)
\label{eq:L_MMSE_Rician}
\end{align}
where $\mathcal{L}_\infty^{\rm mmse}(0)$ is given by (\ref{eq:L_ZF_final}) and
\begin{align}
h_1(K) &= \log_2 (K+1) - K \log_2e \bigl( N_t \theta(N_r, N_t, K) - (N_t-1) \theta(N_r, N_t-1, K) \bigr)  \, ,
\end{align}
with
\begin{align}
\theta(N_r, N_t, K) =  {}_2F_2(1,1;2,N_r+1; -K N_r N_t)
\end{align}
where ${}_2F_2(\cdot)$ denotes the generalized hypergeometric function.
\end{proposition}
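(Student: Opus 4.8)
The plan is to evaluate the general high-SNR power offset formula \eqref{eq:MMSE_HighSNR} in the case $N_r \geq N_t$, namely $\mathcal{L}_\infty^{\rm mmse} = \log_2 N_t - E_{\mathbf{H}}[\mathcal{J}(N_r,N_t,\mathbf{H})] + \frac{1}{N_t}\sum_{k=1}^{N_t} E_{\mathbf{H}_k}[\mathcal{J}(N_r,N_t-1,\mathbf{H}_k)]$, for the Rician model \eqref{eq:RicianModel}. Since $N_r \geq N_t$ (hence also $N_r \geq N_t-1$), both occurrences of $\mathcal{J}$ take the $\log_2\det(\cdot^\dagger\cdot)$ form, so the whole task reduces to plugging in a closed form for $E_{\mathbf{H}}[\log_2\det(\mathbf{H}^\dagger\mathbf{H})]$ under uncorrelated Rician fading with a rank-$1$ specular component (available from the ergodic-capacity / high-SNR Rician literature, e.g.\ \cite{kang04,jin07_IT,lozano05_jnl}). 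Writing $\mathbf{H} = \tfrac{1}{\sqrt{K+1}}\bigl(\mathbf{H}_w + \sqrt{K}\,\mathbf{a}(\theta_r)\mathbf{a}^T(\theta_t)\bigr)$ with $\mathbf{H}_w$ i.i.d.\ $\mathcal{CN}(0,1)$, I would use the fact that this expectation equals $-N_t\log_2(K+1)$ (from the variance scaling of the $N_t \times N_t$ determinant) plus the i.i.d.\ Rayleigh value $\log_2 e\sum_{\ell=0}^{N_t-1}\psi(N_r-\ell)$ plus a Rician correction that depends on the specular part only through the single nonzero eigenvalue of the scaled mean's Gram matrix, which here is $K\,\|\mathbf{a}(\theta_r)\|^2\,\|\mathbf{a}(\theta_t)\|^2 = K N_r N_t$ and which enters precisely as $K N_t \log_2 e\,{}_2F_2(1,1;2,N_r+1;-KN_rN_t)$.

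The second step is to handle the sum over $k$. Each $\mathbf{H}_k$ ($\mathbf{H}$ with its $k$-th column removed) is again an uncorrelated Rician matrix in $\mathcal{CN}_{N_r,N_t-1}$ with the same $K$-factor and specular component $\sqrt{K/(K+1)}\,\mathbf{a}(\theta_r)\,\tilde{\mathbf{a}}_k^T$, where $\tilde{\mathbf{a}}_k$ is $\mathbf{a}(\theta_t)$ with its $k$-th entry deleted. Because $\mathbf{a}(\theta_t)$ is an array-response vector with unit-modulus entries, $\|\tilde{\mathbf{a}}_k\|^2 = N_t-1$ for every $k$, so $\mathbf{H}_k$ has effective non-centrality eigenvalue $K N_r(N_t-1)$ independent of $k$; hence $E_{\mathbf{H}_k}[\mathcal{J}(N_r,N_t-1,\mathbf{H}_k)]$ does not depend on $k$ and the average $\tfrac{1}{N_t}\sum_{k=1}^{N_t}$ just returns that common value, obtained from the same closed form with $N_t$ replaced by $N_t-1$. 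This is the Rician analogue of the collapse exploited in Corollary \ref{co:MMSESumCapacity_HighSNR}, and it is what produces the second term $(N_t-1)\theta(N_r,N_t-1,K)$ in $h_1(K)$.

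Substituting both evaluations into \eqref{eq:MMSE_HighSNR} then finishes the proof by bookkeeping: the two $\psi$-sums telescope to the single term $\log_2 e\,\psi(N_r-N_t+1) = \log_2 e\bigl(\sum_{\ell=1}^{N_r-N_t}\tfrac{1}{\ell}-\gamma\bigr)$, which together with $\log_2 N_t$ reproduces $\mathcal{L}_\infty^{\rm mmse}(0)$ of \eqref{eq:L_ZF_final}; the $-N_t\log_2(K+1)$ and $(N_t-1)\log_2(K+1)$ contributions combine to $+\log_2(K+1)$; and the two hypergeometric corrections combine into $-K\log_2 e\bigl(N_t\,\theta(N_r,N_t,K)-(N_t-1)\,\theta(N_r,N_t-1,K)\bigr)$, yielding \eqref{eq:L_MMSE_Rician}. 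The only genuine obstacle is securing (and correctly normalizing) the Rician formula for $E_{\mathbf{H}}[\log_2\det(\mathbf{H}^\dagger\mathbf{H})]$ — in particular tracking the $1/(K+1)$ variance scaling that gives the $-N_t\log_2(K+1)$ offset and confirming the ${}_2F_2(1,1;2,N_r+1;-KN_rN_t)$ term with coefficient $K N_t\log_2 e$; once that is in hand, everything else is routine algebra.
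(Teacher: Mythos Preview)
Your proposal is correct and follows essentially the same route as the paper: invoke the general offset formula \eqref{eq:MMSE_HighSNR}, plug in the closed form $E_{\mathbf{H}}[\mathcal{J}(N_r,N_t,\mathbf{H})] = \log_2 e\sum_{\ell=0}^{N_t-1}\psi(N_r-\ell) - N_t\log_2(K+1) + K N_t\log_2 e\,\theta(N_r,N_t,K)$ from \cite{lozano05_jnl}, observe that $\mathbf{H}_k$ is again uncorrelated rank-$1$ Rician with the same $K$-factor (your unit-modulus argument for the array-response entries is exactly why the paper's $\mathbf{H}_k$ distribution gives the same $\theta(N_r,N_t-1,K)$ for every $k$), and then collect terms. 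You are in fact more explicit than the paper about the telescoping of the $\psi$-sums and the cancellation that leaves a single $\log_2(K+1)$.
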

\begin{proof}
The result is easily obtained from (\ref{eq:MMSE_HighSNR}), upon invoking the following result \cite{lozano05_jnl}
\begin{align}
E_{\mathbf{H}}  \left[ \mathcal{J}(N_r, N_t, \mathbf{H} \right] &=  \log_2e \sum_{\ell=0}^{N_t-1} \psi(N_r - \ell) - N_t \log_2(K+1)  + K N_t \log_2e \theta(N_r, N_t, K)
\end{align}
and noting that
\begin{align}
\mathbf{H}_k \sim \mathcal{CN} \left( \sqrt{\frac{K}{K+1}} \mathbf{a}(\theta_r) \mathbf{a}_k^T(\theta_t),  \frac{1}{K+1} \mathbf{I}_{N_r} \otimes \mathbf{I}_{N_t-1} \right) \; ,
\end{align}
where $\mathbf{a}_k(\cdot)$ corresponds to the response vector $\mathbf{a}(\cdot)$ with the $k$th element removed.
\end{proof}

\begin{figure} \centering
\includegraphics[width=0.7\columnwidth]{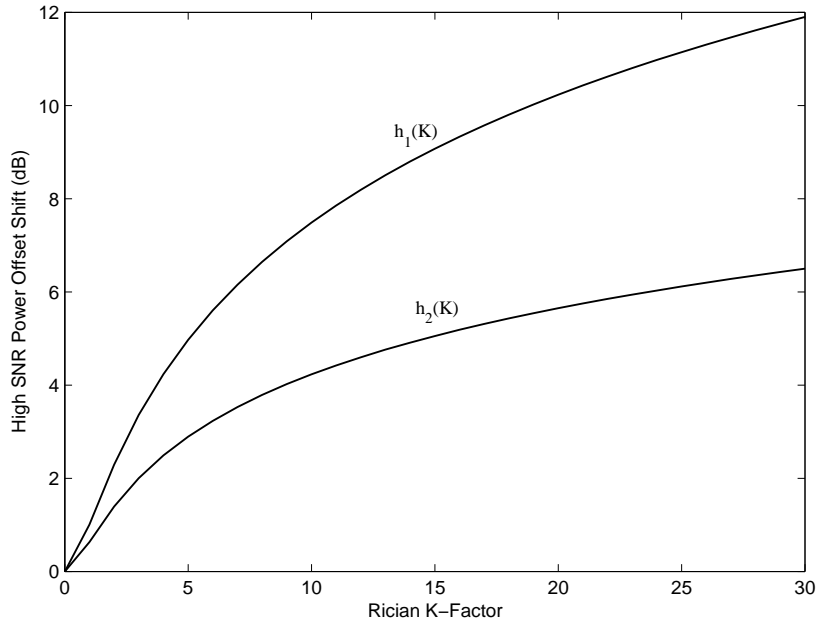}
\caption{Shift in high SNR power offset $h_1(K)$ and excess power offset $h_2(K)$ in Rician fading.  Results are shown for $N_r = N_t = 2$.}
 \label{fig:MMSE_HighSNR_Rician}
\end{figure}

Interestingly, we see that the impact of line-of-sight in the high SNR regime is only through the relative \emph{strength} of the specular component (ie.\ through the Rician $K$-factor), and is independent of its geometry (ie.\ independent of $\theta_r$ and $\theta_t$).  Moreover, \emph{Proposition \ref{th:HighSNRPO_Rician}} reveals that line-of-sight imposes a shift in power offset, as succinctly characterized through the function $h_1(K)$.  This behavior is investigated in Fig.\ \ref{fig:MMSE_HighSNR_Rician}, where we plot $h_1(K)$ (in dB) as a function of $K$.  We clearly see that the high SNR power offset increases monotonically with $K$, revealing that the presence of line-of-sight reduces the achievable sum rate of MIMO MMSE receivers in the high SNR regime.

\begin{corollary}
For uncorrelated Rician faded MIMO channels, the excess high SNR power
offset is given by
\begin{align}
\Delta_{\rm ex}(K) &= \Delta_{\rm ex}(0) + h_2(K)
\end{align}
with $\Delta_{\rm ex}(0)$ given by (\ref{eq:IIDRayExcessPO}), and
\begin{align}
h_2(K) &= - \log_2e K (N_t-1) \bigl( \theta(N_r, N_t, K) - \theta(N_r, N_t-1, K) \bigr) \; .
\end{align}
\end{corollary}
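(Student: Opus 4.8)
The plan is to work directly from the definition $\Delta_{\rm ex}(K) = \mathcal{L}_\infty^{\rm mmse}(K) - \mathcal{L}_\infty^{\rm opt}(K)$ in (\ref{eq:DeltaDefn}), combining the MMSE high SNR power offset from \emph{Proposition \ref{th:HighSNRPO_Rician}} with the corresponding optimal power offset. Since \emph{Proposition \ref{th:HighSNRPO_Rician}} already supplies $\mathcal{L}_\infty^{\rm mmse}(K) = \mathcal{L}_\infty^{\rm mmse}(0) + h_1(K)$, the only new ingredient needed is an analogous additive decomposition of $\mathcal{L}_\infty^{\rm opt}(K)$ about its uncorrelated Rayleigh value $\mathcal{L}_\infty^{\rm opt}(0)$.

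First I would compute $\mathcal{L}_\infty^{\rm opt}(K)$ by substituting the Rician moment identity for $E_{\mathbf{H}}[\mathcal{J}(N_r,N_t,\mathbf{H})]$ quoted in the proof of \emph{Proposition \ref{th:HighSNRPO_Rician}} (from \cite{lozano05_jnl}) into the general formula (\ref{eq:LPO_opt}) with $n = N_t$ (valid since $N_r \geq N_t$). Because $\theta(N_r,N_t,0)=1$, the $K$-independent digamma sum is exactly what survives at $K=0$, so forming the difference with the $K=0$ case gives
\[
\mathcal{L}_\infty^{\rm opt}(K) = \mathcal{L}_\infty^{\rm opt}(0) + \log_2(K+1) - K \log_2e \, \theta(N_r,N_t,K) \, .
\]
Subtracting this from $\mathcal{L}_\infty^{\rm mmse}(K)$ yields $\Delta_{\rm ex}(K) = \Delta_{\rm ex}(0) + h_1(K) - \log_2(K+1) + K\log_2e\,\theta(N_r,N_t,K)$, where the bracketed $K$-independent part is recognized as $\Delta_{\rm ex}(0)$ in (\ref{eq:IIDRayExcessPO}). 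Expanding $h_1(K)$, the two $\log_2(K+1)$ terms cancel and the $\theta(N_r,N_t,K)$ contributions collect to $-K\log_2e\,(N_t-1)\theta(N_r,N_t,K)$, leaving precisely
\[
h_2(K) = -\log_2e\, K (N_t-1)\bigl( \theta(N_r,N_t,K) - \theta(N_r,N_t-1,K) \bigr) \, .
\]

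There is essentially no obstacle here: the derivation is a short bookkeeping exercise once the optimal Rician power offset is written in the additive form above. The only mildly delicate point — that the deflated channel $\mathbf{H}_k$ still has a rank-one mean, so the moment identity continues to apply with $N_t$ replaced by $N_t-1$ — was already dispatched in proving \emph{Proposition \ref{th:HighSNRPO_Rician}}, and the $\theta(N_r,N_t-1,K)$ term in $h_2(K)$ is inherited unchanged from $h_1(K)$, so nothing further is required.
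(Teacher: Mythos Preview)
Your proposal is correct and follows essentially the same route as the paper: both substitute the MMSE power offset from \emph{Proposition \ref{th:HighSNRPO_Rician}} and the optimal Rician power offset into the definition (\ref{eq:DeltaDefn}). The only cosmetic difference is that the paper cites $\mathcal{L}_\infty^{\rm opt}(K)$ directly as \cite[Eq.~(67)]{lozano05_jnl}, whereas you rederive that same expression in one line by plugging the Rician moment identity into (\ref{eq:LPO_opt}); the ensuing cancellation yielding $h_2(K)$ is identical.
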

\begin{proof}
This result is obtained by substituting (\ref{eq:L_MMSE_Rician}) and \cite[Eq. (67)]{lozano05_jnl} into (\ref{eq:DeltaDefn}).
\end{proof}

As shown in Fig.\ \ref{fig:MMSE_HighSNR_Rician}, the function $h_2(K)$ increases with $K$, confirming that not only does line-of-sight decrease the high SNR achievable sum rate of MMSE receivers through an increased power offset, but the loss is more significant compared with optimal MIMO receivers.

\subsubsection{Low SNR Analysis}

For Rician channels in the low SNR regime, we have the following key result:
\begin{proposition} \label{pr:Rician_LowSNR}
For uncorrelated Rician faded channels, the wideband slope for
MIMO MMSE receivers is given by
\begin{align} \label{eq:S0RicClosedForm}
S_0^{\rm mmse} = \frac{ 2 N_r N_t (K + 1)^2 }{ K^2 ( 2  N_t - 1)N_r + (2K+1) ( 2 N_t + N_r - 1 ) } \; .
\end{align}
\end{proposition}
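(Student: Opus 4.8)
The plan is to particularize the general low-SNR characterization of Theorem~\ref{th:MMSESumCapacity_LowSNR} to the Rician model (\ref{eq:RicianModel}). Since the minimum energy per bit is already pinned at $\ln 2/N_r$ by that theorem, the whole task reduces to evaluating the two matrix dispersions appearing in the wideband-slope formula (\ref{eq:WideSlope_MMSE}), namely $\zeta(\mathbf{H}\mathbf{H}^\dagger)$ and $\zeta(\mathbf{H}_k\mathbf{H}_k^\dagger)$ for $k=1,\ldots,N_t$, and then simplifying. These dispersions can either be derived directly, as sketched below, or lifted from \cite{lozano03}.

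First I would compute $\zeta(\mathbf{H}\mathbf{H}^\dagger)$ from the definition (\ref{eq:dispersionDefn}). Decompose $\mathbf{H} = \sqrt{K/(K+1)}\,\bar{\mathbf{H}} + \sqrt{1/(K+1)}\,\mathbf{H}_w$, where $\bar{\mathbf{H}} = \mathbf{a}(\theta_r)\mathbf{a}^T(\theta_t)$ is deterministic and rank one and $\mathbf{H}_w$ has i.i.d.\ $\mathcal{CN}(0,1)$ entries; the standard unit-modulus array-response normalization (exactly what makes (\ref{eq:RicianModel}) satisfy (\ref{eq:ChannelNorm})) gives $\|\mathbf{a}(\theta_r)\|^2 = N_r$, $\|\mathbf{a}(\theta_t)\|^2 = N_t$, hence ${\rm tr}(\bar{\mathbf{H}}\bar{\mathbf{H}}^\dagger) = N_r N_t$ and $E[{\rm tr}(\mathbf{H}\mathbf{H}^\dagger)] = N_r N_t$. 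For the numerator of the dispersion I need $E[{\rm tr}((\mathbf{H}\mathbf{H}^\dagger)^2)]$: expanding $\mathbf{H}\mathbf{H}^\dagger$ into its deterministic part, the part linear in $\mathbf{H}_w$, and the part quadratic in $\mathbf{H}_w$, every term of odd total order in $\mathbf{H}_w$ vanishes by circular symmetry, and so do the ``holomorphic'' quadratic terms of the form ${\rm tr}(\bar{\mathbf{H}}\mathbf{H}_w^\dagger\bar{\mathbf{H}}\mathbf{H}_w^\dagger)$ and its conjugate, since each reduces to $E[s^2]$ for a scalar $s$ linear in the entries of $\mathbf{H}_w$. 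What remains is ${\rm tr}((\bar{\mathbf{H}}\bar{\mathbf{H}}^\dagger)^2) = ({\rm tr}\,\bar{\mathbf{H}}\bar{\mathbf{H}}^\dagger)^2$ (rank one), two surviving mixed second-order terms each equal to $N_r^2 N_t$, the cross term $2E[{\rm tr}(\bar{\mathbf{H}}\bar{\mathbf{H}}^\dagger\,\mathbf{H}_w\mathbf{H}_w^\dagger)] = 2N_r N_t^2$, and the pure Wishart fourth moment $E[{\rm tr}((\mathbf{H}_w\mathbf{H}_w^\dagger)^2)] = N_r N_t(N_r+N_t)$. Weighting these by $K^2/(K+1)^2$, $K/(K+1)^2$ and $1/(K+1)^2$ as appropriate and dividing by $N_r N_t^2 = E^2[{\rm tr}(\mathbf{H}\mathbf{H}^\dagger)]/N_r$ gives
\begin{align}
\zeta(\mathbf{H}\mathbf{H}^\dagger) = \frac{K^2 N_r N_t + (2K+1)(N_r + N_t)}{(K+1)^2 N_t} \; ,
\end{align}
which at $K=0$ reduces to $(N_r+N_t)/N_t$ as in (\ref{eq:IIDRay_Dispersion}).

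Next, the matrix $\mathbf{H}_k$ (column $k$ deleted) is again an uncorrelated rank-one-mean Rician matrix, now with $N_t-1$ transmit antennas and specular transmit vector $\mathbf{a}_k(\theta_t)$; because the deleted entry has unit modulus, $\|\mathbf{a}_k(\theta_t)\|^2 = N_t-1$, so the previous computation applies verbatim with $N_t \mapsto N_t-1$ and, in particular, is independent of $k$:
\begin{align}
\zeta(\mathbf{H}_k\mathbf{H}_k^\dagger) = \frac{K^2 N_r(N_t-1) + (2K+1)(N_r + N_t - 1)}{(K+1)^2(N_t-1)} \; .
\end{align}
Substituting both expressions into (\ref{eq:WideSlope_MMSE}) (equivalently into the form of Corollary~\ref{corr:MMSESumCapacity_LowSNR}, since the $k$-dependence drops out), the denominator becomes $\frac{1}{(K+1)^2}\big[K^2 N_r N_t + (2K+1)(N_r+N_t)\big] - \frac{N_t-1}{N_t(K+1)^2}\big[K^2 N_r(N_t-1) + (2K+1)(N_r+N_t-1)\big]$; collecting the $K^2 N_r$ contributions produces the factor $(2N_t-1)/N_t$ and collecting the $(2K+1)$ contributions produces $(2N_t+N_r-1)/N_t$, so the denominator equals $\big[K^2 N_r(2N_t-1) + (2K+1)(2N_t+N_r-1)\big]/\big[(K+1)^2 N_t\big]$. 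Inverting and multiplying by $2N_r$ yields (\ref{eq:S0RicClosedForm}).

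The only genuinely non-routine step is the fourth-order trace moment $E[{\rm tr}((\mathbf{H}\mathbf{H}^\dagger)^2)]$: one must correctly identify which of the many terms in the expansion survive (odd-order and ``non-conjugate-paired'' terms vanish by the circular symmetry of $\mathbf{H}_w$) and invoke the Wishart identity $E[{\rm tr}((\mathbf{H}_w\mathbf{H}_w^\dagger)^2)] = N_r N_t(N_r+N_t)$. Everything afterwards is bookkeeping, and two checks guard against algebra slips — setting $K=0$ must recover (\ref{eq:WidebandSlope_MMSE}), and $\zeta(\mathbf{H}_k\mathbf{H}_k^\dagger)$ must be independent of $k$. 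If one instead quotes the two Rician dispersions from \cite{lozano03}, the proof collapses to the final substitution and simplification.
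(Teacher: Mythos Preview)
Your proposal is correct and takes essentially the same approach as the paper: both invoke Theorem~\ref{th:MMSESumCapacity_LowSNR}, evaluate the two Rician dispersions $\zeta(\mathbf{H}\mathbf{H}^\dagger)$ and $\zeta(\mathbf{H}_k\mathbf{H}_k^\dagger)$ (which the paper simply cites from \cite{lozano03}, while you additionally sketch a direct fourth-moment computation), and substitute into (\ref{eq:WideSlope_MMSE}). Your dispersion expressions match the paper's exactly after clearing the $N_t$ (resp.\ $N_t-1$) from numerator and denominator, and the final simplification is identical.
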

\begin{proof}
For the uncorrelated Rician fading model (\ref{eq:RicianModel}), we can infer the following from \cite{lozano03},
\begin{align}
& \zeta ( \mathbf{H} \mathbf{H}^\dagger  ) = \frac{ N_r K^2 + \frac{(N_r + N_t)(2 K + 1)}{N_t} }{(K+1)^2}    \nonumber \\
& \zeta \left( \mathbf{H}_i \mathbf{H}_i^\dagger \right) = \frac{ N_r K^2 + \frac{(N_r + N_t-1)(2 K + 1)}{N_t-1} }{(K+1)^2}  \; .
\label{eq:RicTr}
\end{align}
Substituting (\ref{eq:RicTr}) into (\ref{eq:WideSlope_MMSE}) leads to the result.
%
\end{proof}
%

As also observed previously for the high SNR regime, we see that the impact of line-of-sight in the low SNR regime is only through the relative strength of the specular component (ie.\ through the Rician $K$-factor).  Moreover, the wideband slope (\ref{eq:S0RicClosedForm}) is a monotonically decreasing function of $K$, implying that line-of-sight has a damaging effect on the achievable sum rate of MIMO MMSE receivers in the low SNR regime. Comparing this result with the corresponding wideband slope for MIMO with optimal receivers, given for Rician fading by \cite{lozano03}
\begin{align} \label{eq:S0RicClosedFormopt}
S_0^{\rm opt} = \frac{ 2 (K + 1)^2 }{ K^2 + (2K+1) \frac{ N_t + N_r  }{N_r N_t} } \;
\end{align}
we obtain the interesting relationship
\begin{align}
\frac{S_0^{\rm mmse}}{ S_0^{\rm opt} } = \frac{ \varphi ( K, N_r, N_t ) }{ \varphi ( K, N_r, 2 N_t - 1 ) }
\end{align}
with
\begin{align}
\varphi ( K, m, n ) = K^2 m n + (2K+1) ( m + n ) \; .
\end{align}
This ratio, again, is a decreasing function of $K$, satisfying
\begin{align}
\frac{ N_t }{ 2 N_t - 1 } \;  \leq \;  \frac{S_0^{\rm mmse}}{ S_0^{\rm opt} } \; \leq \; \frac{ N_t + N_r}{ 2 N_t + N_r - 1}
\end{align}
where the lower bound is approached as $K \to \infty$, corresponding to the purely deterministic channel scenario, and the upper bound is approached as $K \to 0$, corresponding to Rayleigh fading.


\section{Concluding Remarks} \label{sec:conclusions}

We presented a new analytic framework for investigating the achievable sum rate of MIMO systems employing MMSE receivers, revealing a simple but powerful connection with the ergodic MIMO mutual information achieved with optimal receivers. This framework allowed us to directly exploit existing MIMO results in the literature, thereby circumventing the major challenges entailed with explicitly characterizing the SINR distribution at the MMSE output. To demonstrate the utility of the framework, we presented particularizations for uncorrelated and correlated Rayleigh fading, and uncorrelated Rician fading channels, yielding new exact closed-form expressions for the MMSE achievable sum rate as well as simplified expressions for the high and low SNR regimes.  Through these expressions, we obtained key analytical insights into the effect of the various system and channel parameters under practical fading conditions.  For example, we demonstrated that at both high and low SNR, the MMSE achievable sum rate is reduced by either spatial correlation or line-of-sight. At high SNR, this rate reduction is manifested as an increased power offset, whereas at low SNR, through a reduced wideband slope.  Moreover, at both high and low SNRs, the rate loss due to spatial correlation or line-of-sight was shown to be more significant for MMSE receivers than for optimal receivers.  We also demonstrated that the effect of line-of-sight on the MMSE achievable sum rate was dependent on the relative strength of the specular component, but not the geometry of such component.

We would like to stress that although the main focus of the paper was on single-user MIMO systems with MMSE receivers, many of the results apply almost verbatim to multi-user scenarios; in particular, the analysis of multiple access channels with MMSE receivers, and MIMO broadcast channels with either MMSE-based transmit precoding (see, eg.\ \cite{Jindal_07_TrIT}) or MMSE reception \cite{louie08_Globe}.  Moreover, the proposed framework extends to many other scenarios beyond those explicitly studied in this paper.  These include, for example, single-user MIMO systems operating in the presence of interference \cite{Chiani_06_Int,Blum_02_02,Blum_03_01} and amplify-and-forward relaying systems \cite{jin07_IT}.

\begin{appendices}

\section{Proof of Theorem \ref{th:MMSESumCapacity_Main}}  \label{ap:MMSESumCapacity_Main}

We start by substituting (\ref{eq:SINRk}) into (\ref{eq:MMSEMI}),
and using\footnote{Note that this property has also been used in relation to linear MIMO receivers in \cite{gore02}.} \cite{Horn90}
\begin{align} \label{eq:DetIdent}
\left[ \mathbf{Z}^{-1} \right]_{i,i} = \frac{ \det \mathbf{Z}}{ \det \mathbf{Z}^{ii} }
\end{align}
where $\mathbf{Z}^{ii}$ is the $(i,i)$th minor of the matrix
$\mathbf{Z}$, to yield
\begin{align} \label{eq:MMSE_Capacity_General}
I^{\rm mmse} ({\rm snr}) &= \sum_{i=1}^{N_t} E_{\mathbf{H}} \left[ \log_2 \left(
\frac{\det \left( \mathbf{I}_{N_t} + \frac{{\rm snr}}{N_t}
\mathbf{H}^\dagger \mathbf{H} \right) }{ \det \left( \big(
\mathbf{I}_{N_t} + \frac{{\rm snr}}{N_t} \mathbf{H}^\dagger
\mathbf{H} \big)^{ii} \right)  }  \right) \right] \nonumber \\
&= N_t E_{\mathbf{H}} \left[ \log_2  \det \left( \mathbf{I}_{N_t} +
\frac{{\rm snr}}{N_t} \mathbf{H}^\dagger \mathbf{H} \right) \right] - \sum_{i=1}^{N_t} E_{\mathbf{H}} \left[ \log_2 \det \left( \mathbf{I}_{N_t - 1} + \frac{{\rm snr}}{N_t} \left(
\mathbf{H}^\dagger \mathbf{H} \right)^{ii} \right)   \right] \; .
\end{align}
Noting that $\left(\mathbf{H}^\dagger \mathbf{H} \right)^{ii} = \mathbf{H}_i^\dagger \mathbf{H}_i$, the result follows from (\ref{eq:CoptDefn_Exp}).

\section{Proof of Theorem \ref{th:MMSESumCapacity_LowSNR}}  \label{ap:MMSESumCapacity_LowSNR}

From (\ref{eq:MMSESumCapacity_Main}), and noting that
\begin{align}
&\frac{ {\rm d}}{ {\rm d} x} \ln \det \left( \mathbf{I} + x \mathbf{A} \right) \biggr|_{x = 0} = {\rm tr} \left( \mathbf{A} \right) \; , \nonumber \\ &\frac{ {\rm d}^2}{ {\rm d}^2 x} \ln \det \left( \mathbf{I} + x \mathbf{A} \right) \biggr|_{x = 0} = - {\rm tr} \left( \mathbf{A}^2 \right)
\end{align}
we can obtain
\begin{align} \label{eq:CdotExpr}
\dot{I}^{\rm mmse}(0) = \log_2e \biggl( E_{\mathbf{H}} \left[ {\rm tr}\left(\mathbf{H} \mathbf{H}^\dagger \right) \right]  - \frac{1}{N_t} \sum_{i = 1}^{N_t} E_{\mathbf{H}_i} \left[ {\rm tr}\left(\mathbf{H}_i \mathbf{H}_i^\dagger \right) \right] \biggr)
\end{align}
and
\begin{align} \label{eq:CDDotExpr}
\ddot{I}^{\rm mmse}(0) & = - \frac{\log_2e}{N_t^2} \left( N_t E_{\mathbf{H}} \left[ {\rm tr}\left( [\mathbf{H} \mathbf{H}^\dagger]^2 \right) \right] - \sum_{i = 1}^{N_t} E_{\mathbf{H}_i} \left[ {\rm tr}\left( [ \mathbf{H}_i \mathbf{H}_i^\dagger ]^2 \right) \right] \right) \nonumber \\
&= - \log_2e \left( N_t N_r \zeta( \mathbf{H} \mathbf{H}^\dagger )  - N_r \left( \frac{N_t-1}{N_t} \right)^2  \sum_{i = 1}^{N_t} \zeta ( \mathbf{H}_i \mathbf{H}_i^\dagger )  \right) \; .
\end{align}
Due to the channel power normalization (\ref{eq:ChannelNorm}), we have
\begin{align}
E_{\mathbf{H}} \left[ {\rm tr}\left(\mathbf{H} \mathbf{H}^\dagger \right) \right] = N_r N_t , \; \;  E_{\mathbf{H}_i} \left[ {\rm tr}\left(\mathbf{H}_i \mathbf{H}_i^\dagger \right) \right] = N_r ( N_t - 1)
\end{align}
regardless of the specific channel statistics.  As such, (\ref{eq:CdotExpr}) evaluates to
\begin{align} \label{eq:C0dotMMSE}
\dot{I}^{\rm mmse}(0) = \frac{N_r}{\ln 2}
\end{align}
which, from (\ref{eq:EbNoDefn}), yields (\ref{eq:EbNoMMSE}). The result (\ref{eq:WideSlope_MMSE}) is obtained by substituting (\ref{eq:C0dotMMSE}) and (\ref{eq:CDDotExpr}) into (\ref{eq:S0Defn}) and simplifying.

\section{Proof of Lemma \ref{le:CorrRayleigh}}
\label{app:CorrRayleigh}

The MIMO mutual information with isotropic inputs is given by
\begin{align} \label{eq:Cap}
I^{\rm opt} (N_r, N_t, {\rm snr}) &= E_{\mathbf{H}} \left[ \log_2  \det \left( \mathbf{I}_{N_t} +
\frac{{\rm snr}}{N_t} \mathbf{H}^\dagger \mathbf{H} \right) \right] \nonumber \\
&= n \int_0^\infty \log_2 \left( 1 + \frac{{\rm snr}}{N_t} \lambda \right) f_\lambda (\lambda) {\rm d}\lambda
\end{align}
where $\lambda$ is an unordered non-zero eigenvalue of $\mathbf{H}^\dagger \mathbf{H}$, with probability density function (p.d.f.) $f_\lambda(\cdot)$.  Recently, the following unified expression (applying for arbitrary $N_r$ and $N_t$) was presented for this p.d.f.\ \cite{jin07_IT}
\begin{align} \label{eq:eigValPDF}
f_\lambda (\lambda) =
\frac{1}{{n\prod\nolimits_{\ell < k}^q {\left( {\beta _k  - \beta _\ell }
\right)} }}\sum\limits_{k = q - n + 1}^q \det {\bf{D}}_k ,
\end{align}
where ${{\bf{D}}_k }$ is a $q \times q$ matrix with entries
\begin{align}
\left\{ {{\bf{D}}_k } \right\}_{s,t}  = \left\{
{\begin{array}{*{20}c}
   {\beta _s^{t - 1} }, & {t \ne k,}  \\
   {\frac{{\lambda ^{p  - q
+ k - 1} }}{{\Gamma \left( {p  - q + k} \right)}}} {e^{ - \lambda /\beta _s } \beta _s^{q - p  - 1} }, & {t = k.}  \\
\end{array}} \right.
\;
\end{align}
The result follows by substituting (\ref{eq:eigValPDF}) into (\ref{eq:Cap}) and integrating using an identity from \cite{alfano04_2}.

\end{appendices}


\begin{thebibliography}{10}
\providecommand{\url}[1]{#1}
\def\UrlFont{\rmfamily}
\providecommand{\newblock}{\relax}
\providecommand{\bibinfo}[2]{#2}
\providecommand\BIBentrySTDinterwordspacing{\spaceskip=0pt\relax}
\providecommand\BIBentryALTinterwordstretchfactor{4}
\providecommand\BIBentryALTinterwordspacing{\spaceskip=\fontdimen2\font plus
\BIBentryALTinterwordstretchfactor\fontdimen3\font minus
  \fontdimen4\font\relax}
\providecommand\BIBforeignlanguage[2]{{%
\expandafter\ifx\csname l@#1\endcsname\relax
\typeout{** WARNING: IEEEtran.bst: No hyphenation pattern has been}%
\typeout{** loaded for the language `#1'. Using the pattern for}%
\typeout{** the default language instead.}%
\else
\language=\csname l@#1\endcsname
\fi
#2}}

\bibitem{teletar99}
{\.{I} E. Telatar}, ``Capacity of multi-antenna {G}aussian channels,''
  \emph{Europ. Trans. Commun.}, vol.~10, no.~6, pp. 585--595, Nov-Dec. 1999.

\bibitem{foschini98}
G.~J. Foschini and M.~J. Gans, ``On limits of wireless communications in a
  fading environment when using multiple antennas,'' \emph{Wireless Pers.
  Commun.}, vol. 6, pp. 311--335, Mar. 1998.

\bibitem{shiu00}
D.~Shiu, G.~J. Foschini, M.~J. Gans, and J.~M. Kahn, ``Fading correlation and
  its effect on the capacity of multielement antenna systems,'' \emph{{IEEE}
  Trans. Commun.}, vol.~48, no.~3, pp. 502--513, Mar. 2000.

\bibitem{chuah02}
C.~Chuah, D.~N.~C. Tse, J.~M. Kahn, and R.~A. Valenzuela, ``Capacity scaling in
  {MIMO} wireless systems under correlated fading,'' \emph{{IEEE} Trans.
  Inform. Theory}, vol.~48, no.~3, pp. 637--650, Mar. 2002.

\bibitem{shin03}
H.~Shin and J.~H. Lee, ``Capacity of multiple-antenna fading channels: Spatial
  fading correlation, double scattering, and keyhole,'' \emph{{IEEE} Trans.
  Inform. Theory}, vol.~49, no.~10, pp. 2636--2647, Oct. 2003.

\bibitem{oyman03}
O.~Oyman, R.~U. Nabar, H.~{B\"{o}lcskei}, and A.~J. Paulraj, ``Characterizing
  the statistical properties of mutual information in {MIMO} channels,''
  \emph{IEEE Trans. Signal Proc.}, vol.~51, no.~11, pp. 2784--2795, Nov. 2003.

\bibitem{lozano03}
A.~Lozano, A.~M. Tulino, and S.~{Verd\'{u}}, ``Multiple-antenna capacity in the
  low-power regime,'' \emph{{IEEE} Trans. Inform. Theory}, vol.~49, no.~10, pp.
  2527--2544, Oct. 2003.

\bibitem{smith03}
P.~J. Smith, S.~Roy, and M.~Shafi, ``Capacity of {MIMO} systems with
  semicorrelated flat fading,'' \emph{{IEEE} Trans. Inform. Theory}, vol.~49,
  no.~10, pp. 2781--2788, Oct. 2003.

\bibitem{chiani03}
M.~Chiani, M.~Z. Win, and A.~Zanella, ``On the capacity of spatially correlated
  {MIMO} {R}ayleigh-fading channels,'' \emph{{IEEE} Trans. Inform. Theory},
  vol.~49, no.~10, pp. 2363--2371, Oct. 2003.

\bibitem{alfano04_2}
G.~Alfano, A.~Lozano, A.~M. Tulino, and S.~{Verd\'{u}}, ``Mutual information
  and eigenvalue distribution of {MIMO} {R}icean channels,'' in \emph{Proc.
  Int. Symp. on Info. Theory and Appl. (ISITA)}, Parma, Italy, Oct. 2004.

\bibitem{kiessling04}
M.~Kiessling and J.~Speidel, ``Mutual information of {MIMO} channels in
  correlated {R}ayleigh fading environments - a general solution,'' in
  \emph{Proc. IEEE Int. Conf. on Commun. (ICC)}, Paris, France, Jun. 2004, pp.
  814--818.

\bibitem{cui05}
X.~W. Cui, Q.~T. Zhang, and Z.~M. Feng, ``Generic procedure for tightly
  bounding the capacity of {MIMO} correlated {R}ician fading channels,''
  \emph{{IEEE} Trans. Commun.}, vol.~53, no.~5, pp. 890--898, May 2005.

\bibitem{tulino05_IT}
A.~M. Tulino, A.~Lozano, and S.~{Verd\'{u}}, ``Impact of antenna correlation on
  the capacity of multiantenna channels,'' \emph{{IEEE} Trans. Inform. Theory},
  vol.~51, no.~7, pp. 2491--2509, Jul. 2005.

\bibitem{jayaweera05}
S.~K. Jayaweera and H.~V. Poor, ``On the capacity of multiple-antenna systems
  in {R}ician fading,'' \emph{{IEEE} Trans. Wireless Commun.}, vol.~4, no.~3,
  pp. 1102--1111, May 2005.

\bibitem{mckay_jnl05}
M.~R. McKay and I.~B. Collings, ``General capacity bounds for spatially
  correlated {R}ician {MIMO} channels,'' \emph{{IEEE} Trans. Inform. Theory},
  vol.~51, no.~9, pp. 3121--3145, Sept. 2005.

\bibitem{mck06_let}
------, ``Improved general lower bound for spatially-correlated {R}ician {MIMO}
  capacity,'' \emph{{IEEE} Commun. Lett.}, pp. 162--164, Mar. 2006.

\bibitem{moustakas06}
A.~L. Moustakas and S.~H. Simon, ``On the outage capacity of correlated
  multiple-path {MIMO} channels,'' \emph{{IEEE} Trans. Inform. Theory},
  vol.~53, no.~11, pp. 3887--3903, Nov. 2007.

\bibitem{mckay_07_TrIT}
M.~R. McKay, P.~J. Smith, H.~A. Suraweera, and I.~B. Collings, ``On the mutual
  information distribution of {OFDM}-based spatial multiplexing: Exact variance
  and outage approximation,'' \emph{{IEEE} Trans. Inform. Theory}, vol.~54,
  no.~7, pp. 3260--3278, Jul. 2008.

\bibitem{Blum_02_02}
R.~Blum, J.~Winters, and N.~R. Sollenberger, ``On the capacity of cellular
  systems with {MIMO},'' \emph{{IEEE} Commun. Lett.}, vol.~6, no.~6, pp.
  242--244, Jun. 2002.

\bibitem{Blum_03_01}
R.~Blum, ``{MIMO} capacity with interference,'' \emph{{IEEE} J. Select. Areas
  Commun.}, vol.~21, no.~5, pp. 793--801, Jun. 2003.

\bibitem{Dai_04_01}
H.~Dai, A.~Molisch, and H.~Poor, ``Downlink capacity of interference-limited
  {MIMO} systems with joint detection,'' \emph{{IEEE} Trans. Wireless Commun.},
  vol.~3, no.~2, pp. 442--453, Mar. 2004.

\bibitem{Chiani_06_Int}
M.~Chiani, M.~Z. Win, and H.~Shin, ``Capacity of {MIMO} systems in the presence
  of interference,'' in \emph{Proc. IEEE Global Telecomm.\ Conf.}, San
  Francisco, CA, Nov. 2006, pp. 1--6.

\bibitem{Jindal_07_TrIT}
J.~Lee and N.~Jindal, ``High {SNR} analysis for {MIMO} broadcast channels:
  {Dirty} paper coding versus linear precoding,'' \emph{{IEEE} Trans. Inform.
  Theory}, vol.~53, no.~12, pp. 4787--4792, Dec. 2007.

\bibitem{Tse99_IT}
D.~N.~C. Tse and S.~V. Hanly, ``Linear multiuser receivers: Effective
  interference, effective bandwidth and user capacity,'' \emph{{IEEE} Trans.
  Inform. Theory}, vol.~45, no.~2, pp. 641--657, Mar. 1999.

\bibitem{shamai01}
{S. Shamai and S. Verd\'{u}}, ``The impact of frequency-flat fading on the
  spectral efficiency of {CDMA},'' \emph{{IEEE} Trans. Inform. Theory},
  vol.~47, no.~4, pp. 1302--1327, May 2001.

\bibitem{Zhang01_IT}
J.~Zhang, E.~K.~P. Chong, and D.~N.~C. Tse, ``Output {MAI} distribution of
  linear {MMSE} multiuser receivers in {DS-CDMA} systems,'' \emph{{IEEE} Trans.
  Inform. Theory}, vol.~47, no.~3, pp. 1128--1144, Mar. 2001.

\bibitem{Verdu02}
S.~{Verd\'{u}}, ``Spectral efficiency in the wideband regime,'' \emph{IEEE
  Trans. Inform. Theory}, vol.~48, no.~6, pp. 1319--1343, Jun. 2002.

\bibitem{Guo02_IT}
D.~Guo, {S. Verd\'{u}}, and L.~K. Rasmussen, ``Asymptotic normality of linear
  multiuser receiver outputs,'' \emph{{IEEE} Trans. Inform. Theory}, vol.~48,
  no.~12, pp. 3080--3095, Dec. 2002.

\bibitem{peacock_06_TrIT}
M.~J.~M. Peacock, I.~B. Collings, and M.~L. Honig, ``Unified large system
  analysis of {MMSE} and adaptive least squares receivers for a class of random
  matrix channels,'' \emph{{IEEE} Trans. Inform. Theory}, vol.~52, no.~8, pp.
  3567--3600, Aug. 2006.

\bibitem{Moustakas_09_VTC}
A.~L. Moustakas, K.~R. Kumar, and G.~Caire, ``Performance of {MMSE MIMO}
  receivers: A large {N} analysis for correlated channels,'' in \emph{Proc.
  IEEE Veh.\ Technol.\ Conf. (VTC)}, Barcelona, Spain, Apr. 2009, to appear.
  Available at {http://arxiv.org/abs/0902.3593}.

\bibitem{Kumar_08_TrIT}
K.~R. Kumar, G.~Caire, and A.~L. Moustakas, ``Asymptotic performance of linear
  receivers in {MIMO} fading channels,'' \emph{{IEEE} Trans. Inform. Theory},
  2008, submitted. Avaliable at {http://arxiv.org/abs/0810.0883}.

\bibitem{Louie08}
R.~H.~Y. Louie, M.~R. McKay, and I.~B. Collings, ``Maximum sum-rate of {MIMO}
  multiuser scheduling with linear receivers,'' \emph{{IEEE} Trans. Commun.},
  2008, submitted. Available at {http://arxiv.org/abs/0812.3232v2}.

\bibitem{Guess_05_TrIT}
T.~Guess and M.~K. Varanasi, ``An information-theoretic framework for deriving
  canonical decision-feedback receivers in {G}aussian channels,'' \emph{{IEEE}
  Trans. Inform. Theory}, vol.~51, no.~1, pp. 173--187, Jan. 2005.

\bibitem{McDon95}
R.~N. McDonough and A.~D. Whalen, \emph{Detection of Signals in Noise}.\hskip
  1em plus 0.5em minus 0.4em\relax San Diego, CA: Academic, 1995.

\bibitem{Verdu98}
S.~Verd\'{u}, \emph{Multiuser Detection}.\hskip 1em plus 0.5em minus
  0.4em\relax Cambridge, U.K.: Cambridge Univ.\ Press, 1998.

\bibitem{gao98}
H.~Gao, P.~J. Smith, and M.~V. Clark, ``Theoretical reliability of {MMSE}
  linear diversity combining in {R}ayleigh-fading additive interference
  channels,'' \emph{{IEEE} Trans. Commun.}, vol.~46, no.~5, pp. 666--672, May
  2003.

\bibitem{grant02}
A.~Grant, ``Rayleigh fading multi-antenna channels,'' \emph{EURASIP J. Appl.
  Signal Processing (Special Issue on Space-Time Coding (Part I))}, pp.
  316--329, Mar. 2002.

\bibitem{lozano05_jnl}
A.~Lozano, A.~M. Tulino, and S.~{Verd\'{u}}, ``High-{SNR} power offset in
  multiantenna communication,'' \emph{{IEEE} Trans. Inform. Theory}, vol.~51,
  no.~12, pp. 4134--4151, Dec. 2005.

\bibitem{kang04}
M.~Kang and M.-S. Alouini, ``Capacity of {MIMO} {R}ician channels,''
  \emph{{IEEE} Trans. Wireless Commun.}, vol.~5, no.~1, pp. 112--122, Jan.
  2006.

\bibitem{tse05_book}
D.~Tse and P.~Viswanath, \emph{Fundamentals of Wireless Communications},
  1st~ed.\hskip 1em plus 0.5em minus 0.4em\relax New York: Cambridge Univ.
  Press, 2005.

\bibitem{Ozcelik03}
H.~{\"{O}zcelik}, M.~Herdin, W.~Weichselberger, J.~Wallace, and E.~Bonek,
  ``Deficiencies in the `{K}ronecker' {MIMO} radio channel model,'' \emph{IEE
  Electron. Lett.}, vol.~39, no.~16, pp. 1209--1210, Aug. 2003.

\bibitem{Martin00}
C.~C. Martin, J.~H. Winters, and N.~R. Sollenberger, ``Multiple-input
  multiple-output {(MIMO)} radio channel measurements,'' in \emph{Proc. IEEE
  Veh. Technol. Conf.}, Boston, MA, Sept. 2000, pp. 774--779.

\bibitem{Kermoal02}
J.~P. Kermoal, L.~Schumacher, K.~I. Pedersen, P.~E. Mogensen, and
  F.~Frederiksen, ``A stochastic {MIMO} radio channel model with experimental
  validation,'' \emph{{IEEE} J. Select. Areas Commun.}, vol.~20, no.~6, pp.
  1211--1226, Aug. 2002.

\bibitem{simon_04_submit}
S.~H. Simon, A.~L. Moustakas, and L.~Marinelli, ``Capacity and character
  expansions: Moment generating function and other exact results for {MIMO}
  correlated channels,'' \emph{{IEEE} Trans. Inform. Theory}, vol.~52, no.~12,
  pp. 5336--5351, Dec. 2006.

\bibitem{alfano04}
G.~Alfano, A.~M. Tulino, A.~Lozano, and S.~{Verd\'{u}}, ``Capacity of {MIMO}
  channels with one-sided correlation,'' in \emph{Proc. IEEE Int. Symp. on Spr.
  Spec. and Appl. (ISSSTA)}, Sydney, Australia, Aug.-Sept. 2004, pp. 515--519.

\bibitem{jin07_IT}
S.~Jin, M.~R. McKay, C.~Zhong, and K.-K. Wong, ``Ergodic capacity analysis of
  amplify and forward {MIMO} dual-hop systems,'' \emph{{IEEE} Trans. Inform.
  Theory}, 2008, submitted. Available at {http://arxiv.org/abs/0811.4565}.

\bibitem{bolcskei03}
H.~{B\"{o}lcskei}, M.~Borgmann, and A.~J. Paulraj, ``Impact of the propagation
  environment on the performance of space-frequency coded {MIMO-OFDM},''
  \emph{{IEEE} J. Select. Areas Commun.}, vol.~21, no.~3, pp. 427--439, Apr.
  2003.

\bibitem{louie08_Globe}
R.~H.~Y. Louie, M.~R. McKay, and I.~B. Collings, ``Sum capacity of
  opportunistic scheduling for multiuser {MIMO} systems with linear
  receivers,'' in \emph{Proc. IEEE Global Telecom.\ Conf.}, New Orleans, LA,
  Dec. 2008, pp. 1--5.

\bibitem{gore02}
D.~Gore, R.~Heath, and A.~Paulraj, ``Transmit selection in spatial multiplexing
  systems,'' \emph{{IEEE} Commun. Lett.}, vol.~6, no.~11, pp. 491--493, Nov.
  2002.

\bibitem{Horn90}
R.~A. Horn and C.~R. Johnson, \emph{Matrix Analysis}, 4th~ed.\hskip 1em plus
  0.5em minus 0.4em\relax New York: University of Cambridge Press, 1990.

\end{thebibliography}

\end{document}